\newtheorem{theorem}{Theorem}
\newtheorem{lemma}[theorem]{Lemma}
\newtheorem{proposition}[theorem]{Proposition}
\newenvironment{proof}[1][Proof]{\noindent\textbf{#1.} }{\ \rule{0.5em}{0.5em}}
\begin{document}

\title{\textbf{Spontaneous} \textbf{Resonances and the Coherent States of the Queuing
Networks}}
\author{Alexander Rybko$^{1}$, Senya Shlosman$^{1,2}$, Alexander Vladimirov$^{1}$\\$^{1}$ Inst. of the Information Transmission Problems,\\Russian Academy of Sciences,\\Moscow, Russia\\$^{2}$ Centre de Physique Theorique, UMR 6207,\\CNRS, Luminy, Marseille, France}
\maketitle

\begin{abstract}
We present an example of a highly connected closed network of servers, where
the time correlations do not vanish in the infinite volume limit. The limiting
interacting particle system behaves in a periodic manner. This phenomenon is
similar to the continuous symmetry breaking at low temperatures in statistical
mechanics, with the average load playing the role of the inverse temperature.

\textbf{Keywords: }coupled dynamical systems, non-linear Markov processes,
stable attractor, phase transition, long-range order.

MSC-class: 82C20 (Primary), 60J25 (Secondary)

\end{abstract}

\section{Introduction}

\subsection{Interacting particle systems with long range memory}

The theory of phase transitions, among many results, substantiates  the
possibility of constructing reliable systems from non-reliable elements. As an
example, consider the infinite volume stochastic Ising model at low
temperature $T$ in dimension $\geq2,$ see \cite{L}. It is well known, that if
we start this system from the configuration of all pluses, then the evolution
under Glauber dynamics has the property that the fraction of plus spins at any
time exceeds $\frac{1+m^{\ast}\left(  T\right)  }{2},$ which is bigger than
$\frac{1}{2}$ for $T<T_{cr.}$ (Here $m^{\ast}\left(  T\right)  $ is the
spontaneous magnetization.) On the other hand, if we consider finite volume
Ising model (with empty boundary condition, say), then this property does not
hold, and the system, started from the all plus state, will be found in the
state with the majority of the spins to be minuses at some later (random)
times. Therefore, the infinite system can remember, to some extent, its
initial state, while the finite system can not.

There are many other examples of this kind, that belong to the theory of
interacting particle systems, such as voter model, contact model, etc. In all
these examples we see systems, that are capable of \textquotedblleft
remembering\textquotedblright\ their initial state for arbitrary long times.

In the present paper we are constructing a particle system that
\textquotedblleft remembers its initial phase\textquotedblright. The rough
analogy can be described as follows. Imagine a brownian particle
$\varphi\left(  t\right)  ,$ with a unit drift, which lives on a circle.
Suppose the initial phase $\varphi\left(  0\right)  =0.$ Then the mean phase
$\bar{\varphi}\left(  t\right)  =t\mathrm{\operatorname{mod}}\left(
2\pi\right)  ,$ but with time we know the phase $\varphi\left(  t\right)  $
less and less precisely, since its variance grows, and in the limit
$t\rightarrow\infty$ the distribution of $\varphi\left(  t\right)  $ tends to
the uniform one. However, one can combine infinitely many such particles by
introducing suitable interaction between them in such a way that the memory of
the initial phase does not vanish and persists in time. Namely, one has to put
such particles at the sites of $\mathbb{Z}^{3}$ and to introduce the
attractive interaction between them. If the initial state is chosen to be
coherent, then the phase of every particle will grow linearly, while its
variance will stay bounded.

This is roughly what we do in the present paper. We consider a network of
simple servers which are processing messages. Since the service time of every
message is random, in the course of time each single server loses the memory
of its initial state. So, in particular, the network of non-interacting
servers, started in the same state, becomes de-synchronized after a finite
time. However, if one introduces certain natural interconnection between
servers, then it can happen that they are staying synchronized after an
arbitrary long time, thus breaking some generally believed properties of large
networks. We have to add here that such a phenomenon is possible only if the
mean number of particles per server is high enough; otherwise the infinite
network becomes de-synchronized, no matter which kind of interaction between
servers takes place. So the parameter of the mean number of particles per
server, called hereafter \textit{the load}, plays the same role as the
temperature in the statistical mechanics.

In other words, the transition we describe happens due to the fact that at low
load the behavior of our system is governed by the fixed point of the
underlying dynamical system, while at high load the dominant role is played by
its periodic attractor. A similar phenomenon was described by Hepp and Lieb in
\cite{HL}.

Below we present the simplest example of the above behavior. But we believe
that the phenomenon we describe is fairly general. Its origin lies in the fact
that any large network of the general type possesses some kind of the
continuous symmetry in the infinite limit, and it is  breaking of that
symmetry at high load that causes the long-range order behavior of the
network. In our case this is the rotation symmetry, corresponding to the
periodic orbit of the limiting dynamical system.

\subsection{Information networks and their collective behavior}

Now we will describe one pattern of behavior of certain large networks, which
was assumed to be universal. It is known under the name of Poisson Hypothesis.

The Poisson Hypothesis is a device to predict the behavior of large queuing
networks. It was formulated first by L. Kleinrock in \cite{K}, and concerns
the following situation. Suppose we have a large network of servers, through
which many customers are traveling, being served at different nodes of the
network. If the node is busy, the customers wait in the queue. Customers are
entering into the network from the outside via some nodes, and these external
flows of customers are Poissonian, with constant rates. The service time at
each node is random, depending on the node, and the customer. The PH
prediction about the (long-time, large-size) behavior of the network is the following:

\begin{itemize}
\item consider the total flow $\mathcal{F}$ of customers to a given node
$\mathcal{N}.$ Then $\mathcal{F}$ is approximately equal to a Poisson flow,
$\mathcal{P},$ with a time dependent rate function $\lambda_{\mathcal{N}%
}\left(  T\right)  .$

\item The exit flow from $\mathcal{N}$ -- \textbf{not} Poissonian in general!
-- has a rate function $\gamma_{\mathcal{N}}\left(  T\right)  ,$ which is
smoother than $\lambda_{\mathcal{N}}\left(  T\right)  $ (due to averaging,
taking place at the node $\mathcal{N}$).

\item As a result, the flows $\lambda_{\mathcal{N}}\left(  T\right)  $ at
various nodes $\mathcal{N}$ should go to a constant limits $\bar{\lambda
}_{\mathcal{N}}\approx\frac{1}{T}\int_{0}^{T}\lambda\left(  t\right)  dt$, as
$T\rightarrow\infty,$ the flows to different nodes being almost independent.

\item The above convergence is uniform in the size of the network.
\end{itemize}

Note that the distributions of the service times at the nodes of the network
can be arbitrary, so PH deals with quite a general situation. The range of
validity of PH is supposed to be the class of networks where the internal flow
to every node $\mathcal{N}$ is a union of flows from many other nodes, and
each one of these flows constitutes only a small fraction of the total flow to
$\mathcal{N}.$ If true, PH provides one with means to make easy computations
of quantities of importance in network design.

The rationale behind this conjectured behavior is natural: since the inflow is
a sum of many small inputs, it is approximately Poissonian. And due to the
randomness of the service time the outflow from each node should be
\textquotedblleft smoother\textquotedblright\ than the total inflow to this
node. (This statement was proven in \cite{RShV} under quite general
conditions.) In particular, the variation of the latter should be smaller than
that of the former, and so all the flows should converge to corresponding
constant values.

In the paper \cite{RSh} the Poisson Hypothesis is proven for simple networks
in the infinite volume limit, under some natural conditions. For systems with
constant service times it was proven earlier in \cite{St1}.

The purpose of the present paper is to construct a network that satisfies the
above assumption --  that the flow to every given node is an \textquotedblleft
infinite\textquotedblright\ sum of \textquotedblleft infinitesimally
small\textquotedblright\ flows from other nodes -- but has coherent states.
That means that the states of the servers are evolving in a synchronous
manner, and the \textquotedblleft phase\textquotedblright\ of a given server
behaves (in the thermodynamic limit -- i.e. in the limit of infinite network)
as a periodic non-random function, the same for different servers.

We have to stress that our network exhibits these coherent states only in the
regime when the average number $N$ of the customers per server -- called in
what follows \textit{the load -- }is large. For low load we expect the
convergence to the unique stationary state. This \textquotedblleft high
temperature\textquotedblright\ kind of behavior will be the subject of the
forthcoming work.

Our network $\nabla_{\infty}$ is constructed from infinitely many elementary
\textquotedblleft triangular\textquotedblright\ networks $\nabla$ (described
below, in Section \ref{nabla}). A single triangle network $\nabla=\nabla_{1}$
with $N$ customers is just a Markov continuous time ergodic jump process with
finitely many states. As $N$ becomes large, this Markov process tends (in the
appropriate \textquotedblleft Euler\textquotedblright\ limit) to a
(5-dimensional) dynamical system $\Delta$, possessing a periodic trajectory
$\mathcal{C},$ which turns out to be a stable (local) attractor. The
coordinate $\varphi$ parameterizing that attractor $\mathcal{C}$ is the
\textquotedblleft phase\textquotedblright\ alluded to in the previous
subsection. The combined network corresponds in the same sense to the coupled
family $\Delta_{\infty}$ of dynamical systems $\Delta.$ We establish the
synchronization property of that coupled family $\Delta_{\infty},$ and that
allows us to construct coherent states of the network $\nabla_{\infty}.$

The networks $\nabla_{M},$ composed of $M$\textbf{ }triangle networks
$\nabla,$ are\textbf{ }ergodic. Their evolution is given by irreducible finite
state Markov processes with continuous time. Let $\pi_{M}$ be the invariant
measure of the process $\nabla_{M}$. As $M\rightarrow\infty$, the sequence of
Markov processes $\nabla_{M}$ converges weakly on finite time intervals to a
certain limiting (non-linear Markov) process $\nabla_{\infty}$.\textbf{ }By
the theorem of Khasminsky -- see Theorem 1.2.14 in \cite{L} -- any
accumulation point of the sequence $\pi_{M}$ is a stationary measure of
$\nabla_{\infty}$. The special measure $\chi_{\infty}$, describing
\textquotedblleft the Poisson Hypothesis behavior\textquotedblright, is also a
stationary measure of $\nabla_{\infty}$. If $\chi_{\infty}$ is a global
attractor of $\nabla_{\infty}$, then, of course, the Poisson Hypothesis holds.
The proof of the Poisson Hypothesis in \cite{RSh} was based on this argument.
The existence of an accumulation point of the sequence $\pi_{M}$ that differs
from $\chi_{\infty}$ would be the strongest counterexample to the Poisson
Hypothesis. This problem will be addressed in forthcoming papers. Here we
prove a weaker statement that $\chi_{\infty}$ is not a global attractor for
$\nabla_{\infty}$.

\textbf{ }In \cite{RSt} Rybko and Stolyar observed that the condition that the
workload at every node of a multiclass open queueing network is less than $1$
is not sufficient for the network to be ergodic. In connection with this, they
introduced a new approach to the analysis of ergodicity of networks, which
reduces the problem to the question of stability of the associated fluid
models. It was shown by them that the two-node priority network, considered in
\cite{RSt}, is ergodic if and only if for every initial state of the
corresponding fluid model the total amount of fluid vanishes eventually. This
approach was further developed by Dai \cite{D}, Stolyar \cite{St2}, and
Puhalsky and Rybko \cite{PR}, who proved that stability of the fluid model is
necessary and sufficient for ergodicity of a certain class of general
networks. Interesting instances of non-ergodic queueing networks with mean
load being smaller than the capacity, where considered by Bramson \cite{B1,
B2}. Our construction will be based on the following open network introduced
by Rybko and Stolyar (RS-network) in \cite{RSt}.

This queuing network with four types of customers is represented by the
following 4-dimensional Markov process. Customers arrive to the network
according to Poisson inflows of constant rate $\lambda.$ The network consists
of two nodes -- $\bar{A}$ and $\bar{B}$. All the service times are
exponential, hence the network is defined by the rates, the evolution of types
of the customers and the priorities. The customer of type $A$ (respectively,
$B$) arrives to the node $\bar{A}$ (respectively, $\bar{B}$). The customer $A$
is served with the rate $\gamma_{A},$ then is sent to $\bar{B},$ with type
$AB.$ There he is served with the rate $\gamma_{AB}$ and leaves the network.
Symmetrically, $\gamma_{B}=\gamma_{A},$ and $\gamma_{BA}=\gamma_{AB}.$ Each
customer $AB$ is served before all the customers $B,$ while each customer $BA$
is served before all the customers $A$. The nominal workload at nodes $\bar
{A}$ and $\bar{B}$ equals $\rho=\lambda(\gamma_{A}^{-1}+\gamma_{BA}^{-1})$.
The service rates satisfy the conditions $\gamma_{AB}<2\lambda$ and $\rho<1$.
It is proved in \cite{RSt} that for certain values of the parameters the
resulting Markov process is transient. The fluid limit (or the Euler limit) of
this network evolves in the following non-trivial manner: each node is empty
during a positive fraction of time, but at other moments it is non-empty, and,
moreover, the total amount of the fluid in the network tends linearly to infinity.

The rest of the paper is organized as follows. In Section 2 we define our
networks $\nabla_{M}^{N}.$ Here $M$ is the size of the network and $N$ is the
load per node. We formulate the preliminary version of our Main Result. In
Section 3 we study the limiting network, $\nabla_{\infty}^{N},$ and prove the
convergence $\nabla_{M}^{N}\rightarrow\nabla_{\infty}^{N}.$ In Section 4 we
introduce the fluid networks, $\Delta_{M},$ which are coupled dynamical
systems, and their limit, $\Delta_{\infty},$ which turns out to be a
non-linear dynamical system, in the sense made precise in this Section. In
particular, we show that $\Delta_{\infty}$ is not ergodic. In the next Section
5 we prove the convergence of the Non-Linear Markov Process $\nabla_{\infty
}^{N}$ to its Euler fluid limit, $\Delta_{\infty},$ as $N\rightarrow\infty.$
The last Section 6 contains the formulation and the proof of our main result,
Theorem \ref{MR}.

To save on notation, we consider throughout this paper the simplest elementary
symmetric model, depending on 3 parameters. We stress the fact that this
(discrete) symmetry is not essential in our case, and our results are valid
for any small 6D-perturbation of our model.

\section{Mean-field network and its limit}

\subsection{Basic network \label{nabla}}

We will consider the following 5-dimensional Markov process, $\nabla^{N}$. It
describes a closed queuing network with $N$ customers. It consists of three
nodes: $\bar{O},$ $\bar{A}$ and $\bar{B},$ through which the customers go. All
the service times are exponential, so we only need to specify the rates, the
evolution of types of the customers and the priorities. To simplify the
presentation we will make a specific choice of these rates. The node $\bar{O}$
serves all the customers on the FIFO basis, with the rate $\gamma_{O}=3.$
After being served, the customer goes to the node $\bar{A}$ or to $\bar{B},$
choosing one of them with probability $\frac{1}{2}.$ If he arrives to $\bar
{A},$ he gets the type $A,$ otherwise $B.$ The customer $A$ is served with the
rate $\gamma_{A}=10,$ then is sent to $\bar{B},$ with type $AB.$ There he is
served with the rate $\gamma_{AB}=2$ and goes back to $\bar{O}.$
Symmetrically, $\gamma_{B}=10,$ and $\gamma_{BA}=2.$ Each customer $AB$ is
served before all the customers $B,$ and each customer $BA$ is served before
all the customers $A$. More precisely, if an $AB$ customer arrives to the
$\bar{B}$ node, while the node is serving some $B$ customer, his service is
stopped and is resumed only at the moment when the service of all $AB$
customers is over.%

\begin{center}
\fbox{\includegraphics[
height=2.5036in,
width=1.7582in
]%
{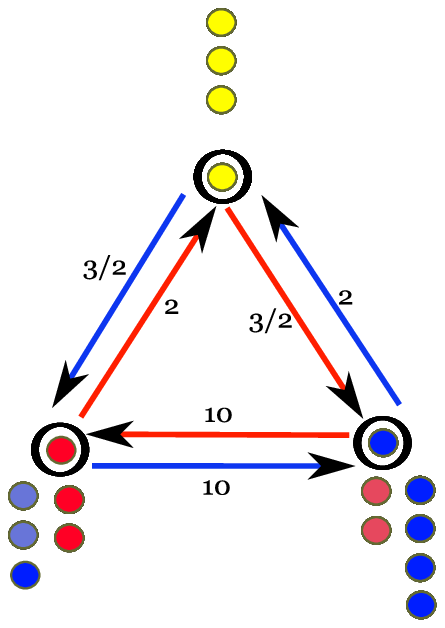}%
}\\
The elementary network.
\end{center}

Of course, the above choice of the rates is not the only possible. Any other
choice would be as good, provided the corresponding fluid network, which can
be associated to our queuing network, has some specific property -- namely, we
need this fluid network to have a \textbf{cyclic regime}. The fluid network
will be described in details in the Section \ref{S21} below.

\subsection{$M$ coupled processes}

Let $\nabla_{M}^{N}\ $be the Markov process, obtained from $M$ copies of
$\nabla^{N},$ interconnected in the mean-field manner. We take the total
number of customers to be $NM.$

The mean field network is defined as follows.\textbf{ } Each node $\bar{O}%
_{i},$ $i=1,...,M$, is connected to \textit{all }of the nodes $\bar{A}_{j},$
$\bar{B}_{j},$ $j=1,...,M$, and each customer, leaving the node $\bar{O}_{i},$
goes to each of the $2M$ nodes $\bar{A}_{j},$ $\bar{B}_{j}$ with the same
probability $\frac{1}{2M}.$ The rate of leaving the node $\bar{O}_{i}$ is the
same, as above, i.e. equals to $\gamma_{O}=3.$ In a similar way, the $A$
customers of every node $\bar{A}_{i}$ are exiting it with the rate
$\gamma_{AB}=10,$ and then choose one of the $\bar{B}_{j}$ nodes with
probability $\frac{1}{M},$ and so on. The priorities are kept the same: if the
node $\bar{A}_{i},$ say, is in the state with customers of both kinds -- $A$
and $BA$ -- present, then the $BA$ customers are served first, with no delay.%

\begin{center}
\fbox{\includegraphics[
height=2.3644in,
width=1.7435in
]%
{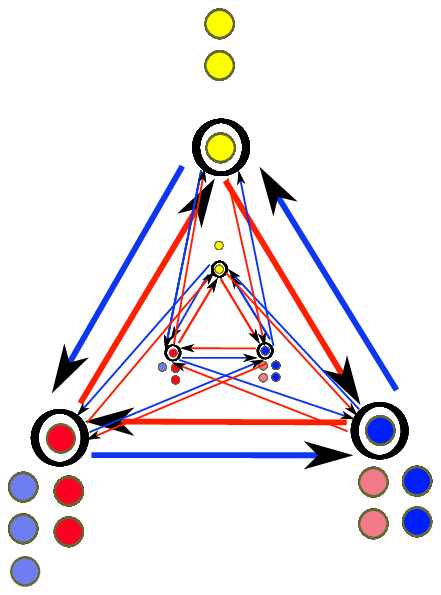}%
}\\
Two coupled processes.
\end{center}

The configuration of the process is given by the number of customers of each
type at each of the $3M$ nodes, that is by an integer point in $\left(
\mathbb{R}^{5M}\right)  ^{+}.$ Due to the mean-field symmetry, we can factor
the set of configurations by the product of permutation groups $S_{M}\times
S_{M}\times S_{M},$ and still have the Markov process. The orbit of the
symmetry group corresponds to a collection of $M^{3}$ integer points $\bar
{x}_{i}\in\left(  \mathbb{R}^{5}\right)  ^{+},$ some of which may coincide.

It is convenient for us to index these configurations by the atomic measures,%
\begin{equation}
\left\{  \bar{x}_{i}\right\}  \leadsto\frac{1}{M^{3}}\sum_{i=1}^{M^{3}}%
\delta_{\frac{\bar{x}_{i}}{N}}.\label{016}%
\end{equation}
In fact, they belong to the set $\mathcal{M}\left(  \left(  \frac{1}%
{N}\mathbb{Z}^{5}\right)  ^{+}\right)  .$ Note that every such measure $\mu$
factor into a product
\[
\mu\equiv\left(  \mu_{O},\mu_{\bar{A}},\mu_{\bar{B}}\right)  \equiv\mu
_{O}\times\mu_{\bar{A}}\times\mu_{\bar{B}}\equiv\Pi_{\bar{O}}\left[
\mu\right]  \times\Pi_{\bar{A}}\left[  \mu\right]  \times\Pi_{\bar{B}}\left[
\mu\right]
\]
of probability measures on $\mathbb{R}^{1}=\left\{  x_{O}\right\}  ,$ resp.
$\mathbb{R}^{2}=\left\{  x_{A},x_{BA}\right\}  $ and $\mathbb{R}^{2}=\left\{
x_{B},x_{AB}\right\}  .$ Here we denote by $\Pi_{\ast}$-s the various
projections (or marginals). We have $\mu_{O}=\frac{1}{M}\sum_{i=1}^{M}%
\delta_{\frac{\bar{x}_{i}}{N}}$ for some (not necessarily distinct) $\bar
{x}_{i}\in\mathbb{Z}^{1},$ $i=1,...,M,$ likewise $\mu_{\bar{A}}=\frac{1}%
{M}\sum_{i=1}^{M}\delta_{\frac{\bar{x}_{i}^{\prime}}{N}},$ $\mu_{\bar{B}%
}=\frac{1}{M}\sum_{i=1}^{M}\delta_{\frac{\bar{x}_{i}^{\prime\prime}}{N}},$
$\bar{x}_{i}^{\prime},\bar{x}_{i}^{\prime\prime}\in\mathbb{Z}^{2}.$ We will
denote the set of all such measures by $\mathcal{M}_{M}.$ The state $\nu
_{M}^{N}$ of our Markov process is then an element from $\mathcal{M}\left(
\mathcal{M}_{M}\right)  ,$ i.e. a measure on the measure space. Among these
there are configurations of the process $\nabla_{M},$ namely, the $\delta
$-measures $\delta_{m},$ with $m\in\mathcal{M}_{M},$ so we can define in this
way the embedding $\mathcal{M}_{M}\subset\mathcal{M}\left(  \mathcal{M}%
_{M}\right)  ,$ However, even if the initial state $\nu_{M}^{N}\left(
0\right)  $ of $\nabla_{M}$ happen to be such a measure $\delta_{m},$ i.e.
$\nu_{M}^{N}\left(  0\right)  \in\mathcal{M}_{M},$ then at any positive $t$ we
have only that $\nu_{M}^{N}\left(  t\right)  \in\mathcal{M}\left(
\mathcal{M}_{M}\right)  ,$ while in general $\nu_{M}^{N}\left(  t\right)
\notin\mathcal{M}_{M}.$

For the future use we will write down the rates of the factor-process. Let $v$
be some measure of the form $\left(  \ref{016}\right)  ,$ while $v^{\prime}$
be the measure obtained from $v$ after a single jump of the initial process.
For example, let us consider the case when the jump in question is of
$AB\rightarrow O$ type, from $\bar{B}$-type server to $\bar{O}$ server (with
the rate $\gamma_{AB}$). That means that for some unique well-defined (by the
pair $v,v^{\prime}$) elements $x_{\bar{B}}=\left(  x_{B},x_{AB}\right)
\in\left(  \frac{1}{N}\mathbb{Z}^{2}\right)  ^{+},$ $x_{O}\in$ $\left(
\frac{1}{N}\mathbb{Z}^{1}\right)  ^{+}$ we have:%
\begin{equation}
v^{\prime}\left(  x_{\bar{B}}\right)  =v\left(  x_{\bar{B}}\right)  -\frac
{1}{M},\ v^{\prime}\left(  x_{O}\right)  =v\left(  x_{O}\right)  -\frac{1}%
{M}.\label{061}%
\end{equation}
Of course, for another pair: $\tilde{x}_{\bar{B}}=\left(  x_{B},x_{AB}%
-1\right)  ,$ $\tilde{x}_{O}=x_{O}+1$, we have
\begin{equation}
v^{\prime}\left(  \tilde{x}_{\bar{B}}\right)  =v\left(  \tilde{x}_{\bar{B}%
}\right)  +\frac{1}{M},\ v^{\prime}\left(  \tilde{x}_{O}\right)  =v\left(
\tilde{x}_{O}\right)  +\frac{1}{M},\label{062}%
\end{equation}
while at all other locations the two measures are the same. Since there are
$Mv\left(  x_{\bar{B}}\right)  $ locations where the jump could originate, and
the fraction of sites with the desirable outcome is $v\left(  x_{O}\right)  ,$
we have for the rate $c\left(  v,v^{\prime}\right)  \equiv c_{AB}\left(
v,v^{\prime}\right)  $ the expression%
\[
c\left(  v,v^{\prime}\right)  =\gamma_{AB}Mv\left(  x_{\bar{B}}\right)
v\left(  x_{O}\right)  .
\]
If the measures $v,v^{\prime}$ are not related by $\left(  \ref{061}\right)
-\left(  \ref{062}\right)  ,$ then the rate $c_{AB}\left(  v,v^{\prime
}\right)  =0.$

We will keep the notation $\nabla_{M}$ for the factor-process.

\subsection{$M\rightarrow\infty$ limit: Non-linear Markov Process}

Suppose that a sequence of initial states $\nu_{M}\left(  0\right)
\in\mathcal{M}\left(  \mathcal{M}_{M}\right)  $ of the Markov processes
$\nabla_{M}$ is given, which satisfy $\nu_{M}\left(  0\right)  =\delta_{m_{M}%
},$ with $m_{M}\in\mathcal{M}_{M},$ and moreover the weak limit $\nu
=\lim_{M\rightarrow\infty}m_{M}$ exists. Then the weak limits $\nu\left(
t\right)  =\lim_{M\rightarrow\infty}\nu_{M}\left(  t\right)  $ exist for every
$t,$ and, moreover, for every $t$ we have $\nu\left(  t\right)  \in
\mathcal{M}.$ This is the Non-Linear Markov Process, NLMP, $\nabla_{\infty}.$
The process is called Non-Linear since the transition mechanism to evolve from
a given configuration depends not only on that configuration, but also on the
measure from which this configuration was drawn. Such processes were
introduced in \cite{M1, M2}, see also \cite{RSh}. The above limiting NLMP-s
depend on the parameter $N,$ which is the number of clients per basic queuing
network. We want to study the dependence on $N,$ so we explicitly
(re)introduce the index $N$ in our notation. Thus, $\nu^{N}\left(  t\right)  $
refers to the states of the process $\nabla_{\infty}^{N}.$

We will describe the limiting NLMP in the next Section \ref{TK}. Now we can
formulate the preliminary version of our main result.

\begin{theorem}
Consider the Non-Linear Markov Process $\nabla_{\infty},$ started from the
measure $\nu_{0}^{N},$ which is close enough to the atomic measure with the
single atom at vector\textbf{ }$\bar{X}\left(  A,N\right)  \in\left(  \frac
{1}{N}\mathbb{Z}^{5}\right)  ^{+},$\textbf{ }having coordinate\textbf{ }%
$x_{A}=1$ and all other coordinates zero. \textquotedblleft Close
enough\textquotedblright\ here means that for some $\varepsilon>0$ small
enough we have $\rho_{KROV}\left(  \nu_{0}^{N},\delta_{\bar{X}\left(
A,N\right)  }\right)  <\varepsilon.$ Suppose additionally that the $\alpha
$-exponential moment of the measure $\nu_{0}^{N}$ is less than a certain
quantity $E;$ $\alpha=\alpha\left(  \varepsilon\right)  ,$ $E=E\left(
\varepsilon\right)  .$ Then the measure $\nu_{t}^{N}$ does not converge to any
limit as $t\rightarrow\infty,$ provided $N$ is large enough.

More precisely, there exists a sequence of times $t_{k}^{\prime}%
\rightarrow\infty,$ such that
\[
\nu_{t_{k}^{\prime}}^{N}\left[  U_{N}\left(  \bar{X}\left(  A,N\right)
\right)  \right]  >1-\delta_{N},
\]
with $\delta_{N}\rightarrow0$ as $N\rightarrow\infty.$ Here $U_{N}\left(
\bar{X}\left(  A,N\right)  \right)  $ is a neighborhood of $\bar{X}\left(
A,N\right)  $ of radius $\varkappa_{N},$ with $\varkappa_{N}\rightarrow0$ as
$N\rightarrow\infty.$ At the same time, there exists another sequence
$t_{k}^{\prime\prime}\rightarrow\infty,$ for which $\nu_{t_{k}^{\prime\prime}%
}^{N}\left[  U_{N}\left(  \bar{X}\left(  B,N\right)  \right)  \right]
>1-\delta_{N}.$ In words, the measure $v_{t}$ exhibits oscillations.

Accordingly, the states of finite size networks, $\left(  \nu_{M}^{N}\right)
_{t},$ exhibit oscillations for long times, before going to their limits. The
duration of the oscillation regime diverges with $M.$ Different components of
$\left(  \nu_{M}^{N}\right)  _{t}$ are oscillating almost coherently, for
large $M.$
\end{theorem}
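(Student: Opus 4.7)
The plan is to deduce the theorem from two ingredients that the paper sets up in advance: the existence of a locally stable periodic attractor $\mathcal{C}$ of the fluid limit $\Delta_\infty$ (Section 4) and the convergence $\nabla_\infty^N \to \Delta_\infty$ as $N \to \infty$, uniformly on finite time intervals (Section 5). The built-in $A\leftrightarrow B$ symmetry of the elementary network forces $\mathcal{C}$ to visit the reflection of any of its points, so in one period $T$ the orbit passes through neighbourhoods of both extremal configurations $\bar{X}(A)$ and $\bar{X}(B)$, at times $s_A<s_B$ say. This already identifies the candidate subsequences $t_k'=s_A+kT$ and $t_k''=s_B+kT$.

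First I would unpack the hypothesis. The assumption $\rho_{KROV}(\nu_0^N,\delta_{\bar{X}(A,N)})<\varepsilon$ with controlled $\alpha$-exponential moment places $\nu_0^N$ inside a compact piece of the basin of attraction of $\mathcal{C}$ for $\Delta_\infty$, provided $\varepsilon$ is small enough; the exponential-moment bound is needed to ensure tightness and to prevent the tail mass of the measures from escaping along the trajectory. On a single period, Section 5 delivers
$$\sup_{0\le t\le T}\rho_{KROV}\bigl(\nu_t^N,\mathcal{C}(t)\bigr)\le \eta(N,\varepsilon),\qquad \eta(N,\varepsilon)\to 0\text{ as }N\to\infty.$$
Hence $\nu_T^N$ is close to $\mathcal{C}(T)=\mathcal{C}(0)$, so the estimate can be reapplied on the next period.

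The main obstacle is exactly the gap between this one-period control and the $t\to\infty$ statement the theorem asks for. To close it I would use the local stability of $\mathcal{C}$: in Floquet-type coordinates transverse to the orbit, small deviations contract by a factor $\kappa<1$ per period. A standard bootstrap then shows that if the one-period tracking error is $\eta$, the $n$-period tracking error is bounded by $\eta/(1-\kappa)$, uniformly in $n$. This yields the global tracking bound $\sup_{t\ge 0}\rho_{KROV}(\nu_t^N,\mathcal{C}(t))\le \varkappa_N\to 0$, and in particular gives the asserted $\delta_N\to 0$ at the times $t_k',t_k''$. A subsidiary technical point, which I also expect to require care, is propagating the $\alpha$-exponential moment bound past each period so that the Section~5 convergence remains applicable indefinitely; here one uses that the total number of customers is conserved by the dynamics and that the local attractor $\mathcal{C}$ is bounded.

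The statement for finite $M$ is obtained from the weak convergence $\nabla_M^N \to \nabla_\infty^N$ on bounded intervals, established in Section 3. For any fixed horizon $T^\ast$, the distance between the two processes on $[0,T^\ast]$ tends to $0$ as $M\to\infty$; choosing $T^\ast=T^\ast(M)\to\infty$ slowly enough that the convergence is still effective transports the oscillation pattern of $\nabla_\infty^N$ to $(\nu_M^N)_t$ on a window whose length diverges with $M$. The almost-coherence of different components is built into the mean-field architecture: in the limit all $M$ subnetworks are driven by the same empirical distribution and therefore share a common phase along $\mathcal{C}$, so on the window where $\nabla_M^N$ still tracks $\nabla_\infty^N$ the components must be coherent up to a vanishing error.
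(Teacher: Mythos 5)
Your overall skeleton matches the paper's: a locally attracting cycle $\mathcal{C}$ for the fluid system, Euler-scaling convergence $\nabla_\infty^N\to\Delta_\infty$ on finite intervals uniformly over the admissible class of initial measures, a period-by-period induction in which the fluid contraction beats the per-period tracking error, and transfer to finite $M$ via weak convergence on a slowly growing window. The genuine gap is in the step you call subsidiary: propagating the $\alpha$-exponential moment. Your justification -- conservation of the total number of customers plus boundedness of $\mathcal{C}$ -- does not work. Conservation holds only for the mean $\int|\bar{x}|\,d\mu_t=1$; it says nothing about the tail of the per-node queue distribution, and for the stochastic process at finite load $N$ fluctuations can push mass of $\nu_t^N$ to arbitrarily large queue values over an infinite time horizon. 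Without a bound on this tail that is uniform in $N$ and in time, the induction cannot be closed: the uniform finite-interval convergence of Section 5 (Theorem \ref{T11}) is only available over the class of measures satisfying the moment bound, so the bound must be \emph{reproduced} at the end of each period, not merely assumed. The paper devotes Lemma \ref{R2} precisely to this: using the underload property of the cyclic inflow rates (inequalities (\ref{0013})--(\ref{0015})) it shows that the exponential moment of the workload for the Poisson-driven node not only stays bounded uniformly in $N$ but strictly improves (from $3A$ to $2A$) after a fixed time $\mathcal{T}$, and a companion estimate (Lemma \ref{l3}, used in Proposition \ref{P1}) shows the Lyapunov functional $L$ decreases under the fluid dynamics per cycle. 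These two facts, not conservation of mass, are what make the bootstrap run indefinitely.

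A second, smaller inaccuracy: the contraction near $\mathcal{C}$ is not a Floquet-type statement about a smooth finite-dimensional flow. The fluid dynamics is piecewise-defined, non-unique at some points, and acts on measures (the non-linear dynamical system $\Delta_\infty$); what replaces your transverse contraction factor $\kappa<1$ is the finite-time absorption of Lemma \ref{LA} together with the synchronization/collapse-to-an-atom statement of Lemma \ref{Pr4} and its consequence Proposition \ref{P1}, which also must handle the mass outside a compact $\mathcal{K}$ -- again via the exponential moment. So the KROV contraction you invoke is itself entangled with the moment estimate you set aside; the two must be iterated jointly, exactly as in the paper's alternating scheme $i')$, $ii')$, $i'')$, $ii'')$ in the proof of Proposition \ref{P1} and in the induction of Theorem \ref{MR}.
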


\section{The convergence$\ \nabla_{M}^{N}$ $\rightarrow\nabla_{\infty}^{N}:$
application of the Trotter-Kurtz theorem. \label{TK}}

Here we prove the convergence of the Markov processes $\nabla_{M}^{N}$ to the
Non-Linear Markov process $\nabla_{\infty}^{N}.$ We will do that by writing
down their generators $A_{M}$ and $A,$ and by subsequent application of the
Trotter-Kurtz theorem (Prop. 1.3.3 in \cite{EK}), which we formulate now.

Let $A_{M},A:X\rightarrow X$ are (unbounded) operators on the Banach space
$X,$ and $X_{0}\subset X$ is a dense subspace, belonging to the domains of
definition of all $A_{M}$-s and $A.$ The following two conditions are
sufficient for the convergence of the semigroups $\exp\left\{  tA_{M}\right\}
\rightarrow\exp\left\{  tA\right\}  $ on $X$ as $M\rightarrow\infty:$

\begin{enumerate}
\item $\forall\psi\in X_{0}$ we have $A_{M}\left(  \psi\right)  \rightarrow
A\left(  \psi\right)  $ as $M\rightarrow\infty;$

\item there exists a dense subspace $X_{1}\subset X_{0},$ such that
$\forall\psi\in X_{1}$ we have $\exp\left\{  tA\right\}  \left(  \psi\right)
\in X_{0}.$ Such subspace $X_{0}$ is called a \textit{core} of $A.$
\end{enumerate}

\subsection{Equation for the evolution $\nabla_{\infty}^{N}.$}

Here we study the limiting process $\nabla_{\infty}^{N}.$ We write down its
generator, and we exhibit its core.

Let $\nu_{t}$ be the evolution of the measure under $\nabla_{\infty}^{N}.$ To
find it we have to specify the initial measure $\nu_{0}$ and then to solve the
Cauchy problem for the differential equation, which equation we will write now.

To do it we first introduce the (Poisson) rates

\noindent$\bar{\lambda}\left(  t\right)  =\left(  \lambda_{O}\left(  t\right)
,\lambda_{A}\left(  t\right)  ,\lambda_{B}\left(  t\right)  ,\lambda
_{AB}\left(  t\right)  ,\lambda_{BA}\left(  t\right)  \right)  ,$
corresponding to the state $\nu_{t}:$\textbf{ }%
\begin{equation}
\lambda_{a}\left(  t\right)  =\gamma_{a}\sum_{x:x_{a}>0}\nu_{t}\left(
x\right)  ,\text{ for }a=O,AB,BA, \label{012}%
\end{equation}%
\begin{equation}
\lambda_{A}\left(  t\right)  =\gamma_{A}\sum_{x:x_{A}>0,x_{BA}=0}\nu
_{t}\left(  x\right)  ,\ \ \lambda_{B}\left(  t\right)  =\gamma_{B}%
\sum_{x:x_{B}>0,x_{AB}=0}\nu_{t}\left(  x\right)  . \label{012a}%
\end{equation}
We also introduce the 5D vectors $\Delta_{a}$ to be the basis vectors of the
lattice $\mathbb{Z}^{5}.$ Then%
\begin{align}
\frac{d\nu_{t}\left(  x\right)  }{dt}  &  =-\nu_{t}\left(  x\right)  \left(
\sum_{a=O,A,B,AB,BA}\lambda_{a}\left(  t\right)  \right) \nonumber\\
&  -\nu_{t}\left(  x\right)  \left(  \sum_{a=O,AB,BA}\gamma_{a}\left(
1-\delta_{x_{a}}\right)  +\gamma_{A}\left(  1-\delta_{x_{A}}\right)
\delta_{x_{BA}}+\gamma_{B}\left(  1-\delta_{x_{B}}\right)  \delta_{x_{AB}%
}\right) \nonumber\\
&  +\nu_{t}\left(  x-\Delta_{O}\right)  \left(  1-\delta_{x_{O}}\right)
\left(  \lambda_{AB}\left(  t\right)  +\lambda_{BA}\left(  t\right)  \right)
\label{0111}\\
&  +\sum_{a=A,B}\nu_{t}\left(  x-\Delta_{a}\right)  \left(  1-\delta_{x_{a}%
}\right)  \frac{\lambda_{O}\left(  t\right)  }{2}\nonumber\\
&  +\nu_{t}\left(  x-\Delta_{AB}\right)  \left(  1-\delta_{x_{AB}}\right)
\lambda_{A}\left(  t\right)  +\nu_{t}\left(  x-\Delta_{BA}\right)  \left(
1-\delta_{x_{BA}}\right)  \lambda_{B}\left(  t\right) \nonumber\\
&  +\sum_{a=O,AB,BA}\nu_{t}\left(  x+\Delta_{a}\right)  \mathbf{\gamma}%
_{a}\nonumber\\
&  +\nu_{t}\left(  x+\Delta_{A}\right)  \mathbf{\gamma}_{A}\delta_{x_{BA}}%
+\nu_{t}\left(  x+\Delta_{B}\right)  \mathbf{\gamma}_{B}\delta_{x_{AB}%
}.\nonumber
\end{align}

This is the value of the function $A\varphi_{x},$ where $A$ is the generator
of the Markov semigroup $S_{t}=\exp\left\{  tA\right\}  $ of the process
$\nabla_{\infty}^{N},$ and the function $\varphi_{x}$, which on every measure
$\nu\in\mathcal{M}\left(  \mathbb{Z}^{5+}\right)  $ takes value $\nu\left(
x\right)  ,$ computed at the point $\nu_{t}.$ As we will show below, the
system $\left(  \ref{0111}\right)  $ has a unique solution.

For reasons which will be explained later, it will be more convenient for us
to use another basis in the space of functions on measures. Namely, for every
$\nu\in\mathcal{M}\left(  \mathbb{Z}^{5+}\right)  $ and every $x\in
\mathbb{Z}^{5+}$ we define the function $u\left(  x\right)  =\sum_{y\geq x}%
\nu\left(  y\right)  ,$ where the summation goes over all sites $y$ such that
all the coordinates of the difference $y-x$ are non-negative. Then the
functions $\lambda_{a}\left(  t\right)  $ (see $\left(  \ref{012}\right)  $)
are given in the new variables as
\[
\lambda_{a}\left(  t\right)  =\gamma_{a}u_{t}\left(  \Delta_{a}\right)  ,
\]
while the action of the generator $A$ on the function $u$ is given by the
following (simpler) equation:
\begin{align}
\frac{du_{t}\left(  x\right)  }{dt}  &  =-\sum_{a=O,AB,BA}\left(  u_{t}\left(
x\right)  -u_{t}\left(  x+\Delta_{a}\right)  \right)  \gamma_{a}\left(
1-\delta_{x_{a}}\right) \nonumber\\
&  -\left(  u_{t}\left(  x\right)  -u_{t}\left(  x+\Delta_{A}\right)
-u_{t}\left(  x+\Delta_{AB}\right)  +u_{t}\left(  x+\Delta_{A}+\Delta
_{AB}\right)  \right)  \gamma_{A}\left(  1-\delta_{x_{A}}\right)
\delta_{x_{BA}}\nonumber\\
&  -\left(  u_{t}\left(  x\right)  -u_{t}\left(  x+\Delta_{B}\right)
-u_{t}\left(  x+\Delta_{BA}\right)  +u_{t}\left(  x+\Delta_{B}+\Delta
_{BA}\right)  \right)  \gamma_{B}\left(  1-\delta_{x_{B}}\right)
\delta_{x_{AB}}\nonumber\\
&  +\left(  u_{t}\left(  x-\Delta_{O}\right)  -u_{t}\left(  x\right)  \right)
\left(  1-\delta_{x_{O}}\right)  \left(  \gamma_{AB}u_{t}\left(  \Delta
_{AB}\right)  +\gamma_{BA}u_{t}\left(  \Delta_{BA}\right)  \right)
\label{014}\\
&  +\sum_{a=A,B}\left(  u_{t}\left(  x-\Delta_{a}\right)  -u_{t}\left(
x\right)  \right)  \left(  1-\delta_{x_{a}}\right)  \frac{\gamma_{O}%
u_{t}\left(  \Delta_{O}\right)  }{2}\nonumber\\
&  +\left(  u_{t}\left(  x-\Delta_{AB}\right)  -u_{t}\left(  x\right)
\right)  \left(  1-\delta_{x_{AB}}\right)  \gamma_{A}u_{t}\left(  \Delta
_{A}\mathbf{+}\Delta_{BA}\right) \nonumber\\
&  +\left(  u_{t}\left(  x-\Delta_{BA}\right)  -u_{t}\left(  x\right)
\right)  \left(  1-\delta_{x_{BA}}\right)  \gamma_{B}u_{t}\left(  \Delta
_{B}\mathbf{+}\Delta_{AB}\right)  .\nonumber
\end{align}
The first three lines correspond to the second line of the equation $\left(
\ref{0111}\right)  ,$ while the last four -- to the lines 3--5; the remaining
lines of it disappear from the equations for $u.$ The advantage of $\left(
\ref{014}\right)  $ over $\left(  \ref{0111}\right)  $ is that the equation
for $\frac{du_{t}\left(  x\right)  }{dt}$ contains only $u_{t}\left(
y\right)  $-s with $y$-s in some finite set $Y\left(  x\right)  ,$ and
moreover $\max_{x}\left\vert Y\left(  x\right)  \right\vert =20.$

Of course, the coordinates $u\left(  \cdot\right)  $-s on $\mathcal{M}\left(
\mathbb{Z}^{5+}\right)  $ are not independent. There are two kinds of
relations between them:

\begin{enumerate}
\item every value $\nu\left(  x\right)  $ equals to $L_{x}\left(  u\right)  ,$
where $L$ is a certain linear form, depending on $u\left(  y\right)  $ with
$y$-s having form $y=x+\sum e_{a}\Delta_{a},$ $e_{a}=0,1;$ we need that for
all $x$ $\ L_{x}\left(  u\right)  \geq0;$

\item
\begin{equation}
\lim_{x\rightarrow\infty}u\left(  x\right)  =0. \label{015}%
\end{equation}

\end{enumerate}

For technical reasons we will extend the action of our Markov semigroup to the
space $\mathcal{M}\left(  \mathbb{K}\right)  $ of measures on the
compactification $\mathbb{K}$ of the lattice $\mathbb{Z}^{5+},$ where
\[
\mathbb{K}=\left\{  \mathbb{Z}^{+}+\infty\right\}  ^{5}.
\]
The functions $\left\{  u\left(  x\right)  ,x\in\mathbb{Z}^{5+}\right\}  $ on
$\mathcal{M}\left(  \mathbb{K}\right)  $ also play the role of coordinates
there, provided that the relation $\left(  \ref{015}\right)  $ is dropped. The
evolution of the measures is given by the same set of equations $\left(
\ref{014}\right)  .$

We supply $\mathcal{M}\left(  \mathbb{K}\right)  $ with the topology of weak
convergence. (We repeat for clarity that the subset $\mathcal{M}\left(
\mathbb{Z}^{5+}\right)  \subset\mathcal{M}\left(  \mathbb{K}\right)  $ is
invariant under our semigroup.) Let $\mathcal{C}^{0}=\mathcal{C}\left(
\mathcal{M}\left(  \mathbb{K}\right)  \right)  $ be the space of functions on
$\mathcal{M}\left(  \mathbb{K}\right)  ,$ continuous with respect to this topology.

\begin{theorem}
The semigroup $S_{t}$ acts on the space $\mathcal{C}=\mathcal{C}\left(
\mathcal{M}\left(  \mathbb{K}\right)  \right)  $ of continuous functions on
$\mathcal{M}\left(  \mathbb{K}\right)  ,$ and is strongly continuous and contracting.
\end{theorem}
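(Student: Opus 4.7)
The plan is to first construct the measure-valued flow $\mathcal{S}_{t}:\mathcal{M}(\mathbb{K})\to\mathcal{M}(\mathbb{K})$ solving the Cauchy problem for $\left(\ref{0111}\right)$ --- equivalently $\left(\ref{014}\right)$ --- and then to define $S_{t}$ on $\mathcal{C}(\mathcal{M}(\mathbb{K}))$ by pullback, $\left(S_{t}\phi\right)(\nu)=\phi(\mathcal{S}_{t}\nu)$. All three conclusions of the theorem --- invariance of $\mathcal{C}$, strong continuity, and $\left\Vert S_{t}\right\Vert \leq 1$ --- will then reduce to joint continuity of $(t,\nu)\mapsto\mathcal{S}_{t}\nu$ together with preservation of the probability simplex in $\mathcal{M}(\mathbb{K})$.

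First I would fix $\nu_{0}\in\mathcal{M}(\mathbb{K})$ and analyse the infinite system $\left(\ref{014}\right)$ in the tail coordinates $u_{t}(x)$, $x\in\mathbb{Z}^{5+}$, with initial data $u_{0}(x)=\nu_{0}\bigl(\{y\geq x\}\bigr)\in[0,1]$. The crucial structural feature, already highlighted by the authors, is that the right-hand side of each equation $du_{t}(x)/dt$ depends only on $u_{t}(y)$ for $y$ in a finite set $Y(x)$ of cardinality at most $20$, plus the five ``global'' rate coordinates $u_{t}(\Delta_{a})$. Since all $u_{t}(y)$ take values in $[0,1]$ and the service rates $\gamma_{a}$ are fixed constants, each right-hand side is a bilinear form in $u$ with coefficients bounded by a universal constant, so the vector field is uniformly Lipschitz in the $\ell^{\infty}$-norm on the closed unit ball of tail functions. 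A Picard iteration yields a unique global solution $u_{t}$, and translating it back by $\nu_{t}(x)=L_{x}(u_{t})$ defines $\mathcal{S}_{t}\nu_{0}$.

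Next I would verify that the admissibility constraints --- positivity $L_{x}(u_{t})\geq 0$ for all $x$ and total mass $u_{t}(0)=1$ --- are preserved. For this it is cleaner to read $\left(\ref{0111}\right)$ directly: the negative contributions to $d\nu_{t}(x)/dt$ all carry a factor of $\nu_{t}(x)$, so the boundary face $\{\nu(x)=0\}$ of the simplex is non-attracting from the inside and an elementary comparison argument keeps the trajectory in the simplex. Summing $\left(\ref{0111}\right)$ over $x$ gives $\frac{d}{dt}\sum_{x}\nu_{t}(x)=0$, so total mass is conserved. Joint weak continuity of $(t,\nu_{0})\mapsto\mathcal{S}_{t}\nu_{0}$ then follows from a coordinatewise Gr\"onwall estimate: weak convergence on the compact metrizable space $\mathcal{M}(\mathbb{K})$ is metrized by the countable separating family $\{u(x)\}_{x\in\mathbb{Z}^{5+}}$, and the finite-range plus globally-Lipschitz structure of $\left(\ref{014}\right)$ propagates coordinatewise convergence at time $0$ to any finite time.

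Once $\mathcal{S}_{t}$ is a jointly continuous semigroup of maps on the compact metrizable space $\mathcal{M}(\mathbb{K})$, the invariance of $\mathcal{C}$ and the contractivity $\left\Vert S_{t}\phi\right\Vert _{\infty}\leq\left\Vert \phi\right\Vert _{\infty}$ are immediate, since pullback by a continuous map never increases sup-norms and always preserves continuity. Strong continuity $\left\Vert S_{t}\phi-\phi\right\Vert _{\infty}\to 0$ as $t\downarrow 0$ then follows from uniform continuity of $\phi$ (by compactness of $\mathcal{M}(\mathbb{K})$) together with the uniform bound $\sup_{\nu}d(\mathcal{S}_{t}\nu,\nu)\to 0$, which itself comes from the $\ell^{\infty}$-Lipschitz bound on the vector field. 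The main obstacle I anticipate is the interplay between the nonlinearity (via the $u_{t}(\Delta_{a})$ terms that couple every equation) and the need for an $\ell^{\infty}$ estimate uniform in the initial condition: the Picard iteration must be performed on the \emph{invariant} set of admissible tail functions, so that the $[0,1]$ bound used to Lipschitz-control the bilinear nonlinearity is actually available throughout the iteration. Once this coupling between Steps 1 and 2 is absorbed into a single fixed-point argument, everything else is routine.
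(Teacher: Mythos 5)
Your argument is correct in outline, but it takes a genuinely different route from the paper. You solve the characteristic system $\left(\ref{014}\right)$ head-on: a Picard iteration in $\ell^{\infty}$ over the invariant set of admissible tail functions (each right-hand side being a bilinear expression in at most finitely many coordinates, all bounded by $1$), quasi-positivity of $\left(\ref{0111}\right)$ plus conservation of $u_{t}(0)$ for invariance of the simplex, and then contraction and strong continuity of the pullback semigroup from continuity of the flow. The paper instead decouples the closed network into an ``open'' system driven by prescribed inflow rates and characterizes the closed dynamics as a fixed point of the map $\Psi_{u_{0}}:\bar{\Lambda}\leadsto\bar{D}^{\bar{\Lambda}}$ on the compact convex set $\mathfrak{C}$ of monotone $5$-dimensional Lipschitz flow functions: existence by a Schauder-type fixed-point argument, uniqueness by a probabilistic coupling (colorless/red/blue customers) showing $\Psi_{u_{0}}$ is a contraction on short time intervals, and continuity of $S_{t}$ from continuous dependence of the fixed point on $u_{0}$. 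The paper's reduction buys a fixed-point problem in a five-dimensional function space and a formulation (inflows versus outflows) that recurs verbatim for the fluid systems $\left(\ref{020}\right)$--$\left(\ref{022}\right)$ and $\left(\ref{031}\right)$; your route is more elementary and yields existence, uniqueness and continuous dependence in one Gr\"onwall stroke. Two places in your sketch carry the real weight and should be spelled out: (i) continuity of $\nu\mapsto\mathcal{S}_{t}\nu$ in the weak topology is not a plain $\ell^{\infty}$ Gr\"onwall bound, since weak convergence gives only coordinatewise, not uniform, convergence of the initial tail functions; you need the finite-propagation refinement (iterate the Duhamel inequality $k$ times, so that $u_{t}(x)$ depends on initial coordinates outside a finite dependency set only through an error of order $(Ct)^{k}/k!$, using $u\in[0,1]$); (ii) invariance must be verified for the full family of positivity constraints characterizing tail functions of measures on the compactification $\mathbb{K}$, not only the finite-atom masses $L_{x}(u)\geq0$, and the Picard iterates must be kept inside the region where the quadratic field is Lipschitz, as you note. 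Both are routine but are exactly where your Steps 2--3 do their work.
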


\begin{proof}
\textbf{1. }Let us show the existence of the solutions to $\left(
\ref{014}\right)  .$ The equations $\left(  \ref{014}\right)  $ are describing
the evolution of the \textquotedblleft closed\textquotedblright\ system.
Consider now the corresponding \textquotedblleft open\textquotedblright%
\ system, defined by the (arbitrary) rates $\bar{\lambda}=\lambda_{a}\left(
t\right)  $ of the Poisson inflows and the initial state $u_{0}.$ It evolves
according to the equations
\begin{align*}
\frac{du_{t}\left(  x\right)  }{dt}  &  =-\sum_{a=O,AB,BA}\left(  u_{t}\left(
x\right)  -u_{t}\left(  x+\Delta_{a}\right)  \right)  \gamma_{a}\left(
1-\delta_{x_{a}}\right) \\
&  -\left(  u_{t}\left(  x\right)  -u_{t}\left(  x+\Delta_{A}\right)
-u_{t}\left(  x+\Delta_{AB}\right)  +u_{t}\left(  x+\Delta_{A}+\Delta
_{AB}\right)  \right)  \gamma_{A}\left(  1-\delta_{x_{A}}\right)
\delta_{x_{BA}}\\
&  -\left(  u_{t}\left(  x\right)  -u_{t}\left(  x+\Delta_{B}\right)
-u_{t}\left(  x+\Delta_{BA}\right)  +u_{t}\left(  x+\Delta_{B}+\Delta
_{BA}\right)  \right)  \gamma_{B}\left(  1-\delta_{x_{B}}\right)
\delta_{x_{AB}}\\
&  +\left(  u_{t}\left(  x-\Delta_{O}\right)  -u_{t}\left(  x\right)  \right)
\left(  1-\delta_{x_{O}}\right)  \left(  \lambda_{AB}\left(  t\right)
+\lambda_{BA}\left(  t\right)  \right) \\
&  +\sum_{a=A,B}\left(  u_{t}\left(  x-\Delta_{a}\right)  -u_{t}\left(
x\right)  \right)  \left(  1-\delta_{x_{a}}\right)  \frac{\lambda_{O}\left(
t\right)  }{2}\\
&  +\left(  u_{t}\left(  x-\Delta_{AB}\right)  -u_{t}\left(  x\right)
\right)  \left(  1-\delta_{x_{AB}}\right)  \lambda_{A}\left(  t\right) \\
&  +\left(  u_{t}\left(  x-\Delta_{BA}\right)  -u_{t}\left(  x\right)
\right)  \left(  1-\delta_{x_{BA}}\right)  \lambda_{B}\left(  t\right)  .
\end{align*}
\textbf{ }The corresponding exit rates $b_{a}$ are given by the natural
relations%
\[
b_{a}^{\bar{\lambda}}\left(  t\right)  =\gamma_{a}u_{t}\left(  \Delta
_{a}\right)  .
\]
Consider the function $\bar{d}^{\bar{\lambda}}\left(  t\right)  :$%
\[
d_{O}^{\bar{\lambda}}\left(  t\right)  =b_{AB}^{\bar{\lambda}}\left(
t\right)  +b_{BA}^{\bar{\lambda}}\left(  t\right)  ,
\]%
\[
d_{A}^{\bar{\lambda}}\left(  t\right)  =d_{B}^{\bar{\lambda}}\left(  t\right)
=\frac{1}{2}b_{O}^{\bar{\lambda}}\left(  t\right)  ,
\]%
\[
d_{AB}^{\bar{\lambda}}\left(  t\right)  =b_{A}^{\bar{\lambda}}\left(
t\right)  ,~d_{BA}^{\bar{\lambda}}\left(  t\right)  =b_{B}^{\bar{\lambda}%
}\left(  t\right)  .
\]
The closed system is a fixed point of the map $\bar{\lambda}\overset
{\psi_{u_{0}}}{\leadsto}\bar{d}^{\bar{\lambda}},$ i.e. a solution of the
equation%
\[
\bar{\lambda}=\bar{d}^{\bar{\lambda}}.
\]
To see the existence of a fixed point, let us introduce the functions
$\bar{\Lambda},\bar{B}^{\bar{\Lambda}}$ and $\bar{D}^{\bar{\Lambda}}:$
\[
\Lambda_{a}\left(  t\right)  =\int_{0}^{t}\lambda_{a}\left(  t\right)
dt,~B_{a}^{\bar{\Lambda}}\left(  t\right)  =\int_{0}^{t}b_{a}^{\bar{\lambda}%
}\left(  t\right)  \left(  t\right)  dt,~D_{a}^{\bar{\Lambda}}\left(
t\right)  =\int_{0}^{t}d_{a}^{\bar{\lambda}}\left(  t\right)  dt,
\]
and the corresponding mapping $\bar{\Lambda}\overset{\Psi_{u_{0}}}{\leadsto
}\bar{D}^{\bar{\Lambda}}.$ The functions $\bar{\Lambda},\bar{B}^{\bar{\Lambda
}}$ and $\bar{D}^{\bar{\Lambda}}$ are monotone continuous, moreover, the
functions $\bar{B}^{\bar{\Lambda}}$ and $\bar{D}^{\bar{\Lambda}}$ have
uniformly bounded derivatives. Let $\mathfrak{c}$ be the upper bound for these
derivatives, and $\mathfrak{C}$ be the space of all continuous monotone 5D
vector-functions on $\left[  0,T\right]  ,$ vanishing at zero, with the
derivatives bounded by $\mathfrak{c}$ once they exist. Then $\mathfrak{C}$ is
compact and convex, therefore the map $\Psi_{u_{0}}:\mathfrak{C}%
\rightarrow\mathfrak{C}$ has at least one fixed point.

\textbf{2. }We now will show that for every $u_{0}$ the map $\Psi_{u_{0}}$ is
a contraction; that will imply the uniqueness of the solution. Without loss of
generality we can assume that $T$ is small. Informally, the contraction takes
place because the exit rates $b_{a}^{\bar{\lambda}}\left(  t\right)  $ for
$t\in\left[  0,T\right]  $ with $T$ small depend mainly on the initial state
$u_{0}:$ the new clients, arriving during the time $\left[  0,T\right]  $ have
no chance to be served before $T,$ if there were clients already waiting.
Therefore the \textquotedblleft worst\textquotedblright\ case for us is when
the initial state $\nu_{0}$ is the measure $\delta_{0},$ having a unit atom at
$0\in\mathbb{Z}^{5+}.$

So let $\lambda_{1}\left(  t\right)  ,$ $\lambda_{2}\left(  t\right)
,t\in\left[  0,T\right]  $ be the rates of two Poisson inflows to the empty
server, and $\gamma$ be the service rate. We want to estimate the difference
$b_{1}\left(  t\right)  -b_{2}\left(  t\right)  $ of the rates of the exit
flows. We can couple the two service processes in the following way: let
$\lambda\left(  t\right)  =\min$ $\left\{  \lambda_{1}\left(  t\right)
,\lambda_{2}\left(  t\right)  \right\}  .$ Then we write $\lambda_{i}\left(
t\right)  =\lambda\left(  t\right)  +\eta_{i}\left(  t\right)  ,$ where
$\eta_{i}\left(  t\right)  =\lambda_{i}\left(  t\right)  -\lambda\left(
t\right)  .$ We will call the clients arriving with the rate $\lambda\left(
t\right)  $ as colorless, and we call the $\eta_{1}\left(  t\right)  $ clients
as red, while the $\eta_{2}\left(  t\right)  $ clients as blue. The colorless
clients have priority in their service: if a colorless client arrives, then
all the colored ones have to wait -- even the one currently under the service.
Then the difference $\left\vert b_{1}\left(  t\right)  -b_{2}\left(  t\right)
\right\vert $ is bounded from above by the sum of the exit rates of colored
clients, which does not exceed
\[
\left\vert b_{1}\left(  t\right)  -b_{2}\left(  t\right)  \right\vert
\leq\gamma\mathbf{\Pr}\left(
\begin{array}
[c]{c}%
\text{server is occupied by a }\\
\text{colored client at the moment }t
\end{array}
\right)  \leq\gamma\int_{0}^{t}\left\vert \lambda_{1}\left(  t\right)
-\lambda_{2}\left(  t\right)  \right\vert dt.
\]
Hence we have contraction with the contraction rate at most $\gamma T,$ which
is small for small $T.$ We denote by $\bar{\lambda}_{u}\left(  t\right)  $ the
unique fixed point of $\Psi_{u}.$

\textbf{3.} Finally we prove that the semigroup preserves the space of
continuous functions. First of all we observe that the map $\Psi_{u}$ depends
on the initial measure $u$ in a continuous way. Therefore the same is true for
$\bar{\lambda}_{u}\left(  t\right)  ,$ the fixed point of $\Psi_{u}.$ Hence
$u\left(  t\right)  $ depends continuously on $u\left(  0\right)  .$
\end{proof}

Let us consider the subspace $\mathcal{C}^{2}\subset\mathcal{C}^{0}$ of
functions $f$, which have the following properties:

\begin{enumerate}
\item for every $x\in\mathbb{Z}^{5+}$ the function $f$ has the first
derivative $\frac{\partial f}{\partial u\left(  x\right)  }$;

\item for every $x,y\in\mathbb{Z}^{5+}$ the function $f$ has the second
derivative $\frac{\partial^{2}f}{\partial u\left(  x\right)  \partial u\left(
y\right)  }$;

\item all these derivatives are bounded, uniformly in $x,y.$
\end{enumerate}

It is easy to see that the set of coordinate functions $\left\{  u\left(
x\right)  ,x\in\mathbb{Z}^{5+}\right\}  $ on $\mathcal{M}\left(
\mathbb{K}\right)  $ can distinguish any two measures from $\mathcal{M}\left(
\mathbb{K}\right)  $. Due to the compactness of $\mathcal{M}\left(
\mathbb{K}\right)  $ we can apply the Stone-Weierstrass theorem, which implies
that the subspace $\mathcal{C}^{2}\subset\mathcal{C}^{0}$ is dense in
$\mathcal{C}^{0}.$ We now will show the following

\begin{proposition}
\label{cor} For every $t$ we have $S_{t}\left(  \mathcal{C}_{0}^{2}\right)
\subset\mathcal{C}^{2},$ where the subspace $\mathcal{C}_{0}^{2}%
\subset\mathcal{C}^{2}$ consists of all functions depending only on finitely
many variables $\left\{  u\left(  x\right)  \right\}  .$ In particular, the
subspace $\mathcal{C}^{2}$ is a core of the generator $A.$\medskip\ 
\end{proposition}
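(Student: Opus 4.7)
The plan is to reduce the claim to smoothness of the finite-time flow $u_0 \mapsto u_t$ on the coordinate space and then verify that this smoothness comes with the required uniform bounds. If $\varphi \in \mathcal{C}_0^2$ depends on the finite collection $u(x_1),\ldots,u(x_n)$ through a $C^2$ function $f \colon [0,1]^n \to \mathbb{R}$, then by the definition of the semigroup one has $(S_t\varphi)(u_0) = f\bigl(u_t(x_1;u_0),\ldots,u_t(x_n;u_0)\bigr)$, where $u_t(\cdot;u_0)$ is the unique solution of the system $\left(\ref{014}\right)$ built in the previous theorem. By the chain rule it therefore suffices to prove that for every fixed $t$ and every $x \in \mathbb{Z}^{5+}$ the map $u_0 \mapsto u_t(x)$ is $C^2$, with first and second partial derivatives bounded uniformly in $x$, in the differentiation variables, and over admissible initial data.

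To produce such bounds I would regard the right-hand side of $\left(\ref{014}\right)$ as a vector field $G \colon \ell^{\infty}(\mathbb{Z}^{5+}) \to \ell^{\infty}(\mathbb{Z}^{5+})$. Each coordinate $G_x(u)$ is a polynomial of degree at most two in finitely many values of $u$: the purely local contributions involve $u(x \pm \Delta_a)$ for $a \in \{O,A,B,AB,BA\}$, whereas the non-local coupling enters only through the five scalars $\lambda_a = \gamma_a u(\Delta_a)$ and the two auxiliary values $u(\Delta_A + \Delta_{BA})$ and $u(\Delta_B + \Delta_{AB})$. Using the a priori bound $0 \le u(x) \le 1$ on the invariant image of $\mathcal{M}(\mathbb{K})$, the derivatives $DG(u)$ and $D^2 G(u)$ are bounded operators on $\ell^{\infty}$ with operator norms uniformly bounded in $u$. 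Picard iteration followed by differentiation of the fixed-point equation, together with Gronwall, then yields $C^2$-dependence of the flow on initial data and the uniform estimates
\[
\left|\frac{\partial u_t(x)}{\partial u_0(y)}\right| \le e^{Kt}, \qquad \left|\frac{\partial^2 u_t(x)}{\partial u_0(y)\,\partial u_0(z)}\right| \le K' t\, e^{2Kt}.
\]
Combined with the chain rule, this gives $S_t\varphi \in \mathcal{C}^2$; continuity in $u_0$ was already supplied by the previous theorem.

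For the ``core'' assertion, the definition recalled at the start of Section~\ref{TK} requires a dense subspace $X_1 \subset \mathcal{C}^2$ with $S_t(X_1) \subset \mathcal{C}^2$. Taking $X_1 = \mathcal{C}_0^2$, the invariance just established supplies one ingredient, while density follows from Stone--Weierstrass: $\mathcal{C}_0^2$ is a unital subalgebra of $\mathcal{C}^0$ and the coordinates $\{u(x)\}_{x \in \mathbb{Z}^{5+}}$ already separate measures in the compact space $\mathcal{M}(\mathbb{K})$, so $\mathcal{C}_0^2$ is dense in $\mathcal{C}^0$ and a fortiori in $\mathcal{C}^2$. The main technical obstacle I expect lies in obtaining the uniform bounds on the linearization: the variable $v_t^{(y)}(x) := \partial u_t(x)/\partial u_0(y)$ satisfies an equation in which the scalar feedback terms $v_t^{(y)}(\Delta_a)$ occur on the right-hand side for every $x$, so these finitely many feedback coordinates must be controlled simultaneously rather than site by site. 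The natural route is to first close the finite-dimensional subsystem on the feedback coordinates and bound its variations by Gronwall, then treat the resulting bounded functions as a forcing in the remaining equations; the second-variation analysis proceeds in the same two-step fashion, with an additional bilinear source term estimated via the first-variation bound.
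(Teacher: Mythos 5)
Your proposal is correct and follows essentially the paper's route: you differentiate the flow defined by $\left(\ref{014}\right)$ with respect to the initial data, exploit that each row of the linearization has only finitely many entries with uniformly bounded coefficients (your $\ell^{\infty}$ operator-norm/Gronwall bound is exactly what the paper gets from Lemma \ref{karp}, quoted from \cite{DKV}) for the first and then the second variation, and conclude by the chain rule together with Stone--Weierstrass and the core definition. One small caveat on your closing paragraph: the feedback coordinates $u_t\left(\Delta_{a}\right)$, $u_t\left(\Delta_{A}+\Delta_{BA}\right)$, $u_t\left(\Delta_{B}+\Delta_{AB}\right)$ do not form a closed subsystem (their own equations involve their lattice neighbours), but no such two-step reduction is needed, since the uniform row-sum bound already lets the sup-norm Gronwall estimate run on the whole infinite system at once, which is precisely the content of the cited lemma.
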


\textbf{Proof. }To do this we will use the Proposition 1 of the paper
\cite{DKV}:

\begin{lemma}
\label{karp} Consider the infinite system of equations%
\[
\frac{d}{dt}z_{k}\left(  t\right)  =\sum_{i}a_{ki}\left(  t\right)
z_{i}\left(  t\right)  +b_{k}\left(  t\right)  ,\ t\geq0.
\]
Suppose that for all $k$%
\[
\sum_{i}\left\vert a_{ki}\left(  t\right)  \right\vert \leq a,\ \left\vert
b_{k}\left(  t\right)  \right\vert \leq b_{0}\exp\left\{  bt\right\}
,\ \left\vert z_{k}\left(  0\right)  \right\vert \leq c,
\]
with $a<b.$ Then%
\[
\left\vert z_{k}\left(  t\right)  \right\vert \leq c\exp\left\{  at\right\}
+\frac{b_{0}}{b-a}\left(  \exp\left\{  bt\right\}  -\exp\left\{  at\right\}
\right)  .
\]
\medskip
\end{lemma}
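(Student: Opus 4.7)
The plan is to reduce the infinite linear system to a single scalar integral inequality for the sup-norm $Z(t) := \sup_k |z_k(t)|$, apply the sharp (integral) form of Gronwall's inequality, and then evaluate the resulting elementary integral to match the stated right-hand side exactly.

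First I would establish that $Z(t)$ is finite and continuous on every bounded interval $[0,T]$. The uniform row bound $\sum_i |a_{ki}(t)|\leq a$ makes the coefficient matrix a bounded operator on $\ell^\infty$, so starting from the bounded initial data $|z_k(0)|\leq c$, a short Picard/bootstrap argument on small intervals, iterated forward, gives $\sup_{t\in[0,T]} Z(t) < \infty$. Next, integrating the ODE componentwise and using the hypotheses yields
\[
|z_k(t)| \leq c + a \int_0^t Z(s)\, ds + \frac{b_0}{b}\bigl(e^{bt} - 1\bigr);
\]
since the right-hand side is independent of $k$, taking $\sup_k$ on the left gives the scalar integral inequality
\[
Z(t) \leq \alpha(t) + a \int_0^t Z(s)\, ds, \qquad \alpha(t) := c + \frac{b_0}{b}\bigl(e^{bt} - 1\bigr).
\]

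Then I would invoke the integral form of Gronwall's inequality. Setting $R(t) = \int_0^t Z(s)\, ds$, the above gives $\dot R \leq aR + \alpha$, and multiplying by $e^{-at}$ and integrating yields
\[
Z(t) \leq \alpha(t) + a \int_0^t e^{a(t-s)} \alpha(s)\, ds.
\]
Splitting $\alpha$ into its three elementary pieces and using the identity $\int_0^t e^{a(t-s)} e^{bs}\, ds = (e^{bt}-e^{at})/(b-a)$ (this is where the hypothesis $a<b$ is used, to keep $1/(b-a)$ finite and positive), routine algebra -- the identity $a/(b-a)+1 = b/(b-a)$ does the bookkeeping -- collapses the right-hand side to exactly
\[
c\, e^{at} + \frac{b_0}{b-a}\bigl(e^{bt} - e^{at}\bigr),
\]
which is the claimed bound.

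The only delicate step is the first one. In finite dimensions there is nothing to verify, but in the infinite system one must ensure that $\sum_i a_{ki}(t)z_i(t)$ converges absolutely and that $Z(t)$ does not blow up in finite time. Both hinge on the uniform $\ell^1$ row bound $\sum_i |a_{ki}|\leq a$, which is precisely the hypothesis of the lemma. Everything beyond the initial setup is a textbook scalar Gronwall computation, so I do not expect any further obstacle.
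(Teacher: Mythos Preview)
Your argument is correct: the reduction to the scalar inequality for $Z(t)=\sup_k|z_k(t)|$ via the uniform $\ell^1$ row bound, followed by the integral Gronwall lemma and the elementary evaluation, reproduces the stated bound exactly, and your remark about where $a<b$ enters is accurate. The paper itself does not prove this lemma at all --- it is quoted verbatim as Proposition~1 of \cite{DKV} and used as a black box --- so there is no ``paper's own proof'' to compare against; your write-up supplies what the paper omits.
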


From $\left(  \ref{014}\right)  $ it follows immediately, that
\[
\frac{d}{dt}\left(  \frac{\partial u_{t}\left(  v\right)  }{\partial
u_{0}\left(  x\right)  }\right)  =\sum_{w\in Y\left(  v\right)  }\bar{a}%
_{vw}\left(  t\right)  \left(  \frac{\partial u_{t}\left(  w\right)
}{\partial u_{0}\left(  x\right)  }\right)  ,
\]
with \textbf{ }$\sum_{w\in Y\left(  v\right)  }\left\vert \bar{a}_{vw}\left(
t\right)  \right\vert <\hat{a},$ $\left\vert \frac{\partial u_{0}\left(
v\right)  }{\partial u_{0}\left(  x\right)  }\right\vert \leq1,$ for some
$\hat{a}<\infty,$ uniformly in $v,$ so Lemma \ref{karp} applies, and all the
derivatives $\left\vert \frac{\partial u_{t}\left(  v\right)  }{\partial
u_{0}\left(  x\right)  }\right\vert $ are uniformly bounded, provided $t<T.$
Further on,%
\[
\frac{d}{dt}\left(  \frac{\partial^{2}u_{t}\left(  v\right)  }{\partial
u_{0}\left(  x\right)  \partial u_{0}\left(  y\right)  }\right)  =\sum_{w\in
Y\left(  v\right)  }\bar{a}_{vw}\left(  t\right)  \left(  \frac{\partial
^{2}u_{t}\left(  w\right)  }{\partial u_{0}\left(  x\right)  \partial
u_{0}\left(  y\right)  }\right)  +\tilde{b}_{v}\left(  t\right)  ,
\]
with the same $\bar{a}_{vw}\left(  t\right)  $-s$,$ while the term $\tilde
{b}_{v}\left(  t\right)  ,$ consisting of the products of the first
derivatives $\frac{\partial u_{t}\left(  w^{\prime}\right)  }{\partial
u_{0}\left(  x\right)  }\frac{\partial u_{t}\left(  w^{\prime\prime}\right)
}{\partial u_{0}\left(  y\right)  },$ is also uniformly bounded, as was just
shown, provided $t<T.$ Therefore the derivatives $\left\vert \frac
{\partial^{2}u_{t}\left(  v\right)  }{\partial u_{0}\left(  x\right)  \partial
u_{0}\left(  y\right)  }\right\vert $ are uniformly bounded as well.

For other functions we just use the chain rule. $\blacksquare$

\subsection{Equation for the evolution $\nabla_{M}^{N}$ and the convergence}

Now we will write the generator $A_{M}$ of the process $\nabla_{M}^{N}.$ Let
$\psi\left(  \cdot\right)  $ be a function on $\mathcal{M}_{M}$. (In fact, we
need it to be defined on a smaller set $\mathcal{M}_{M}\cap\mathcal{M}\left(
\left(  \frac{1}{N}\mathbb{Z}^{5}\right)  ^{+}\right)  .$ Throughout this
section the value of $N$ will be fixed, and we will keep it just $1,$ in order
to simplify the notation.) We will introduce the following notations for the
increments of the measure $v=\left(  v_{\bar{O}},v_{\bar{A}},v_{\bar{B}%
}\right)  $ (which are themselves (signed) measures on $\mathbb{Z}^{1}$ or
$\mathbb{Z}^{2}$):
\[
\Delta_{O,x}\left(  y\right)  =\left\{
\begin{array}
[c]{cc}%
1 & \text{ if }y=x\\
-1 & \text{if }y=x-1\\
0 & \text{ otherwise}%
\end{array}
\right.  ,\ x,y\in\mathbb{Z}^{1},
\]%
\[
\Delta_{A,x_{\bar{A}}}\left(  y\right)  =\left\{
\begin{array}
[c]{cc}%
1 & \text{ if }y=x_{\bar{A}}\equiv\left(  x_{A},x_{BA}\right) \\
-1 & \text{if }y=\left(  x_{A}-1,x_{BA}\right) \\
0 & \text{ otherwise}%
\end{array}
\right.  ,\ x_{\bar{A}},y\in\mathbb{Z}^{2},
\]%
\[
\Delta_{BA,x_{\bar{A}}}\left(  y\right)  =\left\{
\begin{array}
[c]{cc}%
1 & \text{ if }y=x_{\bar{A}}\equiv\left(  x_{A},x_{BA}\right) \\
-1 & \text{if }y=\left(  x_{A},x_{BA}-1\right) \\
0 & \text{ otherwise}%
\end{array}
\right.  ,\ x_{\bar{A}},y\in\mathbb{Z}^{2},
\]
and similar definitions for the remaining measures $\Delta_{B,x_{\bar{B}}}$
and $\Delta_{AB,x_{\bar{B}}}.$ Then%

\begin{align}
&  \left(  A_{M}\psi\right)  \left(  v\right)  =\sum_{v^{\prime}}c\left(
v,v^{\prime}\right)  \left[  \psi\left(  v^{\prime}\right)  -\psi\left(
v\right)  \right] \label{601}\\
&  =\sum_{x_{O}\geq1}M\tfrac{\gamma_{O}}{2}v_{\bar{O}}\left(  x_{O}\right)
\sum_{x_{\bar{A}}}v_{\bar{A}}\left(  x_{\bar{A}}\right)  \left[  \psi\left(
v_{\bar{O}}-\tfrac{\Delta_{O,x_{O}}}{M},v_{\bar{A}}+\tfrac{\Delta
_{A,x_{\bar{A}}+\left(  1,0\right)  }}{M},v_{\bar{B}}\right)  -\psi\left(
v_{\bar{O}},v_{\bar{A}},v_{\bar{B}}\right)  \right] \nonumber\\
&  +\sum_{x_{O}\geq1}M\tfrac{\gamma_{O}}{2}v_{\bar{O}}\left(  x_{O}\right)
\sum_{x_{\bar{B}}}v_{\bar{B}}\left(  x_{\bar{B}}\right)  \left[  \psi\left(
v_{\bar{O}}-\tfrac{\Delta_{O,x_{O}}}{M},v_{\bar{A}},v_{\bar{B}}+\tfrac
{\Delta_{B,x_{\bar{B}}+\left(  1,0\right)  }}{M}\right)  -\psi\left(
v_{\bar{O}},v_{\bar{A}},v_{\bar{B}}\right)  \right] \nonumber\\
&  +\sum_{x_{\bar{A}}:x_{BA}\geq1}M\gamma_{BA}v_{\bar{A}}\left(  x_{\bar{A}%
}\right)  \sum_{x_{O}}v_{\bar{O}}\left(  x_{O}\right)  \left[  \psi\left(
v_{\bar{O}}+\tfrac{\Delta_{O,x_{O}+1}}{M},v_{\bar{A}}-\tfrac{\Delta
_{BA,x_{\bar{A}}}}{M},v_{\bar{B}}\right)  -\psi\left(  v_{\bar{O}},v_{\bar{A}%
},v_{\bar{B}}\right)  \right] \nonumber\\
&  +\sum_{x_{\bar{B}}:x_{AB}\geq1}M\gamma_{AB}v_{\bar{B}}\left(  x_{\bar{B}%
}\right)  \sum_{x_{O}}v_{\bar{O}}\left(  x_{O}\right)  \left[  \psi\left(
v_{\bar{O}}+\tfrac{\Delta_{O,x_{O}+1}}{M},v_{\bar{A}},v_{\bar{B}}%
-\tfrac{\Delta_{AB,x_{\bar{B}}}}{M}\right)  -\psi\left(  v_{\bar{O}}%
,v_{\bar{A}},v_{\bar{B}}\right)  \right] \nonumber\\
&  +\sum_{x_{A}\geq1}M\gamma_{A}v_{\bar{A}}\left(  x_{A},0\right)
\sum_{x_{\bar{B}}}v_{\bar{B}}\left(  x_{\bar{B}}\right)  \left[  \psi\left(
v_{\bar{O}},v_{\bar{A}}-\tfrac{\Delta_{A,\left(  x_{A},0\right)  }}{M}%
,v_{\bar{B}}+\tfrac{\Delta_{AB,x_{\bar{B}}+\left(  0,1\right)  }}{M}\right)
-\psi\left(  v_{\bar{O}},v_{\bar{A}},v_{\bar{B}}\right)  \right] \nonumber\\
&  +\sum_{x_{B}\geq1}M\gamma_{B}v_{\bar{B}}\left(  x_{B},0\right)
\sum_{x_{\bar{A}}}v_{\bar{A}}\left(  x_{\bar{A}}\right)  \left[  \psi\left(
v_{\bar{O}},v_{\bar{A}}+\tfrac{\Delta_{BA,x_{\bar{A}}+\left(  0,1\right)  }%
}{M},v_{\bar{B}}-\tfrac{\Delta_{B,\left(  x_{B},0\right)  }}{M}\right)
-\psi\left(  v_{\bar{O}},v_{\bar{A}},v_{\bar{B}}\right)  \right]  .\nonumber
\end{align}
Suppose now that the function $\psi$ is differentiable in each of the
variables $v_{\bar{O}}\left(  x_{O}\right)  ,x_{O}\in\mathbb{Z}^{1},v_{\bar
{A}}\left(  x_{\bar{A}}\right)  ,x_{\bar{A}}\in\mathbb{Z}^{2},v_{\bar{B}%
}\left(  x_{\bar{B}}\right)  ,x_{\bar{B}}\in\mathbb{Z}^{2}.$ Then each of the
six increments in $\left(  \ref{601}\right)  $ equals to the corresponding
derivative $\psi^{\prime}$ of $\psi,$ computed at some intermediate point. If
moreover the function $\psi$ is continuously differentiable (which is implied
by the twice differentiability), we can take a limit as $M\rightarrow\infty,$
obtaining the convergence to the limiting operator
\begin{align*}
&  \left(  A\psi\right)  \left(  v\right) \\
&  =\sum_{x_{O}\geq1,x_{\bar{A}}}\tfrac{\gamma_{O}}{2}v_{\bar{O}}\left(
x_{O}\right)  v_{\bar{A}}\left(  x_{\bar{A}}\right)  \left[  \tfrac
{\partial\psi\left(  v\right)  }{\partial\left(  v_{\bar{O}}\left(
x_{O}-1\right)  \right)  }-\tfrac{\partial\psi\left(  v\right)  }%
{\partial\left(  v_{\bar{O}}\left(  x_{O}\right)  \right)  }+\tfrac
{\partial\psi\left(  v\right)  }{\partial\left(  v_{\bar{A}}\left(
x_{A}+1,x_{BA}\right)  \right)  }-\tfrac{\partial\psi\left(  v\right)
}{\partial\left(  v_{\bar{A}}\left(  x_{A},x_{BA}\right)  \right)  }\right] \\
&  +\sum_{x_{O}\geq1,x_{\bar{B}}}\tfrac{\gamma_{O}}{2}v_{\bar{O}}\left(
x_{O}\right)  v_{\bar{B}}\left(  x_{\bar{B}}\right)  \left[  \tfrac
{\partial\psi\left(  v\right)  }{\partial\left(  v_{\bar{O}}\left(
x_{O}-1\right)  \right)  }-\tfrac{\partial\psi\left(  v\right)  }%
{\partial\left(  v_{\bar{O}}\left(  x_{O}\right)  \right)  }+\tfrac
{\partial\psi\left(  v\right)  }{\partial\left(  v_{\bar{B}}\left(
x_{B}+1,x_{AB}\right)  \right)  }-\tfrac{\partial\psi\left(  v\right)
}{\partial\left(  v_{\bar{B}}\left(  x_{B},x_{AB}\right)  \right)  }\right] \\
&  +\sum_{x_{\bar{A}}:x_{BA}\geq1,x_{O}}\gamma_{BA}v_{\bar{A}}\left(
x_{\bar{A}}\right)  v_{\bar{O}}\left(  x_{O}\right)  \left[  \tfrac
{\partial\psi\left(  v\right)  }{\partial\left(  v_{\bar{O}}\left(
x_{O}+1\right)  \right)  }-\tfrac{\partial\psi\left(  v\right)  }%
{\partial\left(  v_{\bar{O}}\left(  x_{O}\right)  \right)  }+\tfrac
{\partial\psi\left(  v\right)  }{\partial\left(  v_{\bar{A}}\left(
x_{A},x_{BA}-1\right)  \right)  }-\tfrac{\partial\psi\left(  v\right)
}{\partial\left(  v_{\bar{A}}\left(  x_{A},x_{BA}\right)  \right)  }\right] \\
&  +\sum_{x_{\bar{B}}:x_{AB}\geq1,x_{O}}\gamma_{AB}v_{\bar{B}}\left(
x_{\bar{B}}\right)  v_{\bar{O}}\left(  x_{O}\right)  \left[  \tfrac
{\partial\psi\left(  v\right)  }{\partial\left(  v_{\bar{O}}\left(
x_{O}+1\right)  \right)  }-\tfrac{\partial\psi\left(  v\right)  }%
{\partial\left(  v_{\bar{O}}\left(  x_{O}\right)  \right)  }+\tfrac
{\partial\psi\left(  v\right)  }{\partial\left(  v_{\bar{B}}\left(
x_{B},x_{AB}-1\right)  \right)  }-\tfrac{\partial\psi\left(  v\right)
}{\partial\left(  v_{\bar{B}}\left(  x_{B},x_{AB}\right)  \right)  }\right] \\
&  +\sum_{x_{\bar{B}},x_{\bar{A}}:x_{A}\geq1}\gamma_{A}v_{\bar{A}}\left(
x_{A},0\right)  v_{\bar{B}}\left(  x_{\bar{B}}\right)  \left[  \tfrac
{\partial\psi\left(  v\right)  }{\partial\left(  v_{\bar{A}}\left(
x_{A}-1,0\right)  \right)  }-\tfrac{\partial\psi\left(  v\right)  }%
{\partial\left(  v_{\bar{A}}\left(  x_{A},0\right)  \right)  }+\tfrac
{\partial\psi\left(  v\right)  }{\partial\left(  v_{\bar{B}}\left(
x_{B},x_{AB}+1\right)  \right)  }-\tfrac{\partial\psi\left(  v\right)
}{\partial\left(  v_{\bar{B}}\left(  x_{B},x_{AB}\right)  \right)  }\right] \\
&  +\sum_{x_{\bar{A}},x_{\bar{B}}:x_{B}\geq1,}\gamma_{B}v_{\bar{B}}\left(
x_{B},0\right)  v_{\bar{A}}\left(  x_{\bar{A}}\right)  \left[  \tfrac
{\partial\psi\left(  v\right)  }{\partial\left(  v_{\bar{B}}\left(
x_{B}-1,0\right)  \right)  }-\tfrac{\partial\psi\left(  v\right)  }%
{\partial\left(  v_{\bar{B}}\left(  x_{B},0\right)  \right)  }+\tfrac
{\partial\psi\left(  v\right)  }{\partial\left(  v_{\bar{A}}\left(
x_{A},x_{BA}+1\right)  \right)  }-\tfrac{\partial\psi\left(  v\right)
}{\partial\left(  v_{\bar{A}}\left(  x_{A},x_{BA}\right)  \right)  }\right]  .
\end{align*}
Let us apply this formula to the "coordinate" function $\psi\left(
\cdot\right)  =\phi_{y}\left(  \cdot\right)  ,$ where $\phi_{y}\left(
v\right)  =v\left(  y\right)  \equiv v_{O}\left(  y_{O}\right)  v_{\bar{A}%
}\left(  y_{\bar{A}}\right)  v_{B}\left(  y_{\bar{B}}\right)  .$ The result is
the one given by $\left(  \ref{0111}\right)  :$%
\begin{align*}
\left(  A\phi_{x}\right)  \left(  v\right)   &  =-v\left(  x\right)  \left(
\sum_{a=O,A,B,AB,BA}\lambda_{a}\left(  t\right)  \right) \\
&  -v\left(  x\right)  \left(  \sum_{a=O,AB,BA}\gamma_{a}\left(
1-\delta_{x_{a}}\right)  +\gamma_{A}\left(  1-\delta_{x_{A}}\right)
\delta_{x_{BA}}+\gamma_{B}\left(  1-\delta_{x_{B}}\right)  \delta_{x_{AB}%
}\right) \\
&  +v\left(  x-\Delta_{O}\right)  \left(  1-\delta_{x_{O}}\right)  \left(
\lambda_{AB}\left(  t\right)  +\lambda_{BA}\left(  t\right)  \right) \\
&  +\sum_{a=A,B}v\left(  x-\Delta_{a}\right)  \left(  1-\delta_{x_{a}}\right)
\frac{\lambda_{O}\left(  t\right)  }{2}\\
&  +v\left(  x-\Delta_{AB}\right)  \left(  1-\delta_{x_{AB}}\right)
\lambda_{A}\left(  t\right)  +v\left(  x-\Delta_{BA}\right)  \left(
1-\delta_{x_{BA}}\right)  \lambda_{B}\left(  t\right) \\
&  +\sum_{a=O,AB,BA}v\left(  x+\Delta_{a}\right)  \mathbf{\gamma}_{a}\\
&  +v\left(  x+\Delta_{A}\right)  \mathbf{\gamma}_{A}\delta_{x_{BA}}+v\left(
x+\Delta_{B}\right)  \mathbf{\gamma}_{B}\delta_{x_{AB}}.
\end{align*}
Since the space of functions of finitely many $v$-s coincides with that
depending of finitely many $u$-s, that finishes the proof.

\section{Fluid networks}

One of the key ingredients of the proof of out Main result is the
investigation of the fluid (Euler) limits of various networks. They are
introduced in the present Section.

\subsection{Fluid systems with one fluid}

The fluid systems with one fluid are the following simple dynamical systems.
Consider the containers $V_{1},V_{2},...,V_{n},$ filled (partially) with
water. Suppose that some pairs of these containers are connected by (directed)
pipes, through which the water can flow. On every pipe $i,j$ there is a pump
$\rho_{ij}$ working, with the capacity of sending $\gamma_{ij}\geq0$ units of
water per unit time from $V_{i}$ to $V_{j}$. The pumps are working constantly,
and if the container $V_{i}$ has less water than the pump $\rho_{ij}$ can
handle, the result is that the pump sucks in whatever there is. For example,
if the network is given by the graph
\[
V_{1}\overset{\gamma_{12}}{\rightarrow}V_{2}\overset{\gamma_{23}}{\rightarrow
}V_{3},
\]
with $\gamma_{12}=\frac{1}{2},$ $\gamma_{23}=1,$ then in a while the container
$V_{2}$ will be empty, and the flow along the pipe $23$ will be $\frac{1}{2}$
(provided the water supply in $V_{1}$ lasts). If a container $V_{i}$ has
several pipes $ij_{k}$ attached, then the water is shared by the pumps
$\rho_{ij_{k}}$ proportionally to the capacities $\gamma_{ij_{k}}$ (this is
relevant only in the situation when the level of water in $V_{i}$ is zero, and
all the incoming water immediately leaves it).

Suppose now that at the moment $T=0$ all the containers are filled with water
in the amounts of $v_{i}\left(  0\right)  ,$ and then we turn on all the
pumps. We are looking on the levels $v_{i}\left(  t\right)  ,$ as they are
changing in time. It turns out that there exists a time $T^{\prime}$
(depending on the system), after which the levels $v_{i}\left(  t\right)  $
become stable and do not change anymore. (Some of them can in fact be zero.)
In particular, such a system can never exhibit a cyclic behavior. Of course,
the stability of the levels does not imply that the water in the network does
not flow. It just means that for every container the amount of fluid entering
is equal to the amount of fluid leaving.

We will not provide the proof of this known statement (see \cite{D} and the
references there), since we will not use this fact. Below we will consider
fluid networks which do exhibit cyclic behavior, and we find such examples
among the fluid networks with several kinds of fluids.

\subsection{Basic fluid network \label{S21}}

We will consider the following 5-dimensional dynamical system, $\Delta$, which
is a closed version of the open RS-network. It consists of three nodes:
$\bar{O},$ $\bar{A}$ and $\bar{B},$ through which various fluids are passing.
The node $\bar{O}$ has one type of the fluid, in the amount $x_{O}\geq0,$ and
that fluid flows into the nodes $\bar{A}$ and $\bar{B}$ in equal amounts. The
rate of this flow $\gamma_{O}=3$, which means that three units of the fluid
$x_{O}$ leave $\bar{O}$ per unit time (provided the supply lasts, of course),
so each of the two nodes $\bar{A}$ and $\bar{B}$ gets $\frac{3}{2}$ units of
incoming fluids, $A$ and $B,$ per unit time. The amounts of these fluids are
denoted by $x_{A}$ and $x_{B}.$ The fluid $A$ then goes to the node $\bar{B},$
where it turns into the fluid $AB,$ while the fluid $B$ goes to $\bar{A}$ and
turns there into $BA.$ The corresponding rates are $\gamma_{A}=\gamma_{B}=10.$
The fluids $AB$ and $BA$ then go back to $\bar{O},$ with the rates
$\gamma_{AB}=\gamma_{BA}=2.$ The last thing which has to be specified is the
following priority: if the node $\bar{A}$ is in the state with both amounts
$x_{A}$ and $x_{BA}$ positive, then \textit{the fluid }$BA$\textit{ goes
first}. One can think that the fluid $BA$ is heavier and of higher viscosity
than $A,$ so it goes to the bottom of the node $\bar{A}$ and flows out first
(and relatively slow). The same applies to the node $\bar{B}.$%

\begin{center}
\fbox{\includegraphics[
height=2.2857in,
width=2.9464in
]%
{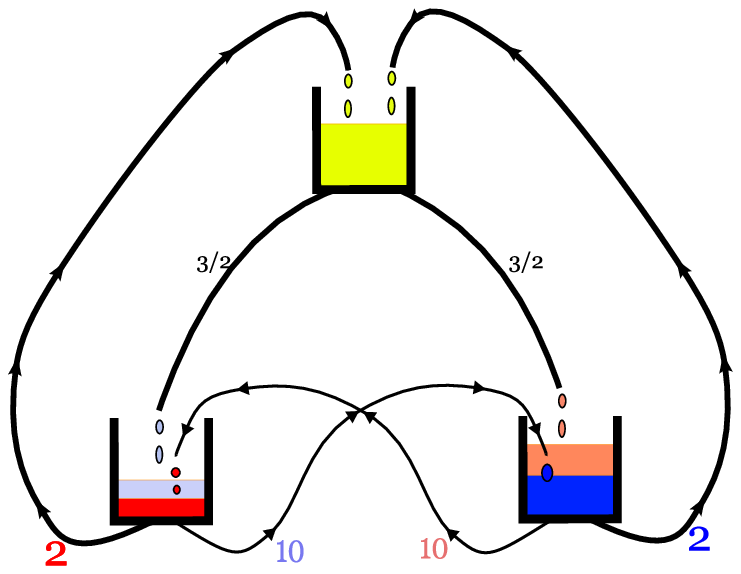}%
}\\
Basic fluid network.
\end{center}

As stated, the system is not well-defined. For example, the dynamics is not
specified if we try to start it from the configuration
\begin{equation}
x_{A}=a>0,x_{B}=b>0,x_{BA}=x_{AB}=0,x_{O}=1-a-b. \label{051}%
\end{equation}
The reason is that if the fluid $A$ \textquotedblleft starts
first\textquotedblright, then it will create some amount of the heavy fluid
$AB$ in $\bar{B},$ so the fluid $B$ in $\bar{B}$ will be blocked. The same
holds for the fluid $B$ \textquotedblleft starting first\textquotedblright%
.\textbf{ }

The precise definition of the system, given below, follows \cite{RSt, St2,
D}\textbf{. }Consider first the simplest case of a single node with capacity
$\gamma$ per unit time. Let the initial amount of fluid $x(0)$ be given, an
let $Y(t)$ be the net amount of fluid that arrives to the node during the time
interval $[0,t]$. In what follows we will call it the \textit{net inflow. }The
function $Y(t)$ is monotone non-decreasing function, $Y(0)=0,$ and we assume
that $Y(t)$ is Lipschitz continuous, $t\geq0$. It is easy to check that the
evolution of the amount $x(t)$ is given by
\begin{equation}
x(t)=V(t)+U(t), \label{EE6}%
\end{equation}
where $V(t)=x(0)+Y(t)-\gamma t$ is called the \textit{virtual level }of the
fluid, while $U(t)=\max\{0,-\inf_{s\in\lbrack0,t]}V(s)\}$ is the
\textit{unused service capacity }of our node.

We introduce for the correspondence $\left(  \ref{EE6}\right)  $ the notation
\begin{equation}
x(\cdot)=W(\gamma,x(0),Y(\cdot)), \label{EE6a}%
\end{equation}
that is, $W$ maps the initial fluid level and the inflow function to the
evolution of fluid level. As we will show later (in Lemma \ref{L2}),\textbf{
}the map $W$ is Lipschitz continuous map from $R\times C[0,\infty)$ to
$C[0,\infty)$.\textbf{ }

Further on, let $Z\left(  t\right)  $ be the total amount of fluid that leaves
the node during the time interval $[0,t]$: $Z(t)=x(0)+Y(t)-x(t)$. Again, the
function $Z\left(  t\right)  $ -- the \textit{net outflow -- } is monotone
non-decreasing Lipschitz continuous, with $Z(0)=0.$ By assumption, the
derivatives $z(t)=\dot{Z}(t),$ $y\left(  t\right)  =\dot{Y}(t),$ existing
a.e., satisfy
\[
z(t)=\left\{
\begin{array}
[c]{cc}%
\gamma & \text{ if }x(t)>0,\\
y\left(  t\right)  & \text{ otherwise.}%
\end{array}
\right.
\]
This property is the reason to call our discipline \textit{work-conserving};
the server is always working at its full capacity.

In the same way we can treat the node through which several fluids are
passing. Thus we introduce the vector $\bar{Y}(t)=\{Y_{1}(t),\dots,Y_{n}(t)\}$
of the net inflows during the time interval $[0,t]$, each $Y_{i}(t)$ being
non-decreasing and Lipschitz continuous. The vector $\bar{Z}(t)$ will denote
the corresponding collection of the net outflows. (Of course, it does depend
on the priorities of the fluids.) Consider again the derivatives
$y_{i}(t)=\dot{Y}_{i}(t)$ and $z_{i}(t)=\dot{Z}_{i}(t)$ (they exist for almost
all $t$ and define $Y_{i}(t)$ and $Z_{i}(t)$ in a unique way once we fix
$Y_{i}(0)$ and $Z_{i}(0)$ to be zero). Introduce also the workload rate by
$v(t)=\sum y_{i}(t)\gamma_{i}^{-1}$, where $\gamma_{i}$ are the service rates.
The service discipline of our node is work-conserving, if the following
property holds: once $\Vert x(t)\Vert>0$, we have
\begin{equation}
\sum z_{i}(t)\gamma_{i}^{-1}=1; \label{EE4}%
\end{equation}
otherwise
\begin{equation}
z_{i}(t)=y_{i}(t) \label{EE5}%
\end{equation}
for all $i$. The following statement is immediate:

\begin{proposition}
\label{PP1} Let $\bar{x}(0)=0$ and $v(t)\le1$ for almost all $t\ge0$. Then
$\bar{x}(t)=0$, $t\ge0$.
\end{proposition}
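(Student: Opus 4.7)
The plan is to introduce the scalar workload functional
\[
w(t) \;=\; \sum_{i=1}^n \frac{x_i(t)}{\gamma_i},
\]
and show that it is non-increasing whenever it is strictly positive, so that starting from $w(0)=0$ it must remain zero. Since each $x_i(t)\geq 0$, having $w\equiv 0$ forces $\bar x(t)\equiv 0$, which is the desired conclusion.

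First I would observe that $w$ is Lipschitz continuous on $[0,\infty)$, since each $x_i(t)=x_i(0)+Y_i(t)-Z_i(t)$ is Lipschitz by hypothesis on the $Y_i$ and by the construction of the $Z_i$ via the one-node map $W$. Hence $w$ is absolutely continuous, with
\[
\dot w(t) \;=\; \sum_{i=1}^n \frac{y_i(t)-z_i(t)}{\gamma_i} \;=\; v(t) - \sum_{i=1}^n \frac{z_i(t)}{\gamma_i}
\]
for almost every $t\geq 0$. Now I would split into two cases according to whether $w(t)>0$ or $w(t)=0$. If $w(t)>0$ then $\|\bar x(t)\|>0$, so the work-conserving identity \eqref{EE4} gives $\sum z_i(t)/\gamma_i=1$, whence $\dot w(t)=v(t)-1\leq 0$ by the hypothesis $v(t)\leq 1$ a.e. If $w(t)=0$ then every $x_i(t)=0$ (non-negativity), so \eqref{EE5} yields $z_i(t)=y_i(t)$ for all $i$, and therefore $\dot w(t)=0$.

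Combining these two cases one has $\dot w(t)\leq 0$ at almost every $t$ where $w(t)>0$. To conclude, I would apply the standard continuation argument: suppose for contradiction that $w(t_2)>0$ for some $t_2>0$, and let
\[
t_0 \;=\; \sup\bigl\{\,t\in[0,t_2]\,:\,w(t)=0\,\bigr\}.
\]
Continuity of $w$ and $w(0)=0$ guarantee $w(t_0)=0$, while by definition of $t_0$ we have $w(t)>0$ for all $t\in(t_0,t_2]$. Absolute continuity then gives
\[
w(t_2) \;=\; w(t_0) + \int_{t_0}^{t_2}\dot w(s)\,ds \;\leq\; 0,
\]
contradicting $w(t_2)>0$. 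Hence $w\equiv 0$ on $[0,\infty)$, and by non-negativity of the components $\bar x(t)\equiv 0$.

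The only genuine delicacy is the a.e.\ chain rule used in differentiating $w$: one must justify that $\dot Y_i$ and $\dot Z_i$ exist almost everywhere and that $Y_i$, $Z_i$ are recovered from them by integration. This follows from the Lipschitz continuity assumed for $Y_i$ and, for $Z_i$, from the representation of $Z_i$ via the $W$-map (cf.\ Lemma \ref{L2} announced in the text). All of the other steps are purely mechanical manipulations of \eqref{EE4}, \eqref{EE5} and the definition of $v(t)$; no further structural property of the priorities or of the network graph is needed, which is consistent with the fact that the proposition holds for \emph{any} work-conserving multi-fluid node.
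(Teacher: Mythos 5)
Your proof is correct, and it supplies exactly the argument the paper leaves out: the paper simply declares Proposition \ref{PP1} ``immediate'' and gives no proof. The workload functional $w(t)=\sum_i x_i(t)/\gamma_i$, the dichotomy via the work-conserving relations (\ref{EE4})--(\ref{EE5}) giving $\dot w\le 0$ a.e.\ where $w>0$, and the continuation argument from $w(0)=0$ constitute the standard justification implicit in that remark, with the Lipschitz/absolute-continuity point handled as you indicate.
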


\textbf{ }The system described in the beginning of the present subsection
corresponds to the following specification of the above general formulation.
We have $\bar{Y}(t)\in\left(  \mathbb{R}^{5}\right)  ^{+},$\textbf{ }and in
terms of the map $W$ (see $\left(  \ref{EE6a}\right)  $) the evolution is
given by the equations:%
\begin{equation}
x_{O}(\cdot)=W(3,x_{O}(0),Y_{O}(\cdot)), \label{EE1}%
\end{equation}%
\begin{equation}
x_{BA}(\cdot)=W(2,x_{BA}(0),Y_{BA}(\cdot)), \label{EE2}%
\end{equation}%
\begin{equation}
x_{A}(\cdot)=W(10,5x_{BA}(0)+x_{A}(0),5Y_{BA}(\cdot)+Y_{A}(\cdot
))-5x_{BA}(\cdot), \label{EE3}%
\end{equation}
the equations for $x_{B}$ and $x_{AB}$ being symmetric. Then for the
derivatives, whenever they exist, we derive from (\ref{EE1})-(\ref{EE3}) that%

\[
\frac{d}{dt}x_{O}\left(  t\right)  =\left\{
\begin{array}
[c]{cc}%
y_{O}\left(  t\right)  -3 & \text{ if }x_{O}\left(  t\right)  >0,\text{ or if
}x_{O}\left(  t\right)  =0\text{ and }y_{O}\left(  t\right)  -3>0,\\
0 & \text{if }x_{O}\left(  t\right)  =0\text{ and }y_{O}\left(  t\right)
-3\leq0;
\end{array}
\right.
\]%
\[
\frac{d}{dt}x_{AB}\left(  t\right)  =\left\{
\begin{array}
[c]{cc}%
y_{AB}\left(  t\right)  -2 & \text{ if }x_{AB}\left(  t\right)  >0,\text{ or
if }x_{AB}\left(  t\right)  =0\text{ and }y_{AB}\left(  t\right)  -2>0,\\
0 & \text{if }x_{AB}\left(  t\right)  =0\text{ and }y_{AB}\left(  t\right)
-2\leq0;
\end{array}
\right.
\]%
\[
\frac{d}{dt}x_{A}\left(  t\right)  =\left\{
\begin{array}
[c]{cc}%
y_{A}\left(  t\right)  & \text{if }x_{BA}\left(  t\right)  >0,\\
& \\
y_{A}\left(  t\right)  -10\frac{2-y_{AB}\left(  t\right)  }{2} &
\begin{array}
[c]{c}%
\text{if }x_{A}\left(  t\right)  >0,~x_{BA}\left(  t\right)  =0,\text{ }%
y_{AB}\left(  t\right)  -2\leq0;\text{ or }\\
\text{if }x_{A}\left(  t\right)  =x_{BA}\left(  t\right)  =0,~y_{AB}\left(
t\right)  -2\leq0\text{ }\\
\text{and }y_{A}\left(  t\right)  -10\frac{2-y_{AB}\left(  t\right)  }{2}>0
\end{array}
\\
& \\
0 &
\begin{array}
[c]{c}%
\text{if }x_{A}\left(  t\right)  =x_{BA}\left(  t\right)  =0,~y_{AB}\left(
t\right)  -2\leq0\text{ }\\
\text{and }y_{A}\left(  t\right)  -10\frac{2-y_{AB}\left(  t\right)  }{2}%
\leq0;
\end{array}
\end{array}
\right.
\]
the equations for $x_{B}$ and $x_{BA}$ being symmetric.

In fact, for most points in the phase space the above differential equations
are sufficient to describe our dynamical system, since the set of values $t$
where one of the derivatives does not exist is nowhere dense. However, this is
not true for all points, and so we need to use more complicated equations
$\left(  \ref{EE1}-\ref{EE3}\right)  .$

The above relations define what we will call Non-Homogeneous Dynamical System,
NHDS. We will denote this dynamical system by $\Delta\left(  \bar{Y}\right)
,$ since it is driven by the inflow $\bar{Y}.$ This is just a usual
non-autonomous dynamical system. All the non-linear dynamical systems, which
will appear below, correspond to different choices of the flows $\bar{Y}.$

Let now $\bar{x}\left(  t\right)  $ be the trajectory of the NHDS,
corresponding to the initial state $\bar{x}\left(  0\right)  $ and the given
inflows $\bar{Y}\left(  t\right)  .$ We define the closed fluid network
evolution $\Delta$ of the point $\bar{x}\left(  0\right)  $ as the evolution
$\bar{x}\left(  t\right)  $ under any dynamics $\Delta\left(  \bar{Y}\left(
t\right)  \right)  ,$ for which the following relations between the inflows
$\bar{Y}\left(  t\right)  $ and the outflows
\begin{equation}
\bar{Z}(t)\equiv\bar{Z}^{\bar{x},\bar{Y}}(t)=\bar{x}(0)+\bar{Y}(t)-\bar{x}(t)
\label{z}%
\end{equation}
hold:%
\begin{equation}
Y_{O}\left(  t\right)  =Z_{AB}^{\bar{x},\bar{Y}}\left(  t\right)
+Z_{BA}^{\bar{x},\bar{Y}}\left(  t\right)  , \label{020}%
\end{equation}%
\begin{equation}
Y_{A}\left(  t\right)  =\frac{1}{2}Z_{O}^{\bar{x},\bar{Y}}\left(  t\right)  ,
\label{021}%
\end{equation}%
\begin{equation}
Y_{AB}\left(  t\right)  =Z_{A}^{\bar{x},\bar{Y}}\left(  t\right)  ,
\label{022}%
\end{equation}
and symmetric relations for $A$ and $BA$ variables. The set of all solutions
$\bar{Y}\left(  t\right)  $ of the equations $\left(  \ref{020}\right)
$--$\left(  \ref{022}\right)  $ will be denoted by $\mathcal{Y}\left(  \bar
{x}\left(  0\right)  \right)  .$

It is well known that the set $\mathcal{Y}\left(  \bar{x}\left(  0\right)
\right)  $ is not empty for any initial state $\bar{x}(0)$. We reproduce here
the proof, since analogous argument will be used throughout the paper.

\begin{proposition}
For any point $\bar{x}(0)$ there exists at least one trajectory of closed
fluid network evolution $\Delta,$ passing through it.
\end{proposition}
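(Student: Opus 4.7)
The plan is to set up a fixed-point argument on the space of admissible inflow functions. Fix a time horizon $T>0$ and let $\mathfrak{Y}_T$ be the set of $5$-dimensional vector functions $\bar Y(\cdot)$ on $[0,T]$ that are componentwise non-decreasing, vanish at $0$, and are Lipschitz with constants bounded by the service rates $\gamma_O,\gamma_A,\gamma_B,\gamma_{AB},\gamma_{BA}$ (possibly inflated by a harmless factor). The set $\mathfrak{Y}_T$ is convex, and by Arzel\`a--Ascoli it is compact in the topology of uniform convergence on $[0,T]$.

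Next I would define a map $\Phi_{\bar x(0)}:\mathfrak{Y}_T\to\mathfrak{Y}_T$ as follows. Given $\bar Y\in\mathfrak{Y}_T$, use the NHDS equations (\ref{EE1})--(\ref{EE3}) (which, together with the symmetric ones for $x_B,x_{AB}$, define the trajectory $\bar x(\cdot)=\bar x(\cdot;\bar x(0),\bar Y)$ unambiguously through the map $W$). Then define the outflow $\bar Z$ by (\ref{z}) and let $\Phi(\bar Y)$ be the new inflow built from $\bar Z$ via the right-hand sides of (\ref{020})--(\ref{022}) and their symmetric counterparts, i.e.\ $\Phi(\bar Y)_O(t)=Z_{AB}(t)+Z_{BA}(t)$, $\Phi(\bar Y)_A(t)=\tfrac12 Z_O(t)$, $\Phi(\bar Y)_{AB}(t)=Z_A(t)$, etc. A fixed point of $\Phi_{\bar x(0)}$ is exactly an element of $\mathcal{Y}(\bar x(0))$, so finding one proves the proposition.

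I would verify the three hypotheses of Schauder's theorem. First, $\Phi$ lands in $\mathfrak{Y}_T$: each $Z_a$ is non-decreasing, vanishes at $0$, and is Lipschitz with constant $\le\gamma_a$ because a node cannot output faster than its nominal capacity; the linear combinations prescribed by (\ref{020})--(\ref{022}) inherit these bounds (this is where the factor in the definition of $\mathfrak{Y}_T$ is chosen). Second, $\mathfrak{Y}_T$ is convex and compact as noted. Third, and this is the delicate point, I need continuity of $\Phi$. Lemma \ref{L2} (promised in the text) says the map $W$ is Lipschitz from $\mathbb{R}\times C[0,\infty)$ to $C[0,\infty)$; composed with the linear operations in (\ref{EE1})--(\ref{EE3}) this gives continuity of $\bar Y\mapsto \bar x$, hence of $\bar Y\mapsto \bar Z=\bar x(0)+\bar Y-\bar x$, hence of $\Phi$. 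Schauder then produces a fixed point $\bar Y^\ast\in\mathcal{Y}(\bar x(0))$, and the corresponding $\bar x(\cdot;\bar x(0),\bar Y^\ast)$ is the desired trajectory on $[0,T]$.

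The main obstacle is the continuity of $\Phi$ in the presence of the priority rule at $\bar A$ and $\bar B$: the formula (\ref{EE3}) for $x_A$ is written in terms of $W$ applied to the \emph{aggregated} inflow $5Y_{BA}+Y_A$, which is the clever device that hides the priority inside a one-dimensional work-conserving node and lets us invoke the Lipschitz property of $W$ uniformly. Once this observation is accepted, no case analysis on the three branches of $\dot x_A$ is needed for the fixed-point step. Finally, since the construction is valid on every $[0,T]$ and the fixed point on $[0,T]$ can be used as the starting inflow for the next interval, a standard concatenation extends the trajectory to $[0,\infty)$, completing the proof.
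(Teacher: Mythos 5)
Your proposal is correct and follows essentially the same route as the paper: a fixed-point argument for the map taking an inflow $\bar Y$ to the routed outflow built from $\bar Z^{\bar x,\bar Y}$ via (\ref{020})--(\ref{022}), acting on the convex compact set of Lipschitz, non-decreasing functions vanishing at $0$, with continuity supplied by Lemma \ref{L2}. The only cosmetic difference is that you invoke Schauder (which is indeed the technically appropriate statement in $C[0,T]$) where the paper cites Brouwer, and you spell out the concatenation to $[0,\infty)$, which the paper dispatches with the remark that a solution on each bounded interval suffices.
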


\begin{proof}
It suffices to prove that a solution exists in any given bounded time interval
$[0,T]$. For a given $\bar{x}=\bar{x}(0)$, consider the map $G:\bar{Y}%
(\cdot)\rightarrow\bar{Z}^{\bar{x},\bar{Y}}(\cdot).$ It is clearly a
continuous map of $C[0,T]$ into itself. The outflow $\bar{Z}(\cdot)$ is
Lipschitz continuous (by Lemma \ref{L2} below), with Lipschitz constant $L$
independent of $\bar{x}(0)$ or $\bar{Y}(\cdot).$ Hence $G$ takes the convex
compact set of $L$-Lipschitz continuous functions $\bar{Y}(\cdot)$ on $\left[
0,T\right]  $, satisfying $\bar{Y}(0)=0$, into itself. Therefore, by the
Brouwer theorem, the map $G$ has at least one fixed point.
\end{proof}

The system $\Delta$ has the following properties:

\begin{itemize}
\item The total amount of fluid $\left\vert \bar{x}\left(  t\right)
\right\vert =x_{O}\left(  t\right)  +x_{A}\left(  t\right)  +...+x_{B}\left(
t\right)  $ is evidently conserved. We will assume $\left\vert \bar{x}\left(
t\right)  \right\vert =1.$

\item For some initial states $\bar{x}\left(  0\right)  $ the equations
$\left(  \ref{020}\right)  $--$\left(  \ref{022}\right)  $ have multiple
solutions. This is true for all initial conditions $\bar{x}\left(  0\right)
,$ given by $\left(  \ref{051}\right)  .$ The equations $\left(
\ref{020}\right)  -\left(  \ref{022}\right)  $ have in this case three
solutions, so there are at least three dynamical systems, defined by $\bar
{x}\left(  0\right)  .$ The first solution corresponds to the flow rates
$y_{AB}=10,$ $y_{BA}=0,$ the other one is symmetric to the first one:
$y_{BA}=10,$ $y_{AB}=0,$ while the third one is given by the flow rates
$AB\rightarrow O,BA\rightarrow O,A\rightarrow AB,B\rightarrow BA$ all equal to
$\frac{10}{6},$ $O\rightarrow A,$ $O\rightarrow B$ equal to $\frac{3}{2}.$ One
can say, having that property in mind, that at some points the uniqueness of
the trajectory breaks down, so it can happen that for two trajectories
$\bar{x}^{\prime}\left(  t\right)  ,$ $\bar{x}^{\prime\prime}\left(  t\right)
$ we have $\bar{x}^{\prime}\left(  t\right)  =\bar{x}^{\prime\prime}\left(
t\right)  $ for $t\leq t_{0},$ but $\bar{x}^{\prime}\left(  t\right)  \neq
\bar{x}^{\prime\prime}\left(  t\right)  $ for $t>t_{0}.$ That just means that
there are two different (non-autonomous) dynamical systems, which have
trajectories, coinciding for $t\leq t_{0},$ but not for $t>t_{0}.$

\item The curve $\bar{x}\left(  t\right)  \equiv\left\{  1,0,0,0,0\right\}  $
is a trajectory, i.e. the point $\ast=\left\{  1,0,0,0,0\right\}  $ is a fixed
point of $\Delta.$ Its flow rates are constant; their values are: $y_{A}%
=y_{B}=y_{AB}=y_{BA}=\frac{3}{2},$ $y_{O}=3.$ Let us check that with these
flow rates the amount of the fluids in $\bar{A}$ and $\bar{B}$ will stay zero.
Indeed, $\frac{y_{A}}{\gamma_{A}}+\frac{y_{BA}}{\gamma_{BA}}=\frac{3/2}%
{10}+\frac{3/2}{2}=\frac{9}{10}<1.$ So our claim follows from the Proposition
\ref{PP1}. In fact, the nodes $\bar{A}$ and $\bar{B}$ are even underloaded,
which means that in the long run each node gets on the average less fluid than
its serving capacity is. Note also that there are trajectories $\bar{x}\left(
t\right)  $ such that\textbf{ }$\bar{x}\left(  t\right)  =\ast$ for $t\leq
t_{0}$, but $\bar{x}\left(  t\right)  \neq\ast$ for $t>t_{0},$ with any
$t_{0}.$

\item There are (not necessarily uniqueness) points in $\left(  \mathbb{R}%
^{5}\right)  ^{+},$ from where (some) trajectory goes to $\ast.$ One such
family is the set of points $U=\left\{  \bar{x}:0<x_{A}=x_{B}<\frac{1}%
{2},x_{AB}=\text{ }x_{BA}=0,x_{O}=1-x_{A}-x_{B}\right\}  ,$ and the flows are:
$AB\rightarrow O,BA\rightarrow O,A\rightarrow AB,B\rightarrow BA$ all equal to
$\frac{10}{6},$ $O\rightarrow A,$ $O\rightarrow B$ equal to $\frac{3}{2}.$ In
fact, from every point in $U$ two more trajectories start. The values of the
flow rates for one of them for small initial segment of time is given by:
$y_{A}=y_{B}=\frac{3}{2},$ $y_{AB}=10,$ $y_{BA}=0,$ $y_{O}=2.$ It is
describing the situation when the light fluid $A$ flows to the node $\bar{B},$
and the resulting heavy fluid $AB$ in the node $\bar{B}$ blocks the fluid $B$
in the node $\bar{B}$ from exiting. The second solution is obtained by
interchanging $A$ and $B.$\textbf{ }There is a bigger set $\bar{U}\supset U,$
having dimension 2, starting from where one can get to $\ast,$ making on the
way some choice of the trajectory; $\bar{U}=\left\{  \bar{x}:x_{A}%
=x_{B},x_{AB}=\text{ }x_{BA},x_{O}=1-x_{A}-x_{B}-x_{AB}-\text{ }%
x_{BA}\right\}  .$

\item There is a cycle $\mathcal{C}\subset\left(  \mathbb{R}^{5}\right)
^{+},$ such that if $\bar{x}\left(  0\right)  \in\mathcal{C},$ then $\bar
{x}\left(  t\right)  \in\mathcal{C}$ for all $t>0.$ For example, the point
$\left\{  \bar{x}:x_{A}=1,x_{B}=x_{AB}=\text{ }x_{BA}=x_{O}=0\right\}  $
belongs to it. All the points of the cycle $\mathcal{C}$ are uniqueness
points. By $T_{\mathcal{C}}$ we will denote the time to go around the cycle
$\mathcal{C}$ once. We will now describe, just applying the definitions in a
straightforward way, the cycle $\mathcal{C},$ started from $\bar{x}\left(
0\right)  =\left\{  x_{A}=1,x_{B}=x_{AB}=\text{ }x_{BA}=x_{O}=0\right\}  .$
The first part of it happens for $t\in\left[  0,\frac{1}{9}\right]  ,$ during
which time the component $x_{A}\left(  t\right)  $ decays linearly from the
value $1$ to $0.$ The component $x_{B}\left(  t\right)  $ grows linearly, and
$x_{B}\left(  \frac{1}{9}\right)  =\frac{1}{9},$ the component $x_{AB}\left(
t\right)  $ grows linearly, and $x_{AB}\left(  \frac{1}{9}\right)  =\frac
{8}{9},$ while $x_{BA}$ and $x_{O}$ stay zero. On the next segment,
$t\in\left[  \frac{1}{9},1\right]  ,$ the component $x_{AB}\left(  t\right)  $
decays linearly, with rate $1,$ and so it vanishes at $t=1.$ The component
$x_{B}\left(  t\right)  $ continues to grow linearly, with the same speed, so
$x_{B}\left(  1\right)  =1.$ Other three components remain empty. So at time
$1$ we find ourselves accomplishing one half of the cycle. Therefore
$T_{\mathcal{C}}=2.$

\item If $\bar{x}\left(  0\right)  \notin\bar{U},$ then $\bar{x}\left(
t\right)  \in\mathcal{C}$ once $t\geq T,$ where $T=T\left(  \gamma_{O}%
,\gamma_{A},\gamma_{B},\gamma_{AB},\gamma_{BA}\right)  .$ We will prove this
claim under additional assumption that \textrm{dist}$\left(  \bar{x}\left(
0\right)  ,\mathcal{C}\right)  \equiv\inf_{y\in\mathcal{C}}\sum_{i}%
|x_{i}-y_{i}|<\varepsilon$ for some small $\varepsilon,$ since this will be
sufficient for our purposes.
\end{itemize}

\begin{lemma}
\label{LA} \textbf{Local attractor. }There exists an $\varepsilon>0$ such that
for any initial point $\bar{x}\left(  0\right)  $ with \textrm{dist}$\left(
\bar{x}\left(  0\right)  ,\mathcal{C}\right)  <\varepsilon$ we have $\bar
{x}\left(  t\right)  \in\mathcal{C}$ once $t>T=T\left(  \gamma_{O},\gamma
_{A},\gamma_{B},\gamma_{AB},\gamma_{BA}\right)  .$
\end{lemma}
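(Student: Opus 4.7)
The plan is to combine the Lipschitz dependence of trajectories on initial data (supplied by Lemma~\ref{L2}) with the observation that $\mathcal{C}$ passes through the two ``pure'' one-atom states $\bar x_{A}^{*}=\{x_{A}=1,\text{ others}=0\}$ and $\bar x_{B}^{*}=\{x_{B}=1,\text{ others}=0\}$, which act as finite-time attractors of the dynamics on either side of the cycle. The target is to show that within time $T$ depending only on the service rates $\gamma_{\cdot}$, the perturbed trajectory reaches exactly one of these two points; once there it follows $\mathcal{C}$ forever, since all points of $\mathcal{C}$ are uniqueness points.

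By Lipschitz continuity of $W$, after waiting at most one cycle period $T_{\mathcal{C}}=2$ the trajectory lies within $O(\varepsilon)$ of one of the pure states, say $\bar x_{A}^{*}$, and we shift the time origin to that moment. Writing $x_{O}(0)=\eta_{O},\,x_{BA}(0)=\eta_{BA},\,x_{B}(0)=\eta_{B},\,x_{AB}(0)=\eta_{AB}$, and $x_{A}(0)=1-\sum_{a\ne A}\eta_{a}$ with nonnegative $\eta_{a}$ of total size $\le c\varepsilon$, the first substep is to show that the four ``spurious'' coordinates drain to zero in time $O(\varepsilon)$. While $x_{AB}>0$ the $B$-fluid at $\bar B$ is blocked by priority, hence the inflow to $x_{BA}$ is $\dot Z_{B}=0$ while the outflow is $\gamma_{BA}=2$; so $x_{BA}$ is exhausted in time at most $\eta_{BA}/2$, and symmetrically $x_{AB}$ in time at most $\eta_{AB}/2$. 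For $x_{O}$ the net rate is $+1$ while both heavy fluids are still positive (inflow $z_{AB}+z_{BA}=4$, outflow $\gamma_{O}=3$), becomes $-1$ once one of them has drained (inflow drops to $2$), and equals $0$ once $x_{O}=0$; hence $x_{O}$ accumulates at most $O(\varepsilon)$ extra mass during the heavy-fluid drainage and is itself drained in further $O(\varepsilon)$ time. A short case analysis on the order in which $x_{BA},x_{AB},x_{O}$ hit zero shows that by some moment $\tau=O(\varepsilon)$ the trajectory has reached the invariant face $F_{1}:=\{x_{O}=x_{BA}=0,\,x_{A},x_{AB}>0\}$, which is precisely the phase-1 submanifold of $\mathcal{C}$.

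On $F_{1}$ the equations of $\Delta$ become constant-coefficient and agree with the cycle's phase-1 flow, $\dot x_{A}=-9,\ \dot x_{B}=+1,\ \dot x_{AB}=+8$. Hence $x_{A}$ reaches zero at some moment $t_{1}\le\tau+1/9$, and the trajectory enters the two-dimensional invariant face $F_{2}:=\{x_{O}=x_{BA}=x_{A}=0,\,x_{B},x_{AB}>0\}$, where $\dot x_{B}=+1,\,\dot x_{AB}=-1$ drive $x_{AB}$ to zero in an additional time equal to $x_{AB}(t_{1})\le 1$. Total-mass conservation then forces the endpoint to be exactly $\bar x_{B}^{*}\in\mathcal{C}$, and the total elapsed time is bounded by $T=T_{\mathcal{C}}+\tau+t_{1}+x_{AB}(t_{1})$, a constant depending only on the $\gamma_{\cdot}$.

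The main technical obstacle is the piecewise-affine bookkeeping in the draining substep: the flow passes through several sub-regimes depending on the relative sizes of the $\eta_{a}$ and the order in which $x_{BA},x_{AB},x_{O}$ hit zero, and one must verify that no sub-regime stalls or causes a drained component to rebound from zero. The two key inputs are that on each sub-regime the net rate of at least one ``wrong'' coordinate is bounded away from zero by a constant independent of $\varepsilon$, and that any sub-regime in which $x_{O}$ temporarily grows has duration bounded by the drainage time of a heavy fluid (hence $O(\varepsilon)$). For $\varepsilon$ smaller than the distance from $\bar x_{A}^{*}$ to the non-uniqueness set $\bar U$, the draining trajectory also stays away from $\bar U$, so the piecewise-affine dynamics is unambiguous throughout.
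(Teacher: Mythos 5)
There is a genuine gap, in two places. First, your opening step---that after at most one period any trajectory started $\varepsilon$-close to $\mathcal{C}$ lies within $O(\varepsilon)$ of a corner of the cycle---appeals to ``Lipschitz dependence of trajectories on initial data (Lemma~\ref{L2})''. But Lemma~\ref{L2} concerns a single priority node with \emph{prescribed} inflows; in the closed network the inflows are themselves unknowns of a fixed-point problem and solutions are not unique, so continuous (let alone Lipschitz) dependence of the closed-system trajectory on its initial data is not available. Indeed the paper stresses that such continuity fails in general, and the weaker tracking statement (comparison with a reference trajectory lying \emph{on} $\mathcal{C}$) is obtained there only as a corollary of the proof of Lemma~\ref{LA}; you are assuming at the outset essentially the delicate part. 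Second, and more seriously, your closing claim that the dynamics is ``unambiguous throughout'' because the trajectory stays away from $\bar{U}$ is incorrect: the non-uniqueness set is not $\bar{U}$ but the whole family $\left(\ref{051}\right)$ of states with $x_{AB}=x_{BA}=0$ and $x_{A},x_{B}>0$, which comes arbitrarily close to the corner $\bar{x}_{A}^{*}$ (e.g.\ $x_{A}=1-\delta,\ x_{B}=\delta$, rest zero). Such points are typically reached during your draining substep (both heavy fluids exhausted while $x_{B}>0$ persists), and from them the evolution branches into three regimes, including the one where $A$ and $B$ flow simultaneously at rate $\frac{10}{6}$, $x_{O}$ grows, and the regime may be switched at any later moment. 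Since Lemma~\ref{LA} must hold for every admissible inflow $\bar{Y}\in\mathcal{Y}\left(\bar{x}\left(0\right)\right)$, a piecewise-affine bookkeeping that presumes a unique flow does not cover these versions; the paper's proof devotes its main case analysis (options $i)$--$iii)$) precisely to them.

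There are also smaller slips in the drain bookkeeping itself: the ``symmetric'' claim that $x_{AB}$ is exhausted in time $\eta_{AB}/2$ fails once $x_{BA}$ hits zero first, since then the near-unit mass of $A$-fluid feeds $x_{AB}$ at rate $10$ and $x_{AB}$ grows at rate $8$; likewise, near $\bar{x}_{A}^{*}$ the coordinate $x_{B}$ does not drain at all but grows at rate about $1$ (as it does on the cycle), which contradicts your statement that all four spurious coordinates vanish before you land on the face $F_{1}$, on which you then require $x_{AB}>0$. These last points are repairable (on $\{x_{O}=x_{BA}=x_{A}=0\}$ mass conservation still funnels the state to the corner with $x_{B}=1$), but the two issues above are not: to close the argument you would need either to prove the one-period closed-loop tracking estimate directly, or to dispense with it and argue as the paper does, using only the structural facts that near $\mathcal{C}$ at least one heavy fluid and $x_{O}$ are small, followed by a case analysis over all admissible branches.
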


\begin{proof}
Since for every point $\bar{c}$ on $\mathcal{C}$ we have either $c_{AB}=0$ or
$c_{BA}=0,$ we see that at least one of the coordinates $x_{AB}\left(
0\right)  $ or $x_{BA}\left(  0\right)  $ has to be less than $\varepsilon$.
Let us start with the case that both of them are positive, and we can then
assume that $x_{AB}\left(  0\right)  <\varepsilon,$ $x_{BA}\left(  0\right)
\geq x_{AB}\left(  0\right)  .$ Then for a short initial segment of time,
$\left[  0,t_{1}\right]  ,$ both coordinates $x_{AB}\left(  0\right)  $ and
$x_{BA}\left(  0\right)  $ decay with the same constant rate $2,$ until
$x_{AB}$ will vanish, which will happen at the moment $t_{1}=\frac
{x_{AB}\left(  0\right)  }{2}<\frac{\varepsilon}{2}.$ Also, for every point
$\bar{c}$ on $\mathcal{C}$ we have $c_{O}=0.$ Therefore $x_{O}\left(
0\right)  <\varepsilon,$ hence $x_{O}\left(  t_{1}\right)  <\varepsilon
+\frac{\varepsilon}{2}=\frac{3\varepsilon}{2},$ since the total arrival rate
to $\bar{O}$ does not exceed $4$, while its service rate equals $3$.

Now consider the case when $x_{BA}\left(  t_{1}\right)  >0.$ While
$x_{BA}\left(  t\right)  $ keeps being positive, \textbf{ }there is no flow
from $\bar{A}$ to $\bar{B}$ and, as a result, no flow from $\bar{B}$ to
$\bar{O}$. So if $x_{BA}\left(  t\right)  $ stays positive for $t\in\left[
t_{1},t_{2}=t_{1}+x_{O}\left(  t_{1}\right)  \right]  ,$ then $x_{O}\left(
t\right)  $ becomes $0$ at $t=t_{2}$ -- since it decays with rate $1$ when
$x_{AB}=0$ and $x_{BA}>0$ -- and will stay $0$ while $x_{BA}\left(  t\right)
$ is positive. The coordinate $x_{AB}\left(  t\right)  $ stays zero as well.
When finally $x_{BA}\left(  t\right)  $ vanishes for the first time, at
$t_{3}\geq t_{2},$ the value $x_{B}\left(  t_{3}\right)  $ has to be already
zero. Which means that $x_{A}\left(  t_{3}\right)  =1,$ so $\bar{x}\left(
t_{3}\right)  \in\mathcal{C}.$

In the remaining case, the first point, $t_{3},$ where the coordinate
$x_{BA}\left(  t\right)  $ vanishes, belongs to the segment $\left[
t_{1},t_{2}=t_{1}+x_{O}\left(  t_{1}\right)  \right]  .$ (That case contains
the situation when $x_{BA}\left(  t_{1}\right)  =0.$) Since the time
$t_{3}\leq2\varepsilon,$ and since initially the point $\bar{x}\left(
0\right)  $ was close to $\mathcal{C},$ the same is true for the point
$\bar{x}\left(  t_{3}\right)  .$ Since $x_{AB}\left(  t_{3}\right)
=x_{BA}\left(  t_{3}\right)  =0,$ we conclude that one of the two coordinates
-- either $x_{A}\left(  t_{3}\right)  ,$ or $x_{B}\left(  t_{3}\right)  $ --
should be $\varepsilon$-close to $1,$ while the remaining one, as well as
$x_{O}\left(  t_{3}\right)  ,$ should be $\varepsilon$-close to $0.$ Suppose
$x_{A}\left(  t_{3}\right)  \sim1.$ Note that the evolution of the point
$\bar{x}\left(  t_{3}\right)  $ is not uniquely defined if both $x_{A}\left(
t_{3}\right)  \ $and $x_{B}\left(  t_{3}\right)  $ are positive. As was
explained above, there are, in fact, three options\textbf{ }to choose from:

$i)$ If the choice is that the fluid $A$ \textquotedblleft goes
first\textquotedblright, then after a small time the coordinate $x_{O}$ will
vanish, and after the time of order $\frac{1}{9}$ the coordinate $x_{A}$ will
vanish as well, and we find ourselves on $\mathcal{C}.$

$ii)$ If the fluid $B$ \textquotedblleft goes first\textquotedblright, then
after a small time (of order $\varepsilon$) first $x_{B},$ and then $x_{BA}$
will vanish, and the fluid $x_{A}$ -- which is in the amount of order $1,$
will start to decay, so we find ourselves in the situation just considered.

$iii)$ The remaining option is when both fluids $A$ and $B$ \textquotedblleft
go simultaneously\textquotedblright\ into, resp., $AB$ and $BA$, at the rate
$\frac{10}{6}.$ As was explained above, the result will be that the levels of
the fluids $A$ and $B$ will decay with the rate $\frac{10}{6}-\frac{3}%
{2}=\frac{1}{6},$ while the level $x_{O}$ will be correspondingly raising. But
no later than the time $6\varepsilon$ the level $x_{B}$ will vanish, and since
$x_{AB}$ and $x_{BA}$ both were already zero, we again are in the situation
considered above, with $x_{O}$ being of order $\varepsilon.$ In fact, at any
time when the system is in phase $iii)$, it has three options: to pass
(forever!) to the phase $i)$ or $ii)$, or to stay in phase $iii)$. The above
arguments in $i)-iii)$ stays valid in this case and we get the required assertion.
\end{proof}

\textbf{Warning. }Our analysis shows that the time $T$ needed to reach the
cycle does not vanish with $\varepsilon$.

\begin{itemize}
\item \textbf{Continuity. }The above proof implies the following (weaker)
substitute for the property of the continuous dependence of the trajectory on
the initial condition. Let $\bar{x}\left(  0\right)  $ and $\bar{c}\left(
0\right)  $ be two initial points, and suppose that $\left\vert \bar{x}\left(
0\right)  -\bar{c}\left(  0\right)  \right\vert <\varepsilon,$ and $\bar
{c}\left(  0\right)  \in\mathcal{C}.$ Then there exists a constant $C,$ such
that for any $t$ and for any version of the $x$-trajectory we have $\left\vert
\bar{x}\left(  t\right)  -\bar{c}\left(  t\right)  \right\vert <C\varepsilon.$
The condition $\bar{c}\left(  0\right)  \in\mathcal{C}$ is, evidently,
crucial; without it our statement fails.

\item The cycle $\mathcal{C}$ depends on $\gamma_{O},\gamma_{A},\gamma
_{B},\gamma_{AB},\gamma_{BA},$ and is non-trivial for our choice of these
parameters. For some other values of $\gamma_{O},\gamma_{A},\gamma_{B}%
,\gamma_{AB},\gamma_{BA}$ it is reduced to the point $\ast,$ which then is a
stable fixed point.

\item All the above properties of our system would still be valid if we
perturb slightly the vector $\bar{\gamma}=\left\{  \gamma_{O},\gamma
_{A},\gamma_{B},\gamma_{AB},\gamma_{BA}\right\}  $ of the parameters around
the point $\left\{  3,10,10,2,2\right\}  $ of our choice (even if the
perturbation does not respect the symmetry $\gamma_{A}=\gamma_{B},$
$\gamma_{AB}=\gamma_{BA}$).

\item Let $\bar{x}\left(  t\right)  \subset\mathcal{C}$ be a cyclic
trajectory. Let us denote by $\bar{\lambda}^{\mathcal{C}}\left(  t\right)  $
the corresponding (periodic) function of the inflows. The stability property
just formulated implies immediately that our system in the cyclic regime is
underloaded. In other words, for $\left\{  \gamma_{O},\gamma_{A},\gamma
_{B},\gamma_{AB},\gamma_{BA}\right\}  =\left\{  3,10,10,2,2\right\}  $ there
exists $\delta>0$ such that we have
\begin{equation}
\frac{1}{\gamma_{O}}\int_{0}^{T_{\mathcal{C}}}\lambda_{O}^{\mathcal{C}}\left(
t\right)  dt<\left(  1-\delta\right)  T_{\mathcal{C}},\label{0013}%
\end{equation}%
\begin{equation}
\frac{1}{\gamma_{A}}\int_{0}^{T_{\mathcal{C}}}\lambda_{A}^{\mathcal{C}}\left(
t\right)  dt+\frac{1}{\gamma_{BA}}\int_{0}^{T_{\mathcal{C}}}\lambda
_{BA}^{\mathcal{C}}\left(  t\right)  dt<\left(  1-\delta\right)
T_{\mathcal{C}},\label{0015}%
\end{equation}

\noindent and the same relation for the $\bar{B}$ node.
\end{itemize}

We now want to consider the \textquotedblleft open\textquotedblright\ system
$\Delta^{o},$ which is obtained from $\Delta$ by the following construction:
every exiting fluid now goes not to the corresponding node, but leaves the
system. On the other hand, there is some inflow, $\bar{\lambda}\left(
t\right)  $, entering the system from the outside.

For the future use we introduce now the following compact subset
$\mathcal{K}\equiv\mathcal{K}\left(  \bar{\gamma}\right)  \subset\left(
\mathbb{R}^{5}\right)  ^{+}:$%
\[
\mathcal{K}=\left\{  \bar{x}\in\left(  \mathbb{R}^{5}\right)  ^{+}%
:\max\left\{  \frac{1}{2}x_{AB}+\frac{1}{10}x_{B},\frac{1}{2}x_{BA}+\frac
{1}{10}x_{A},\frac{1}{3}x_{O}\right\}  <10\right\}  .
\]

\begin{lemma}
\label{l3} Let the open system $\Delta^{o}$ be in the state $\bar{x}\left(
0\right)  \in\left(  \mathbb{R}^{5}\right)  ^{+}.$ Consider the functional $L$
on $\mathbb{R}^{5},$ given by
\begin{equation}
L\left(  \bar{x}\right)  =\left\{
\begin{array}
[c]{cc}%
0 & \text{ if }\bar{x}\in\mathcal{K},\\
\max\left\{  \frac{1}{2}x_{AB}+\frac{1}{10}x_{B},\frac{1}{2}x_{BA}+\frac
{1}{10}x_{A},\frac{1}{3}x_{O}\right\}  & \text{ otherwise.}%
\end{array}
\right.  \label{037}%
\end{equation}
Suppose that $L\left(  \bar{x}\left(  0\right)  \right)  >10.$ Then there
exists a constant $C>0$, such that for all external flow rates $\left\{
\bar{\lambda}\left(  t\right)  ,t\in\left[  0,T_{\mathcal{C}}\right]
\right\}  $ which are close enough to $\left\{  \bar{\lambda}^{\mathcal{C}%
}\left(  t\right)  ,t\in\left[  0,T_{\mathcal{C}}\right]  \right\}  $ in the
$L^{1}$ distance, we have
\[
L\left(  \bar{x}\left(  0\right)  \right)  -L\left(  \bar{x}\left(
T_{\mathcal{C}}\right)  \right)  >C.
\]

\end{lemma}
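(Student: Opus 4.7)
The plan is to use $L$ as a Lyapunov function: the underload relations $(\ref{0013})$--$(\ref{0015})$ guarantee that over one cycle the cumulative work arriving at each node is strictly less than the total cycle length $T_{\mathcal{C}}=2$, while the hypothesis $L(\bar{x}(0))>10\gg T_{\mathcal{C}}$ forces whichever node carries the maximum workload to stay busy throughout $[0,T_{\mathcal{C}}]$, and therefore to drain a definite surplus.

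Concretely, I would rewrite $L$ in terms of the three nodal workloads $W_O:=x_O/\gamma_O$, $W_A:=x_A/\gamma_A+x_{BA}/\gamma_{BA}$, $W_B:=x_B/\gamma_B+x_{AB}/\gamma_{AB}$, so that $L(\bar{x})=\max(W_O,W_A,W_B)$ off $\mathcal{K}$. The work-conserving relations $(\ref{EE4})$--$(\ref{EE5})$ collapse into the per-node bookkeeping identity $W_a(t)=W_a(0)+I_a(t)-O_a(t)$, where $I_a(t)$ is the time-integrated workload inflow to node $a$ (for instance $I_A(t)=\int_0^t(\lambda_A/\gamma_A+\lambda_{BA}/\gamma_{BA})\,ds$) and $O_a(t)=\int_0^t\mathbf{1}[W_a(s)>0]\,ds$. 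The underload relations read exactly $I_a^{\mathcal{C}}(T_{\mathcal{C}})<(1-\delta)T_{\mathcal{C}}$ for each $a$, and since $I_a$ is a bounded linear functional of $\bar{\lambda}$ with coefficients $1/\gamma_\ast$, the $L^1$-closeness of $\bar{\lambda}$ to $\bar{\lambda}^{\mathcal{C}}$ upgrades this to $I_a(T_{\mathcal{C}})\le(1-\delta)T_{\mathcal{C}}+\epsilon$, with $\epsilon>0$ freely adjustable by shrinking the $L^1$-neighborhood.

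I would then split on the initial workload at each node. If $W_a(0)>T_{\mathcal{C}}$, then since workload drains at rate at most one we have $W_a(t)\ge W_a(0)-t>0$ throughout $[0,T_{\mathcal{C}}]$, so $O_a(T_{\mathcal{C}})=T_{\mathcal{C}}$ and $W_a(T_{\mathcal{C}})<W_a(0)-\delta T_{\mathcal{C}}+\epsilon$. If instead $W_a(0)\le T_{\mathcal{C}}$, then merely bounding $O_a\ge 0$ gives $W_a(T_{\mathcal{C}})\le(2-\delta)T_{\mathcal{C}}+\epsilon$, which with $T_{\mathcal{C}}=2$ is strictly below $10$ as soon as $\epsilon$ is small. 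Now let $b^*$ maximize $W_b(T_{\mathcal{C}})$. Either $W_{b^*}(T_{\mathcal{C}})\le 10$, in which case $\bar{x}(T_{\mathcal{C}})\in\mathcal{K}$ and $L$ has dropped by more than $10$, or $W_{b^*}(T_{\mathcal{C}})>10$, which excludes the small-workload case for $b^*$ and forces $W_{b^*}(0)>T_{\mathcal{C}}$; the first-case bound then yields $L(\bar{x}(T_{\mathcal{C}}))=W_{b^*}(T_{\mathcal{C}})<W_{b^*}(0)-\delta T_{\mathcal{C}}+\epsilon\le L(\bar{x}(0))-\delta T_{\mathcal{C}}+\epsilon$. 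Choosing the $L^1$-neighborhood small enough that $\epsilon<\delta T_{\mathcal{C}}/2$ gives $L(\bar{x}(0))-L(\bar{x}(T_{\mathcal{C}}))>C:=\delta T_{\mathcal{C}}/2$.

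The only delicate point is the per-node work-conservation identity at the priority nodes $\bar{A}$ and $\bar{B}$, where the two fluids share one server: but this is precisely the content of $(\ref{EE4})$, which gives $z_A/\gamma_A+z_{BA}/\gamma_{BA}=1$ whenever $\bar{A}$ is non-empty, regardless of which fluid is currently being served. The threshold $10$ in the definition of $\mathcal{K}$ plays no structural role beyond being comfortably larger than $T_{\mathcal{C}}=2$; this slack is what separates the busy-throughout regime from the small-workload regime, making the dichotomy above clean and the constant $C$ uniform in the initial point.
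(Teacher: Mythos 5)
Your proof is correct and takes essentially the same route as the paper, whose own argument is just the two-line remark that the claim follows from the underload relations (\ref{0013})--(\ref{0015}) for the cyclic inflows and then for nearby inflows by continuity -- precisely the per-node workload bookkeeping plus the $L^{1}$-$\epsilon$ slack that you spell out. One cosmetic remark: split the final dichotomy at $W_{b^{*}}(T_{\mathcal{C}})<10$ rather than $\le 10$, since $\mathcal{K}$ is defined by strict inequalities; the boundary case $W_{b^{*}}(T_{\mathcal{C}})=10$ is then covered by your second branch, which already rules out the small-workload alternative and yields the drop $\delta T_{\mathcal{C}}-\epsilon$.
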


\begin{proof}
For the case of the inflows with rates $\bar{\lambda}^{\mathcal{C}}\left(
t\right)  $ our statement follows from the underload property, due to the
relations (\ref{0013})-(\ref{0015}). Therefore it holds for inflows that are
close enough, by continuity.
\end{proof}

\subsection{$M$ coupled fluid networks \label{S23}}

Let $\Delta_{M}$ be the dynamical system on $\left(  \mathbb{R}^{5M}\right)
^{+},$ obtained from $M$ copies of $\Delta,$ interconnected in the mean-field
manner, as follows.\textbf{ } Each node $\bar{O}_{i},$ $i=1,...,M$, is
connected to \textit{all }of the nodes $\bar{A}_{j},$ $\bar{B}_{j},$
$j=1,...,M$, and its fluid, of the amount $x_{O,i},$ flows into the nodes
$\bar{A}_{j},$ $\bar{B}_{j}$ in equal amounts. The rate of each of these flows
is now $M^{-1}\frac{\gamma_{O}}{2}=\frac{3}{2M}$, which means, as before, that
three units of the fluid $x_{O,i}$ leave $\bar{O}_{i}$ per unit time, so each
of the set $\left\{  \bar{A}_{j}\right\}  $ and $\left\{  \bar{B}_{j}\right\}
$ gets $\frac{3}{2}$ units of incoming fluids, $A$ and $B,$ per unit time. In
a similar way, the fluid $A$ from every node $\bar{A}_{i}$ is splitted among
all nodes $\left\{  \bar{B}_{j}\right\}  ,$ so the rate of every individual
flow $\bar{A}_{i}\rightarrow\bar{B}_{j}$ is $M^{-1}\gamma_{A}=\frac{10}{M},$
and so on. The priorities are kept the same: if the node $\bar{A}_{i},$ say,
is in the state with both amounts $x_{A,i}$ and $x_{BA,i}$ positive, then
\textit{the fluid }$BA$\textit{ goes first}.

Again, we first describe the open network, i.e. the Non-Homogeneous Dynamical
System. We will not need the general case here; it is enough for us to
consider the net inflow defined by the same function $\bar{Y}(t)\in\left(
\mathbb{R}^{5}\right)  ^{+}$ as in the previous subsection; every node then
gets $\frac{1}{M}$-th part of the inflow, so, for example, for each
$i=1,...,M$ the net inflow function of the node $\bar{O}_{i}$ equals to
$\frac{1}{M}Y_{O}\left(  t\right)  .$ Once we are also given the initial
values $\bar{x}\in\left(  \mathbb{R}^{5M}\right)  ^{+}$ of the fluid levels,
the net outflows $Z_{a,i}^{\bar{x},\bar{Y}}\left(  t\right)  ,$ $a\in\left\{
O,A,B,AB,BA\right\}  ,$ $i=1,...,M,$ are defined as above, see $\left(
\ref{z}\right)  .$

Passing to the closed system, instead of relations (\ref{020})-(\ref{022}), we
impose the relations\textbf{ }%
\begin{equation}
Y_{O}\left(  t\right)  =\sum_{i=1}^{M}(Z_{AB,i}^{\bar{x},\bar{Y}}\left(
t\right)  +Z_{BA,i}^{\bar{x},\bar{Y}}\left(  t\right)  ), \label{N020}%
\end{equation}%
\begin{equation}
Y_{A,j}\left(  t\right)  =\sum_{i=1}^{M}\frac{1}{2}Z_{O,i}^{\bar{x},\bar{Y}%
}\left(  t\right)  , \label{N021}%
\end{equation}%
\begin{equation}
Y_{AB,j}\left(  t\right)  =\sum_{i=1}^{M}Z_{A,i}^{\bar{x},\bar{Y}}\left(
t\right)  . \label{N022}%
\end{equation}
Again, all the functions $Y_{\ast}(t)$ and $Z_{\ast}(t)$ are Lipschitz
continuous and, hence, differentiable almost everywhere. These derivatives
will be denoted, again, by $y_{\ast}(t)$ and $z_{\ast}(t)$. For each node
(say, $\bar{A}_{i}$) of $\Delta_{M}$, the evolution of the amount of fluids
$x_{A,i}(t)$ and $x_{BA,i}(t)$ is found from the corresponding inflows
$\frac{1}{M}Y_{A}(t)$ and $\frac{1}{M}Y_{BA}(t)$ and initial states
$x_{A,i}(0)$ and $x_{BA,i}(0)$ in the same manner as for the network $\Delta$.

We will consider trajectories $\bar{x}^{M}\left(  t\right)  \in\left(
\mathbb{R}^{5M}\right)  ^{+}$ with $\left\vert \bar{x}^{M}\left(  t\right)
\right\vert =M,$ i.e. we have the unit amount of fluid per elementary system
$\Delta\subset\Delta_{M}.$

For every $M$ we will define now another dynamical system, acting on $\left(
\mathbb{R}^{5}\right)  ^{+}.$ This one, also denoted by $\Delta_{M},$ will be
of central importance for the present paper. However, it will be not the usual
dynamical system. It will be defined not as a group of transformations of
$\left(  \mathbb{R}^{5}\right)  ^{+},$ but directly on the (sub)set
$\mathcal{M}_{M}$ of some atomic probability measures on $\left(
\mathbb{R}^{5}\right)  ^{+}.$ This transformation will not be linear on
$\mathcal{M},$ and for that reason we will call it \textit{non-linear
dynamical system.} In fact, it is just a convenient representation of our
initial dynamical system on $\left(  \mathbb{R}^{5M}\right)  ^{+},$ suitable
for passing to the limit $M\rightarrow\infty.$ The construction is very simple:

To every point $\bar{x}^{M}=\left\{  \left(  x_{O}^{i},x_{\bar{A}}^{i}=\left(
x_{A}^{i},x_{BA}^{i}\right)  ,x_{\bar{B}}^{i}=\left(  x_{B}^{i},x_{AB}%
^{i}\right)  \right)  ,i=1,...,M\right\}  \in\left(  \mathbb{R}^{5M}\right)
^{+}$ we can assign a probability measure on $\left(  \mathbb{R}^{5}\right)
^{+}$ in the following way: we put $\mu_{O}=\frac{1}{M}\sum_{i=1}^{M}%
\delta_{x_{O}^{i}},$ $\mu_{\bar{A}}=\frac{1}{M}\sum_{i=1}^{M}\delta
_{x_{\bar{A}}^{i}},$ $\mu_{\bar{B}}=\frac{1}{M}\sum_{i=1}^{M}\delta
_{x_{\bar{B}}^{i}},$ and we define $\mu\equiv\mu_{\bar{x}^{M}}=\mu_{O}%
\times\mu_{\bar{A}}\times\mu_{\bar{B}}\in\mathcal{M}_{M}.$ Now, if the point
$\bar{x}^{M}$ evolves according to $\Delta_{M},$ so is the measure $\mu;$
moreover, the evolution $\mu\left(  t\right)  $ of $\mu$ is well defined and
does not depend on the choice of the preimage, so if $\mu_{\bar{x}_{1}^{M_{1}%
}}=\mu_{\bar{x}_{2}^{M_{2}}},$ then $\mu_{\bar{x}_{1}^{M_{1}}}\left(
t\right)  =\mu_{\bar{x}_{2}^{M_{2}}}\left(  t\right)  .$

The set of measures $\mathcal{M}_{M}$ on $\left(  \mathbb{R}^{5}\right)  ^{+}%
$consists of all measures $\mu,$ having the properties

\begin{enumerate}
\item $\mu$ is a product,
\begin{equation}
\mu\equiv\left(  \mu_{O},\mu_{\bar{A}},\mu_{\bar{B}}\right)  \equiv\mu
_{O}\times\mu_{\bar{A}}\times\mu_{\bar{B}}\equiv\Pi_{\bar{o}}\left[
\mu\right]  \times\Pi_{\bar{A}}\left[  \mu\right]  \times\Pi_{\bar{B}}\left[
\mu\right]  , \label{0012}%
\end{equation}
of probability measures on $\mathbb{R}^{1}=\left\{  x_{O}\right\}  ,$ resp.
$\mathbb{R}^{2}=\left\{  x_{A},x_{BA}\right\}  $ and $\mathbb{R}^{2}=\left\{
x_{B},x_{AB}\right\}  .$ Here we denote by $\Pi_{\ast}$-s the various
projections (or marginals),

\item we have $\int\left\vert \bar{x}\right\vert d\mu=\int x_{O}d\mu_{O}%
+\int\left(  x_{A}+x_{BA}\right)  d\mu_{\bar{A}}+\int\left(  x_{B}%
+x_{AB}\right)  d\mu_{\bar{B}}=1,$

\item we have $\mu_{O}=\frac{1}{M}\sum_{i=1}^{M}\delta_{\bar{x}_{i}}$ for some
(not necessarily distinct) $\bar{x}_{i}\in\mathbb{R}^{1},$ $i=1,...,M,$
likewise $\mu_{\bar{A}}=\frac{1}{M}\sum_{i=1}^{M}\delta_{\bar{x}_{i}^{\prime}%
},$ $\mu_{\bar{B}}=\frac{1}{M}\sum_{i=1}^{M}\delta_{\bar{x}_{i}^{\prime\prime
}},$ $\bar{x}_{i}^{\prime},\bar{x}_{i}^{\prime\prime}\in\mathbb{R}^{2}.$
\end{enumerate}

Properties of $\Delta_{M}:$

\begin{itemize}
\item The set of the fixed points of $\Delta_{M}$ consists of measures
$\mu=\Pi_{\bar{o}}\left[  \mu\right]  \times\delta_{x_{\bar{A}}=0}\times
\delta_{x_{\bar{B}}=0},$ where $\delta_{x_{\bar{A}}=0}$ and $\delta
_{x_{\bar{B}}=0}$ are unit atoms at the origin, while $\Pi_{\bar{o}}\left[
\mu\right]  $ is the projection on the coordinate $x_{O}.$ In words, that
means that all the fluid stays permanently in the $O$-nodes (in arbitrary
amounts, adding up to $M$). This fact follows from Proposition \ref{PP1}.

\item If $\mu\left(  0\right)  =\delta_{\bar{x}}\in\mathcal{M}_{M},$ then
$\mu\left(  t\right)  =\delta_{\bar{x}\left(  t\right)  },\ $where $\bar
{x}\left(  t\right)  $ is the trajectory of $\Delta$ with $\bar{x}\left(
0\right)  =$ $\bar{x}.$ In particular if $\bar{x}\in\mathcal{C},$ then
$\bar{x}\left(  t\right)  \in\mathcal{C}.$

\item Note that the dynamics $\Delta_{M}$ on $\mathcal{M}_{M}$ is
\textquotedblleft non-linear\textquotedblright, in the sense that in general
in the situation when $\mu\left(  0\right)  =\alpha\mu^{\prime}\left(
0\right)  +\left(  1-\alpha\right)  \mu^{\prime\prime}\left(  0\right)
,~~\mu\left(  0\right)  ,\mu^{\prime}\left(  0\right)  ,\mu^{\prime\prime
}\left(  0\right)  \in\mathcal{M}_{M},~~0<\alpha<1,$ we have $\mu\left(
t\right)  \neq\alpha\mu^{\prime}\left(  t\right)  +\left(  1-\alpha\right)
\mu^{\prime\prime}\left(  t\right)  $ for $t>0.$ (There is nothing strange or
unusual in this relation, since the dynamical system in question is itself
defined on the space of measures as its state space, and not on the $\left(
\mathbb{R}^{5}\right)  ^{+}$.)

\item Let $\rho_{KROV}$ be the Kantorovich-Rubinstein-Ornstein-Vaserstein
distance on the probability measures on $\left(  \mathbb{R}^{5}\right)  ^{+},$
corresponding to the metric $\rho\left(  \bar{x},\bar{y}\right)  =\sum
_{i=1}^{5}\left\vert x_{i}-y_{i}\right\vert $ on $\left(  \mathbb{R}%
^{5}\right)  ^{+}.$ \textbf{ }{\small (We recall briefly, that if }$\mu,$%
$\mu^{\prime}${\small are two probability measures on a metric space }$\left(
X,\rho\right)  ${\small , then }$\rho_{KROV}\left(  \mu,\mu^{\prime}\right)
=\inf_{\kappa}\int\rho\left(  x,x^{\prime}\right)  ~d\kappa\left(
x,x^{\prime}\right)  ,${\small where the }$\inf$ {\small is taken over all
probability measures }$\kappa$ {\small on }$X\times X,$ {\small such that
}$\kappa\left(  A\times X\right)  =\mu\left(  A\right)  ,$ $\kappa\left(
X\times A\right)  =\mu^{\prime}\left(  A\right)  .${\small )} Suppose that the
initial measure $\mu\left(  0\right)  $ is close enough to the cycle
$\mathcal{C},$ which means that for some $x\in\mathcal{C}$ we have
\begin{equation}
\rho_{KROV}\left(  \mu\left(  0\right)  ,\delta_{x}\right)  <\varepsilon
.\label{052}%
\end{equation}
Let $Y\left(  t\right)  \in\mathcal{Y}\left(  \mu\left(  0\right)  \right)  $
be one of the possible net inflows, corresponding to the initial state
$\mu\left(  0\right)  ,$ and $\mu\left(  t\right)  $ be the corresponding
evolution. Then there exists the time $T=T\left(  M,\gamma_{O},\gamma
_{A},\gamma_{B},\gamma_{AB},\gamma_{BA}\right)  ,$ such that for all $t\geq T$
we have $\mu\left(  t\right)  =\delta_{\bar{x}\left(  t\right)  }$ with
$\bar{x}\left(  t\right)  \in\mathcal{C}\mathbf{.}$ Moreover, let us define
the compact $\Lambda_{K}$ by%
\[
\Lambda_{K}=\left\{  \bar{x}=\left(  x_{O},x_{A},...,x_{B}\right)
:x_{O}<K,x_{A}<K,...,x_{B}<K\right\}  ,
\]
and let $\mu\Bigm|_{K}\left(  t\right)  $ be the evolution of the restriction
$\mu\left(  0\right)  \Bigm|_{K}$ under the same evolution, defined by the
flow rates $Y\left(  t\right)  \in\mathcal{Y}\left(  \mu\left(  0\right)
\right)  .$ (Once the inflows $Y\left(  t\right)  $ are fixed, the evolution
becomes the usual (non-autonomous) \textit{linear }dynamical system, so we can
apply the dynamics to the summand $\mu\left(  0\right)  \Bigm|_{K}$ of the
measure $\mu\left(  0\right)  .$) We choose $K$ to be large enough, so that
from $\rho_{KROV}\left(  \mu\left(  0\right)  ,\delta_{x}\right)
<\varepsilon$ for some $x\in\mathcal{C}$ it follows that
\begin{equation}
\mu\left(  0\right)  \left[  K\right]  >1-\varepsilon.\label{053}%
\end{equation}
We now claim the following:
\end{itemize}

\begin{lemma}
\label{Pr4} Under conditions $\left(  \ref{052}\right)  $ and $\left(
\ref{053}\right)  ,$ there exists the time moment $T^{\prime}=T^{\prime
}\left(  K,\gamma_{O},\gamma_{A},\gamma_{B},\gamma_{AB},\gamma_{BA}\right)  ,$
such that for every $t\geq T^{\prime}$ (\textit{uniformly in }$M$ !) there
exists a point $\bar{x}\left(  t\right)  ,$ such that $\mu\Bigm|_{K}\left(
t\right)  $ is just the atom at that point: $\mu\Bigm|_{K}\left(  t\right)
=c\delta_{\bar{x}\left(  t\right)  }.$ Moreover, \textrm{dist}$\left(  \bar
{x}\left(  t\right)  ,\mathcal{C}\right)  <\tilde{c},$ and  $c=\mu\left(
0\right)  \left[  K\right]  \rightarrow1$ while $\tilde{c}\rightarrow0$ as
$K\rightarrow\infty.$
\end{lemma}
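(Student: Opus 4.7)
The plan is to reduce the statement to a coalescence property for one-dimensional Skorokhod reflections driven by the fixed inflow $Y(t)$. Once $Y(t)$ is fixed, the evolution of each point in $\Lambda_K$ under $\Delta\left(\bar Y\right)$ is deterministic and non-autonomous, and $\mu\Bigm|_K(t)$ is simply the pushforward of $\mu\Bigm|_K(0)$. Since $\mu(0)$ is a product measure by $\left(\ref{0012}\right)$, establishing $\mu\Bigm|_K(t)=c\,\delta_{\bar x(t)}$ reduces to showing that for this fixed $Y$, any two initial values of each marginal coordinate ($x_O$, $x_{\bar A}$, $x_{\bar B}$) yield identical trajectories after a uniform time $T'$.

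The essential input is the Skorokhod representation $\left(\ref{EE1}\right)$--$\left(\ref{EE3}\right)$. For any scalar reflection $x(t)=W(\gamma,x(0),Y(\cdot))=V(t)+U(t)$ with $V(t)=x(0)+Y(t)-\gamma t$, two initial values $x_1(0)<x_2(0)$ give virtual levels differing by the constant $x_1(0)-x_2(0)$; once the larger trajectory first hits zero at some time $\tau$, both running infima $\inf_{s\le t}V_i(s)$ are attained simultaneously for $t\ge\tau$, so $U_2(t)-U_1(t)=x_2(0)-x_1(0)$ and $x_1(t)\equiv x_2(t)$ thereafter. Applying this coordinatewise: $x_O$ coalesces via $\left(\ref{EE1}\right)$, the "heavy" coordinates $x_{BA},x_{AB}$ coalesce via $\left(\ref{EE2}\right)$, and the "light" coordinates $x_A,x_B$ then coalesce through $\left(\ref{EE3}\right)$, since the combined quantity $5x_{BA}(t)+x_A(t)$ is itself a Skorokhod reflection with inflow $5Y_{BA}+Y_A$ and $5x_{BA}(t)$ has already been shown to coalesce.

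One must next bound uniformly in $M$ the time $\tau$ at which each one-dimensional reflection started from a value $\le K$ first hits zero. The hypothesis $\rho_{KROV}(\mu(0),\delta_x)<\varepsilon$ for some $x\in\mathcal C$ implies that the self-consistent inflow $Y(t)\in\mathcal Y(\mu(0))$ is $L^1$-close to the cyclic inflow $\bar\lambda^{\mathcal C}(t)$ on every period $[kT_{\mathcal C},(k+1)T_{\mathcal C}]$, with an error controlled by $\varepsilon$ via the fixed-point characterization of $\mathcal Y$ and the continuity property following Lemma~\ref{LA}. The strict underload inequalities $\left(\ref{0013}\right)$--$\left(\ref{0015}\right)$ then yield that for each $a\in\{O,AB,BA\}$ the virtual level $V_a(t)$ decreases by a uniformly positive amount per cycle period, so a starting value $\le K$ is driven below zero within $T'=O(K/\delta)$ units of time, independently of $M$. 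Combining with the previous step, every atom of $\mu\Bigm|_K(0)$ has coalesced by time $T'$, so $\mu\Bigm|_K(t)=c\,\delta_{\bar x(t)}$ with $c=\mu(0)[K]$.

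The bound $\mathrm{dist}(\bar x(t),\mathcal C)<\tilde c$ follows from the "Continuity" observation after Lemma~\ref{LA}: the coalesced $\bar x(t)$ agrees with the evolution of a point that is $O(\varepsilon)$-close to some cycle point $x\in\mathcal C$ under an inflow $O(\varepsilon)$-close to $\bar\lambda^{\mathcal C}$, so $\tilde c=O(\varepsilon)$, tending to zero as $K\to\infty$ (choosing $\varepsilon=1-\mu(0)[K]$). The main obstacle is the third step: one must verify that the self-consistent inflow $Y(t)\in\mathcal Y(\mu(0))$ cannot slip onto a "wrong" branch near the non-uniqueness configurations $\left(\ref{051}\right)$, so that $Y$ really is $L^1$-close to $\bar\lambda^{\mathcal C}$ on $[0,T']$. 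This requires lifting the single-trajectory stability of Lemma~\ref{LA} to the measure-valued dynamics $\Delta_M$, together with the observation that the non-uniqueness stratum carries zero $\mu(0)$-mass for initial data concentrated in the uniqueness part of $\mathcal C$.
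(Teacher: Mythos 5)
Your coalescence step is sound and is essentially the paper's own synchronization mechanism in a different guise: the paper observes that, the inflows to all $\bar{A}$-nodes (resp.\ $\bar{B}$-, $\bar{O}$-nodes) being identical, the node with the maximal coordinate stays maximal, so all nodes of a given type agree from the first moment the maximal one empties; your Skorokhod-map computation ($V_{2}-V_{1}$ constant, $U_{2}-U_{1}$ compensating after the larger trajectory hits zero, and $x_{A}$ recovered from the reflection of $5x_{BA}+x_{A}$ via (\ref{EE2})--(\ref{EE3})) proves exactly the same thing. The problem is the step you yourself flag as the "main obstacle": you derive the uniform-in-$M$ emptying time by assuming that the self-consistent inflow $\bar{Y}\in\mathcal{Y}\left(  \mu\left(  0\right)  \right)  $ stays $L^{1}$-close to the cyclic inflow $\bar{\lambda}^{\mathcal{C}}$ on every period of $\left[  0,T^{\prime}\right]  $ with $T^{\prime}\sim K/\delta$, and then invoke the underload inequalities (\ref{0013})--(\ref{0015}). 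That claim is not a consequence of the hypotheses (\ref{052})--(\ref{053}): they control only the initial measure, while the inflow at later times is determined self-consistently by the evolving measure, so "the inflow stays close to $\bar{\lambda}^{\mathcal{C}}$ for all $t\leq T^{\prime}$" is essentially the statement that the measure stays near the cycle — which is the content of Proposition \ref{P1}, whose proof uses the present lemma. As written the argument is circular. Moreover, the "Continuity" remark after Lemma \ref{LA} that you appeal to concerns single trajectories of the closed system $\Delta$, not the measure-valued coupled dynamics $\Delta_{M}$, and the finite-horizon continuity of Proposition \ref{P2} gives a bound $\bar{\varepsilon}\left(  T,\varepsilon\right)  $ that is only small for fixed $T$, whereas your $T^{\prime}$ grows with $K$ while $\varepsilon$ is tied to $K$ through (\ref{053}); nothing in the proposal controls this degradation, and the branch-selection issue near the non-uniqueness set (\ref{051}) that you mention is part of the same unresolved point.

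The paper's proof is structured precisely so as to avoid needing closeness of the inflow to $\bar{\lambda}^{\mathcal{C}}$ for the $\bar{A}$- and $\bar{B}$-nodes. There the emptying time is obtained from a purely structural work-conservation/throughput argument: the initial heavy fluid ($\leq K$) is served off in time $\leq K/2$; after a fixed time $t^{\prime\prime}$ all but a $\tfrac{1}{100}$-fraction of the heavy fluid reaching a given node must have passed through some $\bar{O}$-node, whose output caps the long-run heavy inflow at $\tfrac{3}{2}$ per node; combining the two class balances (light in at rate $\leq\tfrac{3}{2}$, heavy in at rate $\leq\tfrac{3}{2}+\tfrac{10}{100}$) with full-capacity service forces emptiness within time $O\left(  K\right)  $, uniformly in $M$ and with no reference to $\mathcal{C}$. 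Only after the $\bar{A}$-, $\bar{B}$-nodes are synchronized does the paper use the hypothesis (\ref{052}) (via the Continuity property, now legitimately applied to the already coalesced bulk) to conclude that the flow into the $\bar{O}$-nodes is close to the cyclic one (average rate $2$ against capacity $3$), which empties the $\bar{O}$-nodes of level $\leq K$; the condition (\ref{052}) is genuinely indispensable at this last stage, as the fixed points of $\Delta_{M}$ with all fluid parked at the $\bar{O}$-nodes show. To repair your proposal you would need to replace your third step by an argument of this kind, rather than by the unproven global closeness of $\bar{Y}$ to $\bar{\lambda}^{\mathcal{C}}$.
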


\textbf{Notes.}

\begin{itemize}
\item We will prove only the statement about the existence of the time moment
$T^{\prime}$, since below we will not use the time moment $T\left(
M,\gamma_{O},\gamma_{A},\gamma_{B},\gamma_{AB},\gamma_{BA}\right)  .$

\item Our proof can be extended \textit{literally} to the case $M=\infty$ of
the next Subsection.
\end{itemize}

\begin{proof}
Note first that due to the mean-field nature of our graph, the flows to all
the $\bar{A}$-nodes are equal at every time moment (as well as to all the
$\bar{B}$-nodes or $\bar{O}$-nodes). Consider now any subset $Q$ of the
$\bar{A}$-nodes, $\left\vert Q\right\vert \leq M,$ and let $I_{BA}$ be the
index, for which $\left(  x_{BA}\right)  _{I_{BA}}\left(  0\right)
\geq\left(  x_{BA}\right)  _{i}\left(  0\right)  $ for all $i\in Q.$ Then,
clearly, this relation holds at later moments, i.e. $\left(  x_{BA}\right)
_{I_{BA}}\left(  t\right)  \geq\left(  x_{BA}\right)  _{i}\left(  t\right)  .$
In the same way, define the index $I_{A}$ as the one, for which $\left(
x_{A}\right)  _{I_{A}}\left(  0\right)  \geq\left(  x_{A}\right)  _{i}\left(
0\right)  .$ Then, if all the variables $\left(  x_{BA}\right)  _{i}\left(
0\right)  $ are equal for $i\in Q,$ the relation $\left(  x_{A}\right)
_{I_{A}}\left(  t\right)  \geq\left(  x_{A}\right)  _{i}\left(  t\right)  $
holds at all later moments. We will use this property for the set $Q=\left\{
i\right\}  $ of indices, which satisfy $\left(  x_{BA}\right)  _{i}\left(
0\right)  <K,$ $\left(  x_{A}\right)  _{i}\left(  0\right)  <K.$

Let us show that there exists the time moment, such that before it every node
in $Q$ will be empty for some time duration. In view of what was said before,
it means that after that time moment all the nodes in $Q$ will be
synchronized. To see this depletion, note that the initial supply of the fluid
$BA$ at any node $i\in Q$ is not exceeding $K,$ so it will be over before
$t^{\prime}=\frac{K}{2}.$ Next, there exists a moment $t^{\prime\prime},$
after which $\frac{99}{100}$ (say) of \textquotedblleft
atoms\textquotedblright\ of the heavy fluid $BA$, passing through our node
$i\in Q$ were at earlier moments at some $\bar{O}$-node. Indeed, another
option would be that such an atom was staying at some $\bar{B}$ node for all
the time duration $t^{\prime\prime}.$ That means that the initial total amount
of fluid at this node was very high, once $t^{\prime\prime}$ is chosen to be
large. However, the proportion of such nodes has to be small, due to the
simple fact that the total amount of fluid per node is of the order of one.
But the rate, at which the fluid goes from the $\bar{O}$ node to the $\bar{B}$
node is never higher than $\frac{3}{2}.$ Let $K^{\prime\prime}$ be the amount
of fluid at our node $i$ at the moment $t^{\prime\prime}.$ Clearly it is at
most $12t^{\prime\prime}.$ Suppose the node is not empty during the time
interval $\left[  t^{\prime\prime},t^{\prime\prime}+T\right]  .$ Than it works
all the time at full capacity. Let $0\leq k\left(  t\right)  \leq1$ be the
fraction at moment $t$ of the capacity of the node, used by the heavy fluid,
while the remaining fraction $1-k\left(  t\right)  $ is used by the light
fluid. Then the amount of heavy fluid, which left the server during this time
interval, is $2\int_{t^{\prime\prime}}^{t^{\prime\prime}+T}k\left(  t\right)
dt,$ while the corresponding amount of the light fluid is $10\int
_{t^{\prime\prime}}^{t^{\prime\prime}+T}\left(  1-k\left(  t\right)  \right)
dt.$ Since the light fluid flows into the node with the rate at most $\frac
{3}{2},$ we have that the relation
\[
10\int_{t^{\prime\prime}}^{t^{\prime\prime}+T}\left(  1-k\left(  t\right)
\right)  dt\leq\frac{3}{2}T+K^{\prime\prime}%
\]
has to hold, since the amount of light fluid, leaving the node, can not exceed
the initial amount present at the node plus the amount which came to the node
during the time interval $T.$ For the heavy fluid we similarly have
\[
2\int_{t^{\prime\prime}}^{t^{\prime\prime}+T}k\left(  t\right)  dt\leq\frac
{3}{2}T+10\left(  \frac{1}{100}T\right)  +K^{\prime\prime}.
\]
The two relations imply that
\[
10T\leq\left(  9\frac{1}{2}\right)  T+6K^{\prime\prime}.
\]
So $T\leq12K^{\prime\prime},$ which establish our depletion claim for the
$\bar{A}$ (as well as for $\bar{B}$) nodes at some moment $t^{\prime
\prime\prime},$ independent of $M$ and $\varepsilon,$ provided only that
$\varepsilon$ is small.

Thus far we were not using the condition $\rho_{KROV}\left(  \mu\left(
0\right)  ,\delta_{x}\right)  <\varepsilon,$ without which our claim about the
existence of the time moment $T^{\prime}\left(  K,\gamma_{O},\gamma_{A}%
,\gamma_{B},\gamma_{AB},\gamma_{BA}\right)  $ is not valid -- see, for
example, the first property of the dynamics $\Delta_{M}$. We will use it now,
in dealing with the $\bar{O}$ nodes. Due to the above discussion and the
continuity property, our statement is reduced to the following one: consider
the initial measure $\mu\left(  0\right)  ,$ having the properties:
$\rho_{KROV}\left(  \mu\left(  0\right)  ,\delta_{\bar{x}}\right)
<\varepsilon^{\prime}(=C\varepsilon,$ see Continuity Property of the previous
section), while in the decomposition $\mu\left(  0\right)  =\mu\left(
0\right)  _{\bar{O}}\times\mu\left(  0\right)  _{\bar{A}}\times\mu\left(
0\right)  _{\bar{B}}$ we have $\mu\left(  0\right)  _{\bar{A}}=\left(
1-\varepsilon\right)  \delta_{\tilde{x}_{\bar{A}}}+\varkappa_{\bar{A}},$
$\mu\left(  0\right)  _{\bar{B}}=\left(  1-\varepsilon\right)  \delta
_{\tilde{x}_{\bar{B}}}+\varkappa_{\bar{B}},$ with the vectors $\tilde{x}%
_{\bar{A}},\tilde{x}_{\bar{B}}$ $\in\mathbb{R}^{2}$ close to the corresponding
projections $\left(  x_{A},x_{BA}\right)  ,$ resp. $\left(  x_{B}%
,x_{AB}\right)  $ of the vector $\bar{x}.$ But that means that the flows
into\textbf{ }the $\bar{O}$ nodes will be almost always almost equal to these
on the cycle $\mathcal{C},$ so in finite time all of the $\bar{O}$-nodes which
initially have their levels $\leq K$ will become empty. Indeed, on the cycle
the flow to the $\bar{O}$ node has rate $2,$ while the capacity of these nodes
equals $3.$

Summarizing, we have thus far that after a finite time, independent of $M,$
all the nodes in $Q$ are synchronized, and moreover all the $\bar{O}$-nodes in
$Q$ are empty. The application of the Continuity Property and the Attraction
Lemma \ref{LA} finishes the proof.
\end{proof}

\subsection{$M\rightarrow\infty$ fluid network}

This is again a dynamical system, $\Delta_{\infty},$ acting on probability
measures $\mathcal{M}=\mathcal{M}\left(  \left(  \mathbb{R}^{5}\right)
^{+}\right)  $ on $\left(  \mathbb{R}^{5}\right)  ^{+}.$ One way of defining
it is to say that the family $\mu\left(  t\right)  \in\mathcal{M}$ is a
trajectory of $\Delta_{\infty},$ iff for any $M$ there is a trajectory
$\mu_{M}\left(  t\right)  \in\mathcal{M}_{M}$ of $\Delta_{M},$ so that for
every $t$ we have $\mu_{M}\left(  t\right)  \rightarrow\mu\left(  t\right)  $ weakly.

Now we will give another description of $\Delta_{\infty},$ which does not make
use of the limit $M\rightarrow\infty.$ As was the case with the dynamics
$\Delta$ and $\Delta_{M},$ we will define first the NHDS version of
$\Delta_{\infty}.$ Let the measure $\mu_{0}\in\mathcal{M}\left(  \left(
\mathbb{R}^{5}\right)  ^{+}\right)  $ and let the function $\bar{Y}\left(
t\right)  \in\left(  \mathbb{R}^{5}\right)  ^{+}$ be given, which is the net
inflow of our fluids. Then the corresponding evolution $\mu=\left\{  \mu
_{t}\in\mathcal{M}\left(  \left(  \mathbb{R}^{5}\right)  ^{+}\right)
\right\}  $ of the state $\mu_{0}$ is defined to be just the evolution of the
measure $\mu_{0}$ under the dynamics $\Delta\left(  \bar{Y}\right)  .$ Again,
the net outflows $\bar{Z}^{\bar{x},\bar{Y}}\left(  t\right)  ,$ $\bar{x}%
\in\left(  \mathbb{R}^{5}\right)  ^{+}$ are given by $\left(  \ref{z}\right)
.$

Now we can define the evolution $\mu=\left\{  \mu_{t}\in\mathcal{M}\left(
\left(  \mathbb{R}^{5}\right)  ^{+}\right)  \right\}  $ of the initial measure
$\mu_{0}$ under Non-Linear Dynamical System $\Delta_{\infty}$ (NLDS) as the
NHDS evolution of it under any dynamics $\Delta\left(  \bar{Y}\right)  ,$ with
$\bar{Y}$ satisfying the equations:%
\begin{align}
Y_{O}\left(  t\right)   &  =\int(Z_{AB}^{\bar{x},\bar{Y}}\left(  t\right)
+Z_{BA}^{\bar{x},\bar{Y}}\left(  t\right)  )d\mu_{0}\left(  \bar{x}\right)
,\nonumber\\
Y_{A}\left(  t\right)   &  =\frac{1}{2}\int Z_{O}^{\bar{x},\bar{Y}}\left(
t\right)  d\mu_{0}\left(  \bar{x}\right)  ,\label{031}\\
Y_{AB}\left(  t\right)   &  =\int Z_{A}^{\bar{x},\bar{Y}}\left(  t\right)
d\mu_{0}\left(  \bar{x}\right)  ,\nonumber
\end{align}
and symmetric relations for $B$ and $BA$ variables. The set of all such flows
$\bar{Y}\left(  t\right)  $ will be denoted by\textbf{ }$\mathcal{Y}\left(
\mu_{0}\right)  .$

We are calling the dynamical system $\Delta_{\infty}$ non-linear, since in
general we will have for $\mu_{0}=\frac{1}{2}\left(  \mu_{0}^{\prime}+\mu
_{0}^{\prime\prime}\right)  $ that $\mu_{t}\not =\frac{1}{2}\left(  \mu
_{t}^{\prime}+\mu_{t}^{\prime\prime}\right)  $ when $t>0.$ Note also that if
$\mu_{0}=\delta_{\bar{x}\left(  0\right)  }$ for some point $\bar{x}\left(
0\right)  ,$ then the family $\mu_{t}$ is a $\Delta_{\infty}$ trajectory iff
$\mu_{t}=\delta_{\bar{x}\left(  t\right)  },$ with $\bar{x}\left(  t\right)  $
being some $\Delta$-evolution of $\bar{x}\left(  0\right)  .$

Properties of $\Delta_{\infty}:$

\begin{enumerate}
\item the set of the fixed points of $\Delta_{\infty}$ consists of measures
$\mu=\Pi_{\bar{o}}\left[  \mu\right]  \times\delta_{x_{\bar{A}}=0}\times
\delta_{x_{\bar{B}}=0},$ where $\Pi_{\bar{o}}\left[  \mu\right]  $ is the
projection on the coordinate $x_{O}.$

\item if $\mu\left(  0\right)  \in\mathcal{M}_{M}$ for some $M,$ then the
$\Delta_{M}$-dynamics and $\Delta_{\infty}$-dynamics with that initial data
coincide. In particular, if $\mu\left(  0\right)  =\delta_{\bar{x}}%
\in\mathcal{M}$ with $\bar{x}\in\mathcal{C},$ then $\mu\left(  t\right)
=\delta_{\bar{x}\left(  t\right)  },\ $where $\bar{x}\left(  t\right)  $ is
the trajectory of $\Delta$ with $\bar{x}\left(  0\right)  =$ $\bar{x}.$
\end{enumerate}

For all our purposes it is sufficient to prove the following property of our NLDS.

\begin{proposition}
\label{P1} Let the measure $\mu=\mu\left(  0\right)  $ on $\left(
\mathbb{R}^{5}\right)  ^{+}$ have the following properties:

$i)$ Unit mass:
\begin{equation}
\int_{\left(  \mathbb{R}^{5}\right)  ^{+}}\left(  x_{A}+x_{B}+x_{BA}%
+x_{AB}+x_{O}\right)  d\mu=1, \label{030}%
\end{equation}

$ii)$ Exponential moment condition: for some $\alpha>0,$ $A<\infty$ we have
\begin{equation}
\left\langle \exp\left\{  \alpha L\left(  \bar{x}\right)  \right\}
\right\rangle _{\mu\left(  0\right)  }<3A\label{032}%
\end{equation}
(see $\left(  \ref{037}\right)  $),

$iii)$ For some $x\in\mathcal{C}$ we have
\begin{equation}
\rho_{KROV}\left(  \mu\left(  0\right)  ,\delta_{x}\right)  <\varepsilon,
\label{034}%
\end{equation}
with $\varepsilon$ small enough (depending on $\alpha$ and $A$).

Consider now some Non-Linear Dynamical System $\Delta_{\infty}$ (NLDS),
defined by the initial state $\mu\left(  0\right)  .$ In other words,
$\Delta_{\infty}=\Delta_{\infty}\left(  \bar{Y}\left(  \cdot\right)  \right)
,$ for some $\bar{Y}\left(  \cdot\right)  \in\mathcal{Y}\left(  \mu\left(
0\right)  \right)  .$ Then for $t\rightarrow\infty$ the evolving measure
$\mu\left(  t\right)  $ satisfies :

$I)$
\begin{equation}
\rho_{KROV}\left(  \mu\left(  t\right)  ,\delta_{z\left(  t\right)  }\right)
\rightarrow0 \label{035}%
\end{equation}
for appropriate $z\left(  t\right)  \in\mathcal{C}$ ,

$II)$
\begin{equation}
\left\langle \exp\left\{  \alpha L\left(  \bar{x}\right)  \right\}
\right\rangle _{\mu\left(  t\right)  }\rightarrow0. \label{036}%
\end{equation}
Moreover, the convergence in $\left(  \ref{035}\right)  ,$ $\left(
\ref{036}\right)  $ is uniform over all $\bar{Y}\left(  \cdot\right)
\in\mathcal{Y}\left(  \mu\left(  0\right)  \right)  $ and all initial measures
$\mu\left(  0\right)  $ satisfying $\left(  \ref{030}\right)  $-$\left(
\ref{034}\right)  .$
\end{proposition}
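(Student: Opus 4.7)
The plan is to split $\mu(0)$ into a bulk and a tail, show that the bulk synchronizes on the cycle $\mathcal{C}$ via Lemma \ref{Pr4}, and use Lemma \ref{l3} to contract the tail cycle by cycle. Fix $K$ large. The exponential-moment bound (\ref{032}) combined with Chebyshev makes $\mu(0)[\Lambda_K^c]$ arbitrarily small by enlarging $K$, while the $\rho_{KROV}$-closeness (\ref{034}) is inherited (up to normalization) by $\mu_K := \mu(0)|_{\Lambda_K}$. Lemma \ref{Pr4}, which the Note asserts extends verbatim to $M=\infty$, then yields a time $T' = T'(K,\bar\gamma)$ such that for $t\geq T'$ the restricted evolution $\mu_K(t)$ is a single atom $c\,\delta_{\bar x(t)}$ with $\bar x(t)$ within $\tilde c$ of $\mathcal{C}$; the local attractor Lemma \ref{LA} subsequently pulls $\bar x(t)$ onto $\mathcal{C}$ after a further bounded time. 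Thereafter the bulk generates, through (\ref{031}), an inflow equal to $c\,\bar\lambda^{\mathcal{C}}$ up to a phase shift.

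For the tail contribution I use that the right-hand sides of (\ref{031}) are bounded, over any interval of length $T_\mathcal{C}$, by the mass of $\mu_K^c$ times the maximal service rates $\gamma_a$, since the outflow rate $z_a$ at a single node never exceeds $\gamma_a$. Hence the full inflow $\bar Y(t)$ is $L^1$-close on every $[nT_\mathcal{C},(n+1)T_\mathcal{C}]$ to $\bar\lambda^{\mathcal{C}}$ whenever the tail mass is small. This lets me invoke Lemma \ref{l3} pointwise in $\bar x(0)$: for each initial point $\bar x$ in the tail with $L(\bar x)>10$,
\[
L(\bar x((n+1)T_\mathcal{C})) < L(\bar x(nT_\mathcal{C})) - C,
\]
so $\bar x(\cdot)$ enters $\mathcal{K}$ (and $L$ drops to $0$) after at most $\lceil L(\bar x(0))/C\rceil$ cycles. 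Since the bulk trajectory lies in $\mathcal{C}\subset\mathcal{K}$, we obtain $L(\bar x(t))\to 0$ for $\mu(0)$-a.e.\ $\bar x$.

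Pushing forward along the NHDS now gives
\[
\int e^{\alpha L(\bar x)}\,d\mu(t)(\bar x) = \int e^{\alpha L(\bar x(t))}\,d\mu(0)(\bar x).
\]
The integrand on the right is dominated by $e^{\alpha L(\bar x(0))}$, which is integrable by (\ref{032}), and it converges pointwise to $1$ by the preceding step; dominated convergence then yields the statement (\ref{036}) in the sense that $\int(e^{\alpha L}-1)\,d\mu(t)\to 0$, from which (\ref{035}) follows by a second application of Lemma \ref{Pr4} at a time when the tail has been swept into $\mathcal{K}$ and the whole measure then satisfies its hypotheses. The main obstacle is the self-consistency loop: Lemma \ref{l3} requires the inflow to be close to $\bar\lambda^{\mathcal{C}}$, yet that inflow is generated by all of $\mu(t)$, including a tail whose individual $L$-values may be huge. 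The resolution is that the instantaneous outflow from any node is bounded by $\gamma_a$, so the tail's contribution to $\bar Y$ is controlled by the tail \emph{mass} rather than by individual tail $L$-values, and (\ref{032}) keeps that mass uniformly small throughout the evolution.
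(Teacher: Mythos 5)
Your plan uses the same two pillars as the paper's proof -- Lemma \ref{Pr4} to synchronize the bulk onto $\mathcal{C}$ and Lemma \ref{l3} to contract the functional $L$ on the tail -- and your resolution of the self-consistency loop is essentially sound: once $\bar{Y}\in\mathcal{Y}\left(\mu\left(0\right)\right)$ is fixed the evolution is linear, so the piece $\mu\left(0\right)\Bigm|_{\Lambda_{K}^{c}}$ keeps its (small) mass forever, and since each $\dot{Z}_{a}\leq\gamma_{a}$ its contribution to the inflows is bounded by that mass, so the inflow stays per-period $L^{1}$-close to $\bar{\lambda}^{\mathcal{C}}$ once the bulk is on the cycle. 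The difference from the paper is one of order and of bookkeeping: the paper first drives the exponential moment down geometrically (Lemma \ref{l3} applied pointwise, cycle by cycle, with Proposition \ref{P2} guaranteeing closeness of the inflows to the cyclic ones during those finitely many periods), and only then invokes Lemma \ref{Pr4}, after which it runs an interleaved induction in which the moment bound and the $\rho_{KROV}$ bound reproduce each other with a geometric gain per round.

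The gap is that, as written, your argument does not reach two parts of the asserted conclusion. First, uniformity. Replacing the quantitative decrement of Lemma \ref{l3} ($L$ drops by at least a fixed $C$ per period, hence the excess moment decays like $3A\,e^{-\alpha Cn/2}$ after $n$ periods) by ``each tail point eventually enters $\mathcal{K}$'' plus dominated convergence gives convergence for each fixed $\mu\left(0\right)$, but at a rate depending on $\mu\left(0\right)$: the bound $\left\langle e^{\alpha L}\right\rangle_{\mu\left(0\right)}<3A$ does not make $\int_{L>R}e^{\alpha L}d\mu\left(0\right)$ uniformly small over the class, so the clause ``uniform over all $\bar{Y}$ and all $\mu\left(0\right)$'' -- which the proof of Theorem \ref{MR} actually uses -- is lost, even though it is recoverable from your own ingredients by keeping Lemma \ref{l3} quantitative. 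Second, the limit in $\left(\ref{035}\right)$. The tail mass swept into $\mathcal{K}$ does not disappear; it must itself be re-synchronized onto the moving atom, and a single further application of Lemma \ref{Pr4} only bounds $\rho_{KROV}\left(\mu\left(t\right),\delta_{z\left(t\right)}\right)$ by a quantity of the order of the stray mass at that stage (order $\varepsilon$), not by something tending to $0$. To obtain the genuine limits in $\left(\ref{035}\right)$--$\left(\ref{036}\right)$ one needs the paper's iteration: each round the moment improves by the fixed factor $c<1$, hence the mass outside $\mathcal{K}$ shrinks, hence after the next synchronization step (Lemma \ref{Pr4} followed by one more period and Lemma \ref{LA}) the $KROV$ distance is of the order of the previous round's moment, and so on indefinitely. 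This repair is straightforward with what you have, but it is missing from the proposal.
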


To establish it we first prove the following simpler fact.

\begin{proposition}
\label{P2} For every $T,\varepsilon$ there exists a value $\bar{\varepsilon
}\left(  T,\varepsilon\right)  ,$ such that the following holds:

$i)$ For every $T$ we have $\bar{\varepsilon}\left(  T,\varepsilon\right)
\rightarrow0$ as $\varepsilon\rightarrow0.$

$ii)$ Let $\mu$ be any measure satisfying $i)$ -- $iii)$ above. Then for every
$t\in\left[  0,T\right]  $ we have $\rho_{KROV}\left(  \mu\left(  t\right)
,\delta_{x\left(  t\right)  }\right)  <\bar{\varepsilon}\left(  T,\varepsilon
\right)  .$
\end{proposition}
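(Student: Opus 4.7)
The plan is to prove Proposition \ref{P2} by a Gronwall-type estimate for the discrepancy $d(t) := \rho_{KROV}(\mu(t), \delta_{x(t)})$, where $x(t)$ is the $\Delta$-trajectory of $x \in \mathcal{C}$ and $\mu(t)$ is the NLDS evolution of $\mu(0)$. By definition, $\Delta_{\infty}$ is the NHDS $\Delta(\bar{Y})$ driven by a self-consistent inflow $\bar{Y} \in \mathcal{Y}(\mu(0))$, while the reference trajectory $\delta_{x(t)}$ is the linear NHDS driven by the periodic cycle inflow $\bar{\lambda}^{\mathcal{C}}$. The goal is to interleave two continuity mechanisms: (a) trajectories driven by close inflows starting from close initial configurations stay close, and (b) the self-consistent inflow $\bar{Y}$ produced by a measure close to $\delta_{x(t)}$ is itself close to $\bar{\lambda}^{\mathcal{C}}$.

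First I would fix an optimal KROV coupling of $\mu(0)$ with $\delta_{x}$, producing a random point $\bar{X} \sim \mu(0)$ with $\mathbb{E}\,\rho(\bar{X}, x) < \varepsilon$, and split $\mu(0) = \mu^{\mathrm{near}} + \mu^{\mathrm{far}}$, where $\mu^{\mathrm{near}}$ is the restriction to a ball of radius $\delta_{0}$ around $x$. Markov's inequality gives $\|\mu^{\mathrm{far}}\| < \varepsilon / \delta_{0}$, while the exponential moment condition (\ref{032}) forces the fluid-norm tails of $\mu^{\mathrm{far}}$ to be uniformly small once $\delta_{0}$ is chosen large. Evolving each atom $\bar{X}$ under the (as yet unknown) inflow $\bar{Y}$, the near atoms stay close to $x(t)$ by an NHDS extension of the Continuity Property of Section \ref{S21}, while the far atoms contribute $O(\varepsilon)$ to $d(t)$ because their total mass is small and the total fluid of each atom can change by at most $O(T)$ on $[0, T]$ (inflow and outflow rates are uniformly bounded by $\sum_{a}\gamma_{a}$).

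Second I would estimate the inflow discrepancy $\|\bar{Y} - \bar{\lambda}^{\mathcal{C}}\|_{L^{1}[0, t]}$ using the self-consistency formulas (\ref{031}). Each component of $\bar{Y}$ is an integral against $\mu(0)$ of the outflow $\bar{Z}^{\bar{X}, \bar{Y}}$, and Lemma \ref{L2} gives that this outflow is Lipschitz in its arguments. Applying the near/far decomposition yields a bound of the shape $\|\bar{Y} - \bar{\lambda}^{\mathcal{C}}\|_{L^{1}[0, t]} \leq C_{1} \int_{0}^{t} d(s)\, ds + C_{2}\varepsilon$. Plugging this into the NHDS continuity bound on each atom produces the integral inequality $d(t) \leq C_{1}' \varepsilon + C_{2}' \int_{0}^{t} d(s)\, ds$, and Gronwall delivers $d(t) \leq \bar{\varepsilon}(T, \varepsilon) := C_{1}' \varepsilon e^{C_{2}' T}$, which tends to zero with $\varepsilon$ for each fixed $T$.

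The main obstacle is the NHDS extension of the Continuity Property needed in the first step: the statement in Section \ref{S21} compares two closed-system $\Delta$-trajectories starting at nearby points, whereas here one needs to compare NHDS-trajectories driven by two \emph{nearby inflows} as well. This is genuinely delicate near the non-uniqueness configurations of type (\ref{051}), where blocking can switch regimes abruptly. However the base trajectory $x(t) \in \mathcal{C}$ lies on a uniqueness orbit: at every time except the isolated phase transitions on $\mathcal{C}$, the heavy fluid unambiguously wins, so a small perturbation of both initial point and inflow forces the perturbed trajectory into the same regime. The case analysis used in the proof of Lemma \ref{LA} can then be re-run with inflow perturbations carried along, producing a Lipschitz bound on the trajectory in terms of $\varepsilon$ and the $L^{1}$-norm of the inflow perturbation on $[0, T]$, with a constant depending only on $T$. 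With this input available, the Gronwall closure is routine.
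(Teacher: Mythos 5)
Your \textquotedblleft main obstacle\textquotedblright\ is actually the easy part of the argument, and the step you call routine is where the proof breaks. For a single atom the inflow $\bar{Y}$ is \emph{prescribed}, so the driven system $\Delta\left(  \bar{Y}\right)  $ is well-posed and Lemma \ref{L2}, applied node by node, already gives Lipschitz dependence of the atom's trajectory on $\left(  \bar{x}\left(  0\right)  ,\bar{Y}\right)  $ (through the sup norm of the cumulative inflows, hence through the $L^{1}$ norm of the rate perturbation); no re-run of the case analysis of Lemma \ref{LA} is needed, because the non-uniqueness of type $\left(  \ref{051}\right)  $ lives in the self-consistency relations, not in the driven dynamics. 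The genuine gap is the claimed estimate $\Vert\bar{Y}-\bar{\lambda}^{\mathcal{C}}\Vert_{L^{1}\left[  0,t\right]  }\leq C_{1}\int_{0}^{t}d\left(  s\right)  ds+C_{2}\varepsilon$. The instantaneous outflow rates are discontinuous functions of the state (they jump when a fluid level hits zero or when blocking switches), so the rate discrepancy at time $s$ is not controlled pointwise by $\rho_{KROV}\left(  \mu\left(  s\right)  ,\delta_{x\left(  s\right)  }\right)  $. If instead you pass to cumulative flows via $\bar{Z}=\bar{x}\left(  0\right)  +\bar{Y}-\bar{x}\left(  t\right)  $ (see $\left(  \ref{z}\right)  $) and the self-consistency $\left(  \ref{031}\right)  $, the unknown quantity $D\left(  t\right)  =\sup_{s\leq t}\Vert\bar{Y}\left(  s\right)  -\bar{\Lambda}^{\mathcal{C}}\left(  s\right)  \Vert$ reappears on the right-hand side: each atom's outflow is Lipschitz in $\bar{Y}$ itself, with the constant of Lemma \ref{L2} of order $\gamma_{A}/\gamma_{BA}$ for the priority nodes, and even with per-node constants equal to $1$ the loop $Y_{O}\rightarrow Y_{A}\rightarrow Y_{AB}\rightarrow Y_{O}$ (plus its $B$-mirror) has gain $1$ by mass conservation. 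What you obtain is therefore of the form $D\left(  t\right)  \leq C\varepsilon+LD\left(  t\right)  $ with $L\geq1$, which is not a Gronwall inequality and never closes. To rescue a quantitative version you would need a genuinely new ingredient, e.g. a short-time contraction of the inflow-to-outflow map in the spirit of step \textbf{2} of the proof that the semigroup $S_{t}$ is well defined, where the small factor $\gamma T$ came from the observation that clients arriving during $\left[  0,T\right]  $ cannot be served before $T$; an analogous fluid statement would have to be proved and then iterated over $\left[  0,T\right]  $.

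The paper avoids all of this with a soft argument: for any sequence $\mu_{n}\left(  0\right)  \rightarrow\delta_{x}$ satisfying $i)$--$iii)$, the trajectories $\mu_{n}\left(  t\right)  $, $t\in\left[  0,T\right]  $, are uniformly bounded and equicontinuous (every admissible vector field is bounded by $\gamma_{A}=10$), so by Arzel\`{a}--Ascoli there is a limit point; any limit point is an NLDS trajectory started from $\delta_{x}$, and since $x\in\mathcal{C}$ is a uniqueness point this trajectory must be $\delta_{x\left(  t\right)  }$, giving convergence uniform in $t$; the existence of $\bar{\varepsilon}\left(  T,\varepsilon\right)  $ then follows from compactness of the balls of measures satisfying the moment condition. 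This yields the proposition non-constructively, exactly what is needed later, whereas your route would yield an explicit $\bar{\varepsilon}=C\varepsilon e^{CT}$ --- but only if the self-consistency closure is repaired, so as written the proposal does not prove the statement.
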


\begin{proof}
Let the sequence of measures $\mu_{n}\left(  0\right)  $ converge to
$\delta_{x}$ in the $KROV$ metrics, and let it satisfy the conditions of the
Proposition \ref{P1}. Let us consider the set of trajectories $\mu_{n}\left(
t\right)  ,$ $0\leq t\leq T.$ The family of these trajectories, viewed as
functions of $t\in\left[  0,T\right]  ,$ is a family of uniformly bounded
functions (due to the compactness of the set of measures with the properties
$i)$ -- $iii)$), which also are equicontinuous. Indeed, every version of the
vector field, along which any of the measures $\mu_{n}\left(  t\right)  $ has
to evolve, is continuous and bounded in norm by the constant $\gamma_{A}=10.$
Therefore it is compact, and so has a limit point, the function $\mu\left(
t\right)  ,$ with $\mu\left(  0\right)  =\delta_{x}.$ But $\mu\left(
t\right)  $ has to be a trajectory of NLDS, and since there is just one such
trajectory starting from $\delta_{x},$ we conclude that $\mu\left(  t\right)
=\delta_{x\left(  t\right)  }.$

Therefore $\mu_{n}\left(  t\right)  \rightarrow\delta_{x\left(  t\right)  }$
for every $t\in\left[  0,T\right]  ,$ and this convergence is uniform in $t.$
The existence of the function $\bar{\varepsilon}\left(  T,\varepsilon\right)
$ follows from the compactness of the balls $\left\{  \mu:\rho_{KROV}\left(
\mu,\delta_{x}\right)  <a\right\}  $ of measures satisfying the moment condition.

\textbf{Proof of the Proposition \ref{P1}. }

\textbf{1. }To begin with, we prove that once $\varepsilon$ is small enough,
there exists the time moment $T_{1},$ at which for any initial $\mu\left(
0\right)  ,$ satisfying conditions of our Proposition,
\begin{equation}
\left\langle \exp\left\{  \alpha L\left(  \bar{x}\right)  \right\}
\right\rangle _{\mu\left(  T_{1}\right)  }<\varepsilon. \label{041}%
\end{equation}
Indeed, suppose first that the value of the exponential moment of the initial
measure $\tilde{\mu}\left(  0\right)  $ is fixed, $\left\langle \exp\left\{
\alpha L\left(  \bar{x}\right)  \right\}  \right\rangle _{\tilde{\mu}\left(
0\right)  }=M,$ and also that
\begin{equation}
\rho_{KROV}\left(  \tilde{\mu}\left(  t\right)  ,\delta_{z\left(  t\right)
}\right)  <\tilde{\varepsilon}\text{ for all }t\in\left[  0,T_{\mathcal{C}%
}\right]  ,\text{ with }\tilde{\varepsilon}\text{ small enough,} \label{040}%
\end{equation}
where $T_{\mathcal{C}}$ is the time it takes to go once around the cycle
$\mathcal{C}.$ We claim that at the moment $T_{\mathcal{C}}$ we have
$\left\langle \exp\left\{  \alpha L\left(  \bar{x}\right)  \right\}
\right\rangle _{\tilde{\mu}\left(  T_{\mathcal{C}}\right)  }<cM,$ where $c<1$
is some constant, which depends only on the parameters of our model. In
particular, $c$ is independent of $\tilde{\mu}\left(  0\right)  .$

To see that we note first that for any point $\bar{x}_{0}$ with $L\left(
\bar{x}_{0}\right)  >10$ (see definition $\left(  \ref{037}\right)  $) and
under \textit{Cyclic Dynamics} $\ \Delta_{\infty}=\Delta_{\infty}\left(
\bar{Y}\left(  \cdot\right)  \right)  ,$ for $\bar{Y}\left(  \cdot\right)
\in\mathcal{Y}\left(  \delta_{z}\right)  ,$ $z\in\mathcal{C},$ we have after
the time shift by $T_{\mathcal{C}}$ that $L\left(  \bar{x}_{T_{\mathcal{C}}%
}\right)  <L\left(  \bar{x}_{0}\right)  -C\left(  \gamma_{\ast}\right)  ,$
where the constant $C\left(  \gamma_{\ast}\right)  >0$ depends only on the
service rates $\gamma_{\ast}.$ This follows directly from the Lemma \ref{l3}
of the Section \ref{S21}. From the same Lemma and due to the condition
$\left(  \ref{040}\right)  $ we have that under any dynamics $\Delta_{\infty
}\left(  \bar{Y}\left(  \cdot\right)  \right)  $ with $\bar{Y}\left(
\cdot\right)  \in\mathcal{Y}\left(  \tilde{\mu}\left(  0\right)  \right)  $ we
have for the same point that $L\left(  \bar{x}_{T_{\mathcal{C}}}\right)
<L\left(  \bar{x}_{0}\right)  -\frac{1}{2}C\left(  \gamma_{\ast}\right)  .$ So
we have proven our claim, with $c=\exp\left\{  -\frac{1}{2}C\left(
\gamma_{\ast}\right)  \right\}  .$

Let now $k$ be the smallest integer, such that $3Ac^{k}<\varepsilon.$ We want
to repeat $k$ times the procedure of the previous paragraph. Its duration is,
evidently, $kT_{\mathcal{C}}$. Suppose that the parameter $\varepsilon$ is so
small that the function $\bar{\varepsilon}\left(  kT_{\mathcal{C}}%
,\varepsilon\right)  ,$ defined in the Proposition \ref{P2} satisfies
$\bar{\varepsilon}\left(  kT_{\mathcal{C}},\varepsilon\right)  \,<\tilde
{\varepsilon}$ (see $\left(  \ref{040}\right)  $). Then the desired repetition
is possible, and so $\left(  \ref{041}\right)  $ indeed holds, with
$T_{1}=kT_{\mathcal{C}}.$

\textbf{2. }Summarizing, at the moment $T_{1}$ we have:

$i)$
\begin{equation}
\left\langle \exp\left\{  \alpha L\left(  \bar{x}\right)  \right\}
\right\rangle _{\mu\left(  T_{1}\right)  }<\varepsilon, \label{050}%
\end{equation}

$ii)$ $\rho_{KROV}\left(  \mu\left(  T_{1}\right)  ,\delta_{z}\right)
<\bar{\varepsilon}\left(  T_{1},\varepsilon\right)  $ for some $z\in
\mathcal{C}.$

Let us use now the compact $\mathcal{K}.$ From $\left(  \ref{050}\right)  $
immediately follows that $\mu\left(  T_{1}\right)  \left[  \mathcal{K}\right]
>1-\varepsilon.$ Hence, due to the Lemma \ref{Pr4} of Section \ref{S23}, there
is the time moment $T_{2}=T_{2}\left(  \mathcal{K},\varepsilon\right)  ,$
after which the restricted measure $\mu\left(  T_{1}\right)
\Bigm|_{\mathcal{K}}$ would evolve under the dynamics $\Delta_{\infty}\left(
\bar{Y}\left(  \cdot\right)  \right)  $ with $\bar{Y}\left(  \cdot\right)
\in\mathcal{Y}\left(  \mu\left(  0\right)  \right)  $ to a $\delta$-atom, of
mass at least $1-\varepsilon$. According to Lemma \ref{P2} -- or rather its
$M=\infty$ version -- for some $z\in\mathcal{C}$ the distance $\rho
_{KROV}\left(  \mu\left(  T_{1}+T_{2}\right)  ,\delta_{z}\right)  $ at that
moment would be at most $\bar{\varepsilon}\left(  T_{2},\bar{\varepsilon
}\left(  T_{1},\varepsilon\right)  \right)  ,$ which is small. If it would
have been the case that this atom has unit mass, than after time
$T_{\mathcal{C}}$ it would already be on the cycle $\mathcal{C}.$ Since,
however, we know only that its mass is above $1-\varepsilon,$ we can claim
that at the time moment $T_{1}+T_{2}+T_{\mathcal{C}}$ its distance from
$\mathcal{C}$ is at most $K_{1}\varepsilon,$ where $K_{1}$ is some universal
constant. Therefore for the total measure $\mu\left(  T_{1}+T_{2}%
+T_{\mathcal{C}}\right)  $ we can claim that

$ii^{\prime})$ $\rho_{KROV}\left(  \mu\left(  T_{1}+T_{2}+T_{\mathcal{C}%
}\right)  ,\delta_{z}\right)  <K_{2}\varepsilon$ for some $z\in\mathcal{C}$
and some universal $K_{2},$ while

$i^{\prime})$ $\left\langle \exp\left\{  \alpha L\left(  \bar{x}\right)
\right\}  \right\rangle _{\mu\left(  T_{1}+T_{2}+T_{\mathcal{C}}\right)
}<c\varepsilon,$ with $c<1$ the same as in the first step of the proof.

So what happened is that the estimator $\varepsilon$ of the exponential
moment, appearing in $\left(  \ref{050}\right)  ,$ multiplied by $K_{2},$ is
the estimator of the $KROV$ distance on the next step, while the estimate of
the exponential moment is improved by a constant $c.$ This argument can be
iterated; on the next step we will have

$i^{\prime\prime})$ $\left\langle \exp\left\{  \alpha L\left(  \bar{x}\right)
\right\}  \right\rangle _{\mu\left(  T_{1}+2\left(  T_{2}+T_{\mathcal{C}%
}\right)  \right)  }<c^{2}\varepsilon,$

$ii^{\prime\prime})$ $\rho_{KROV}\left(  \mu\left(  T_{1}+2\left(
T_{2}+T_{\mathcal{C}}\right)  \right)  ,\delta_{z}\right)  <K_{2}c\varepsilon$
for some $z\in\mathcal{C},$

\noindent and so on, which completes the proof.
\end{proof}

\section{Euler scaling limit of the NLMP}

In this Section we will show that in the limit of high load our Non-Linear
Markov process tends to the fluid model, introduced above.

\begin{theorem}
\label{T11} Suppose that for every $N$ we are given the initial state $\nu
^{N}$ of the (scaled -- see $\left(  \ref{016}\right)  $) NLMP $\nabla
_{\infty}^{N},$ and the sequence $\nu^{N}$ converges to the measure $\mu$ in
the KROV\ metric, i.e. $\rho_{KROV}\left(  \nu^{N},\mu\right)  \rightarrow0.$
Consider all the KROV-limit points $\mu\left(  t\right)  $ of the set of
trajectories $\left\{  \tilde{\nu}^{N}\left(  t\right)  =\nu^{N}\left(
Nt\right)  ,t\in\left[  0,T\right]  \right\}  ,$ $N=1,2,...$ . Every such
limit point, called Euler fluid limit of the NLMP, is necessarily a trajectory
of the fluid model $\Delta_{\infty}.$

(Any trajectory of $\Delta_{\infty},$ obtained via this limit, will be called
a fluid solution.)
\end{theorem}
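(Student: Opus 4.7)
The plan is to carry out the standard fluid-limit scheme in the form best suited to the measure-valued non-linear Markov process $\nabla_{\infty}^{N}$: parametrize the dynamics by cumulative net flows, prove uniform Lipschitz bounds, extract joint subsequential limits of the flows and of the measure trajectories, and identify every limit with a trajectory of $\Delta_{\infty}$ via its flow-based definition in $\left(\ref{031}\right)$.

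First I would, for each $N$, work on the Euler time scale $\tilde{\nu}_{t}^{N}=\nu_{Nt}^{N}$ and introduce cumulative outflow functions $Z_{a}^{N}(t)=\int_{0}^{t}\lambda_{a}^{N}(s)\,ds$ for $a\in\{O,A,B,AB,BA\}$, where the rates $\lambda_{a}^{N}$ are the expressions from $\left(\ref{012}\right)$--$\left(\ref{012a}\right)$ evaluated along $\tilde{\nu}^{N}$. Together with the routing identities $Y_{O}^{N}=Z_{AB}^{N}+Z_{BA}^{N}$, $Y_{A}^{N}=\tfrac{1}{2}Z_{O}^{N}$, $Y_{AB}^{N}=Z_{A}^{N}$, and their symmetric counterparts, these define cumulative inflows $Y_{a}^{N}$. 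Since every $\lambda_{a}^{N}$ is uniformly bounded by $\max_{a}\gamma_{a}=10$, the families $\{Y_{a}^{N}\}$ and $\{Z_{a}^{N}\}$ are uniformly Lipschitz on $[0,T]$ and hence relatively compact in $C([0,T],\mathbb{R})$ by Arzel\`a--Ascoli. A parallel argument, applied to the coordinates $u_{t}^{N}(x)$ whose evolution is given by the rescaled version of $\left(\ref{014}\right)$, shows that $\tilde{\nu}^{N}(\cdot)$ is uniformly Lipschitz in the KROV distance, and hence relatively compact in $C([0,T],\mathcal{M}(\mathbb{K}))$ using compactness of $\mathcal{M}(\mathbb{K})$. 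Pass to a joint subsequence along which $\tilde{\nu}^{N}\to\mu$ uniformly in KROV and $Y^{N}\to Y$, $Z^{N}\to Z$ uniformly on $[0,T]$.

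Second I would identify the limit. Passing to the limit in the integrated form of $\left(\ref{014}\right)$ gives an equation for $\mu(t)$ that coincides with the evolution of $\mu(0)$ under the NHDS $\Delta(\bar{Y})$ driven by the limiting inflow $\bar{Y}$; the Lipschitz continuity of the map $W$ in Lemma \ref{L2} is what makes this passage to the limit stable, both in $\bar{Y}$ and in the initial datum. In parallel, the pointwise balance identities between $Y_{a}^{N}$ and $Z_{b}^{N}$ pass to the limit and become the equations $\left(\ref{031}\right)$ defining $\mathcal{Y}(\mu(0))$, with the single-copy outflows $Z^{\bar{x},\bar{Y}}$ integrated against $\mu(0)$ playing the role of the empirical average in the $M$-particle formulation. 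Hence $\bar{Y}\in\mathcal{Y}(\mu(0))$ and $\mu(\cdot)$ is a trajectory of $\Delta_{\infty}$ by definition.

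The main obstacle I expect is the lack of continuity of the priority indicators $\mathbf{1}\{x_{A}>0,\ x_{BA}=0\}$ that appear in the rates $\left(\ref{012a}\right)$: a priori the pointwise limit of these indicators differs from the indicator of the pointwise limit, so the limiting rates $\lambda_{A},\lambda_{B}$ need not equal $\gamma_{A}\,\mu_{t}\{x_{A}>0,x_{BA}=0\}$. This is not a real obstruction but exactly the source of the non-uniqueness already built into $\Delta$ at configurations such as $\left(\ref{051}\right)$. I would handle it by checking that every subsequential limit flow $\bar{Y}$ is Lipschitz with constant $\max_{a}\gamma_{a}$ and satisfies the work-conserving relations $\left(\ref{EE4}\right)$--$\left(\ref{EE5}\right)$ almost everywhere in $t$; since membership in $\mathcal{Y}(\mu(0))$ only requires $\bar{Y}$ to solve $\left(\ref{031}\right)$ under the open NHDS rule, any such $\bar{Y}$ is admissible, and no separate uniqueness argument is needed.
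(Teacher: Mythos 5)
Your compactness step is fine: the self-consistent rates of $\nabla_{\infty}^{N}$ are bounded by $\max_{a}\gamma_{a}$, so the scaled cumulative flows are uniformly Lipschitz and the scaled trajectories are KROV-equicontinuous, and joint subsequential limits $\left(\mu(\cdot),\bar{Y}\right)$ exist. The genuine gap is in your identification step. You claim that ``passing to the limit in the integrated form of $\left(\ref{014}\right)$'' shows that $\mu(t)$ is the evolution of $\mu(0)$ under the NHDS $\Delta\left(\bar{Y}\right)$, and that the balance identities pass to the limit and become $\left(\ref{031}\right)$. Neither of these is a soft limit. The prelimit object is a jump process (master equation on $\frac{1}{N}\mathbb{Z}^{5+}$) while the limit is a deterministic transport of mass with reflection at the boundary and priority switching; the terms in $\left(\ref{012a}\right)$ involve indicators of events such as $\left\{x_{A}>0,\,x_{BA}=0\right\}$ whose limits are exactly the quantities that can concentrate on the boundary, so the limit of $\nu_{t}^{N}\left\{x_{a}>0\right\}$ is not determined by $\mu_{t}$ alone. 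Moreover, the right-hand sides of $\left(\ref{031}\right)$ are the outflows $Z^{\bar{x},\bar{Y}}$ of the \emph{deterministic} NHDS integrated against $\mu(0)$; to show that the limit of the aggregate stochastic outflows equals this integral you already need to know that, conditionally on the initial (scaled) queue being near $\bar{x}$, the scaled stochastic trajectory stays uniformly close on $[0,T]$ to the NHDS trajectory started at $\bar{x}$ and driven by $\bar{Y}$ --- which is the statement you are trying to prove, so the argument as written is circular. Your fallback, ``check that the limit flow satisfies the work-conserving relations $\left(\ref{EE4}\right)$--$\left(\ref{EE5}\right)$ a.e.'', is precisely the hard assertion: it does not follow from Lipschitz bounds and is not supplied by Lemma \ref{L2}, which controls only the fluid-to-fluid dependence on inflows, not the distance between the stochastic dynamics and the fluid dynamics.

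What the paper actually does at this point, and what is missing from your proposal, is a quantitative functional law of large numbers with a coupling. Working nodewise on the open system with prescribed inflow rates, it (i) represents the remaining-workload process of the priority queue as a fluid network driven by the random, stair-like work-arrival process (Lemma \ref{L7}); (ii) proves that this random inflow is uniformly close on $[0,T]$ to its deterministic mean, via the covering/exponential-bound argument giving $\left(\ref{E14}\right)$, and relates workload to queue length through $\left(\ref{E15}\right)$, $\left(\ref{e30}\right)$ and Lemma \ref{L1}; and (iii) only then invokes the Lipschitz continuity of the fluid solution map (Lemma \ref{L2}) to transfer closeness of inflows and initial data to closeness of trajectories, yielding $\sup_{t\in[0,T]}\rho_{KROV}\left(q^{N}(t),q(t)\right)\rightarrow0$; the closed system is then handled by imposing the self-consistency relations on the flows. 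This pathwise workload argument is what resolves the boundary/priority discontinuity you correctly flagged, and it is the content your step two needs but does not contain; without it the proposal establishes precompactness but not that the limit points are trajectories of $\Delta_{\infty}$.
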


\textbf{Note. }As was mentioned above, the system $\Delta_{\infty}$ does not
possess the uniqueness property. The uniqueness property for the subclass of
the trajectories of $\Delta_{\infty}$ -- the fluid solution trajectories --
does not hold as well. That means that the above set of all limit trajectories
$\mu\left(  t\right)  $ might contain more than one element, and so the
trajectory $\mu_{\nu^{N}}\left(  t\right)  ,$ which satisfies $\mu_{\nu^{N}%
}\left(  0\right)  =\mu,$ does depend on the sequence $\nu^{N}\rightarrow\mu$.

\textbf{Proof.} \textbf{1. }We start by considering the open system. Then we
can consider every node separately. Let us take the node $\bar{A},$ say. In
the open system case its evolution is defined by prescribing the initial
state, $\nu_{\bar{A}}^{N}$ -- the distribution of the quantity $q_{\bar{A}%
}\left(  0\right)  \in\mathbb{R}^{2},$ which is the initial queue, scaled by
the factor $\frac{1}{N}\ $-- together with the rate function $\lambda_{\bar
{A}}^{N}\left(  t\right)  \equiv\left\{  \lambda_{A}^{N}\left(  t\right)
,\lambda_{BA}^{N}\left(  t\right)  \right\}  \in\mathbb{R}^{2},$ $t\geq0,$
which defines the Poisson flows of incoming clients. For our applications it
is enough to consider the case when all our rate functions $\lambda$-s are
uniformly bounded:%
\begin{equation}
\mathbf{\ }\lambda_{\ast}^{N}\left(  \ast\right)  \leq CN, \label{e01}%
\end{equation}
\textbf{ }since this is definitely the case for our closed system. The service
times of the clients are exponential, with respective rates $N\gamma_{A},$
$N\gamma_{BA}$ (this scaling is due to the Euler limit we are going to study).
As above, the $BA$ clients have priority. That means that if, while a user of
class $A$ is being served, a class $BA$ user arrives, the service of $A$ user
is interrupted until the moment when there will be no $BA$ users in queue (the
preemptive priority service discipline).

For the future use we will introduce the functions%
\[
\Lambda_{i}^{N}(t)=\frac{1}{N}\int_{0}^{t}\lambda_{i}^{N}(s)ds,\ i=A,BA,
\]
which together form a 2D vector $\Lambda^{N}\left(  t\right)  .$ We will also
denote by $Nq_{\bar{A}}^{N}(t)\in\mathbb{R}^{2}$ the pair of queues at the
moment $t.$ By $NQ_{i}^{N}(t),\ i=A,BA$ we denote the number of clients which
have arrived to the server $\bar{A}$ during the time interval $\left[
0,t\right]  .$ We can as well assume that to every client the service time is
assigned at the moment of its arrival. The sum of these required service times
for clients arrived during the time interval $\left[  0,t\right]  $ will be
denoted by $NW_{i}^{N}(t)$; this function has a stair-like graph. By
$Nw_{i}^{N}(t)$ we denote the remaining required times for clients queuing
\textit{or being served }at the moment $t;$ the graph of these functions are saw-like.

We also consider the fluid model at this node. So $q_{\bar{A}}\left(
t\right)  $ will denote the evolution of the initial measure $q_{\bar{A}%
}\left(  0\right)  $ on $\mathbb{R}^{2}$ under fluid dynamics governed by the
inflow with rate $\lambda_{\bar{A}}\left(  t\right)  .$ By $Q_{i}%
(t)=\Lambda_{i}(t)$ we denote the amounts of fluid arriving to our node during
the time interval $\left[  0,t\right]  $. Again, the fluid $BA$ has priority
over the fluid $A,$ that is, it goes out first at rate $\gamma_{1}$ whenever present.

Our goal is to prove convergence to the fluid limit. Let us assume that the
scaled inflows and initial states of the node converge to those of the fluid
model as $N\rightarrow\infty$, that is
\begin{equation}
\lim_{N\rightarrow\infty}\sup_{t\in\lbrack0,T]}\Vert\Lambda^{N}(t)-\Lambda
(t)\Vert=0 \label{e4}%
\end{equation}
and
\begin{equation}
\lim_{N\rightarrow\infty}\mathbb{\rho}_{KROV}\left(  q^{N}(0),q(0)\right)  =0.
\label{e5}%
\end{equation}

In what follows, we will need the following three bounds. The first one is the
statement that the process $W^{N}(t)$ is very close to the function
$\gamma^{-1}\Lambda^{N}(t)$ -- namely,
\begin{equation}
\lim_{N\rightarrow\infty}\mathbb{E}\left(  \sup_{t\in\lbrack0,T]}\Vert
W^{N}(t)-\gamma^{-1}\Lambda^{N}(t)\Vert\right)  {}=0. \label{E14}%
\end{equation}
A simpler claim concerns the sum $Nw^{N}(0)$ of the service times of all the
users present in the queue\textbf{ }$Nq^{N}(0)$ at the initial moment $t=0.$
Namely, for the conditional distribution of $w^{N}(0)$ under the condition
$q^{N}(0)=q$ we have\textbf{ }
\begin{equation}
\mathbb{E}\left\Vert \left(  w^{N}(0)\Bigm|q^{N}(0)=q\right)  -\gamma
^{-1}q\right\Vert {}\leq\psi(N)\Vert q\Vert\label{E15}%
\end{equation}
for some $\psi(N)\rightarrow0$ as $N\rightarrow\infty.$ Moreover,%
\begin{equation}
\sup_{t\in\lbrack0,T]}\mathbb{\rho}_{KROV}\left(  w^{N}(t),\gamma^{-1}%
q^{N}(t)\right)  \leq\psi(N)\sup_{t\in\lbrack0,T]}\mathbb{\rho}_{KROV}\left(
q^{N}(t),\mathbf{0}\right)  . \label{e30}%
\end{equation}

{\small Of course, for every fixed }$t$ {\small the convergence in }$\left(
\ref{E14}\right)  $ {\small follows from the Central Limit Theorem. The
problem is that we need the convergence at all moments }$t.$ {\small We will
obtain }$\left(  \ref{E14}\right)  $ {\small by constructing the finite-point
event, which contain the one we are interested. Indeed, consider the event
}$E\left(  t_{0},c\right)  ,$ $c>0,$ {\small which consists of all
trajectories such that }%
\begin{equation}
W_{A}^{N}\left(  t_{0}\right)  >\gamma_{A}^{-1}\left(  \Lambda_{A}^{N}\left(
t_{0}\right)  +c\right)  . \label{e02}%
\end{equation}
{\small Note that the function }$\Lambda_{A}^{N}\left(  t\right)  $
{\small has its derivative }$\leq C$ {\small (see }$\left(  \ref{e01}\right)
${\small ), so if the event }$E\left(  t_{0},c\right)  $ {\small happens, then
all the events }$E\left(  t_{0}+\tau,c-C\tau\right)  $ {\small happen as well,
since the function }$W_{A}^{N}\left(  t\right)  $ {\small is non-decreasing on
every trajectory. Therefore we can replace the infinite union by the finite
one:}%
\[%
{\displaystyle\bigcup\limits_{0\leq t_{0}\leq T}}
E\left(  t_{0},c\right)  \subset%
{\displaystyle\bigcup\limits_{k=0}^{2CT/c}}
E\left(  k\frac{c}{2C},\frac{c}{2}\right)  ,
\]
{\small and use the fact that for any fixed }$t\leq T$ {\small and }$c$
{\small the probability of the event }$E\left(  t,\frac{c}{2}\right)  $
{\small is exponentially small in }$N.$ {\small The remaining cases
(corresponding to the second coordinate, }$AB,$ {\small and to the lower
estimate in }$\left(  \ref{e02}\right)  ${\small ) are immediate. }

{\small The proof of }$\left(  \ref{e30}\right)  $ {\small proceeds in a
similar way. First of all, there is a natural coupling between the processes
}$w^{N}(t)$ {\small and }$q^{N}(t),$ {\small corresponding to the fact that we
can assume that the service time of every client is known at its arrival
moment. Consider the event }$E^{\prime}\left(  t_{0},c\right)  ,$
{\small consisting of the trajectories where}%
\[
w_{A}^{N}(t_{0})\geq\gamma_{A}^{-1}\left(  q_{A}^{N}(t_{0})+c\right)  .
\]
{\small We would like to use the argument similar to the above, saying that if
the bad event }$E^{\prime}\left(  t_{0},c\right)  $ {\small happens at }%
$t_{0}${\small , then on a whole segment around }$t_{0}$ {\small something
unlikely has to happen as well. Partially it can be done, since for every
trajectory we have }$w_{A}^{N}(t_{0}+\tau)\geq w_{A}^{N}(t_{0})-\tau,$
{\small and so we have for every trajectory in }$E^{\prime}\left(
t_{0},c\right)  $ {\small that }$w_{A}^{N}(t_{0}+\tau)\geq\gamma_{A}%
^{-1}\left(  q_{A}^{N}(t_{0})+c-\gamma_{A}\tau\right)  .$ {\small If we can
claim that }$q_{A}^{N}(t_{0})+c-\gamma_{A}\tau>q_{A}^{N}(t_{0}+\tau)+c/2$
{\small for all }$\tau$ {\small small enough, then we would be done. However,
the outcome that }$q_{A}^{N}(t_{0}+\tau)>q_{A}^{N}(t_{0})+c/2-\gamma_{A}\tau$
{\small is not excluded, even if }$\tau$ {\small is very small. Yet, the
probability of the increase of the queue by }$c/2-\gamma_{A}\tau$
{\small during the time }$\tau$ {\small is exponentially small in }$\tau$
{\small as }$\tau\rightarrow0.$ {\small Therefore, the event }$\cup_{0\leq
t_{0}\leq T}E^{\prime}\left(  t_{0},c\right)  $ {\small is contained in the
union }$\left[  \cup_{k=0}^{2C^{\prime}T/c}E^{\prime}\left(  k\frac
{c}{2C^{\prime}},\frac{c}{2}\right)  \right]  \cup\left[  \cup_{k=0}%
^{2C^{\prime}T/c}E^{\prime\prime}\left(  k\frac{c}{2C^{\prime}},\frac{c}%
{4}\right)  \right]  $ {\small for some suitably chosen }$C^{\prime},$
{\small where }$E^{\prime\prime}\left(  k\frac{c}{2C^{\prime}},\frac{c}%
{4}\right)  $ {\small is the event that on the segment }$\left[  k\frac
{c}{2C^{\prime}},\left(  k+1\right)  \frac{c}{2C^{\prime}}\right]  $
{\small the increment }$q_{A}^{N}(\left(  k+1\right)  \frac{c}{2C^{\prime}%
})-q_{A}^{N}(k\frac{c}{2C^{\prime}})>c/4,$ {\small and we are done.}

From (\ref{e4}) and (\ref{E14}), we get
\begin{equation}
\lim_{N\rightarrow\infty}{}\mathbb{E}\left(  \sup_{t\in\lbrack0,T]}\left\Vert
W^{N}(t)-\gamma^{-1}\Lambda(t)\right\Vert \right)  =0. \label{e2}%
\end{equation}
Note that
\begin{align*}
\mathbb{\rho}_{KROV}\left(  w^{N}(0),\gamma^{-1}q(0)\right)   &  \leq
{}\mathbb{E}\left(  \mathbb{E}\left\Vert \left(  w^{N}(0)\Bigm|q^{N}%
(0)=q\right)  -\gamma^{-1}q\right\Vert \right) \\
&  +\mathbb{\rho}_{KROV}\left(  \gamma^{-1}q^{N}(0),\gamma^{-1}q(0)\right)
{}.
\end{align*}
(Here we treat $w^{N}(0)$ in the l.h.s. as the probability distribution.) So
we get from (\ref{e5}) and (\ref{E15}) that
\begin{equation}
\mathbb{\rho}_{KROV}\left(  w^{N}(0),\gamma^{-1}q(0)\right)  \leq
\varphi(N)\mathbb{\rho}_{KROV}\left(  q(0),\delta_{\mathbf{0}}\right)  ,
\label{e3}%
\end{equation}
where $\varphi(N)\rightarrow0$ as $N\rightarrow\infty,$ and $\mathbf{0\in
}\mathbb{R}^{2}$ is the origin.

Next, we need the following estimate (see for instance \cite{LT}):

\begin{lemma}
\label{L2} Let $\Lambda(t)$ and $\Lambda^{\prime}(t)$ be two inflows to the
fluid priority node with initial (non-random) fluid levels $q(0)$ and
$q^{\prime}(0)$. Then
\[
\sup_{t\in\lbrack0,T]}\Vert q(t)-q^{\prime}(t)\Vert\leq L(\Vert q(0)-q^{\prime
}(0)\Vert+\sup_{t\in\lbrack0,T]}\Vert\Lambda(t)-\Lambda^{\prime}(t)\Vert),
\]
where $L=L(\gamma_{1},\gamma_{2})$.
\end{lemma}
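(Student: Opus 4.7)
The plan is to reduce everything to the Lipschitz continuity of the scalar reflection map $W$ defined in $\left(\ref{EE6a}\right)$, and then exploit the explicit representations $\left(\ref{EE1}\right)$--$\left(\ref{EE3}\right)$ that express the two-class priority node in terms of two calls of $W$ plus a linear correction. In this way the priority structure is absorbed into these formulas and does not need to be argued about directly.

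The first step is to show that for a fixed service rate $\gamma$, the map $(x_0,Y)\mapsto W(\gamma,x_0,Y)$ is Lipschitz in the sup norm. From the decomposition $x(t)=V(t)+U(t)$ with $V(t)=x_0+Y(t)-\gamma t$ and $U(t)=\max\{0,-\inf_{s\leq t}V(s)\}$, one obtains $|V(t)-V'(t)|\leq|x_0-x_0'|+|Y(t)-Y'(t)|$, and, using the 1-Lipschitz dependence of $u\mapsto\max\{0,-\inf u\}$ on its argument in the sup norm, $|U(t)-U'(t)|\leq|x_0-x_0'|+\sup_{s\leq t}|Y(s)-Y'(s)|$. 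Adding these yields the single-class estimate
\[
\sup_{t\leq T}\bigl|W(\gamma,x_0,Y)(t)-W(\gamma,x_0',Y')(t)\bigr|\leq 2|x_0-x_0'|+2\sup_{t\leq T}|Y(t)-Y'(t)|.
\]

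Next I would apply this separately to the two fluids. Writing $q_1$ for the priority class and $q_2$ for the low-priority class, the priority one sees the node as if it were alone, so $q_1(t)=W(\gamma_1,q_1(0),\Lambda_1(t))$, and the previous bound immediately gives
\[
\sup_{t\leq T}|q_1(t)-q_1'(t)|\leq 2|q_1(0)-q_1'(0)|+2\sup_{t\leq T}|\Lambda_1(t)-\Lambda_1'(t)|.
\]
For $q_2$ I would invoke the aggregate-workload identity of the type $\left(\ref{EE3}\right)$: with $r=\gamma_2/\gamma_1$,
\[
q_2(t)=W\bigl(\gamma_2,\,rq_1(0)+q_2(0),\,r\Lambda_1(t)+\Lambda_2(t)\bigr)-rq_1(t).
\]
Applying the single-class Lipschitz estimate to the $W$-term, and the bound just obtained for $q_1$ to the subtracted term $rq_1(t)$, the triangle inequality yields the desired estimate for $q_2$ with a constant depending only on $r$, hence on $(\gamma_1,\gamma_2)$. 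Taking the maximum over the two coordinates completes the proof.

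The main obstacle is not in these inequalities, which are straightforward, but in justifying the aggregate-workload identity for $q_2$. It rests on the work-conserving condition $\left(\ref{EE4}\right)$--$\left(\ref{EE5}\right)$ and the priority rule: as long as $q_1>0$ the node commits all of its capacity to class 1, so the $r$-scaled aggregate $rq_1+q_2$ behaves as a single-class reflection driven by the aggregated inflow $r\Lambda_1+\Lambda_2$ served at rate $\gamma_2$. One verifies by checking the defining relations on busy and idle phases separately that the right-hand side of the displayed identity satisfies the same evolution equation and initial condition as $q_2$, and hence coincides with $q_2$ throughout $[0,T]$, the transitions between phases being handled by the continuity of both sides.
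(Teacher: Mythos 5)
Your argument is correct, but it follows a genuinely different decomposition from the paper's. The paper proves the single-class Lipschitz estimate in a more general form, for a node with an arbitrary (not necessarily linear) cumulative capacity $S_{1}(\cdot)$, obtains from it both $q_{1}$ and the unused capacity $U$ as Lipschitz images of $\left(  q_{1}(0),\Lambda_{1}\right)  ,$ and then treats the low-priority class as a second single-class node whose variable capacity is proportional to $U$; the conclusion is a composition of two Lipschitz maps. You instead use only the constant-rate map $W$ of $\left(  \ref{EE6a}\right)  $ twice: once for the priority class as in $\left(  \ref{EE2}\right)  ,$ and once for the $\gamma_{2}/\gamma_{1}$-weighted aggregate workload, subtracting $rq_{1}(t)$ as in $\left(  \ref{EE3}\right)  .$ Within the paper's framework this is even more economical than you suggest, because the identity you regard as the main obstacle is not something you need to re-derive: relations $\left(  \ref{EE1}\right)  $--$\left(  \ref{EE3}\right)  $ are how the paper \emph{defines} the evolution of the priority node, so you may simply cite $\left(  \ref{EE3}\right)  $ and the estimate follows from your single-class bound and the triangle inequality. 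Your consistency check of the aggregation identity is essentially right, but the reason it holds is work conservation $\left(  \ref{EE4}\right)  $--$\left(  \ref{EE5}\right)  $ alone, not the priority rule: whichever class is in service, the outflow rates satisfy $rz_{1}+z_{2}=\gamma_{2}$ whenever the node is nonempty, so the weighted aggregate is a rate-$\gamma_{2}$ reflection of $rq_{1}(0)+q_{2}(0)+r\Lambda_{1}+\Lambda_{2}-\gamma_{2}t$; the priority discipline enters only in justifying $q_{1}(t)=W(\gamma_{1},q_{1}(0),\Lambda_{1})(t)$ for the high-priority class. The trade-off between the two routes: yours leans on the workload identity and stays with constant-capacity reflections, while the paper's variable-capacity argument is self-contained (independent of the aggregation identity) and iterates directly to nodes with more priority classes; both yield a constant $L$ depending only on $\gamma_{1},\gamma_{2}$.
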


\begin{proof}
Let us consider a fluid single-class node with variable capacity. Namely, let
$q_{1}(0)$ be the (scalar) initial fluid level, let $\Lambda_{1}(t)$ be the
inflow, and introduce $S_{1}(t)$ to be the server capacity, which is the
amount of work the server can do during the time interval $\left[  0,t\right]
$. (For example, in our situation $S_{1}(t)=\gamma_{1}t,$ but we will consider
more general case, with $S_{1}(t)$ not necessarily linear.) Introduce the
\textit{virtual level}
\begin{equation}
V(t)=q_{1}(0)+\Lambda_{1}(t)-S_{1}(t) \label{E7}%
\end{equation}
and the \textit{unused service capacity}%
\begin{equation}
U(t)=\max\{0,-\inf_{s\in\lbrack0,t]}V(s)\}. \label{E8}%
\end{equation}
Then
\begin{equation}
q_{1}(t)=V(t)+U(t),\quad t\geq0. \label{E9}%
\end{equation}
Let us introduce the $\sup$ norm on the space of functions. Then the
functionals $\{q_{1}(0)$, $\Lambda_{1}(\cdot)$, $S_{1}(\cdot)\}\rightarrow
V(\cdot)$ and $\{V(\cdot),U(\cdot)\}\rightarrow q_{1}(\cdot),$ given by
(\ref{E7}) and (\ref{E9}), have finite norms, since they are linear. The
non-linear functional $V(\cdot)\rightarrow U(\cdot),$ given by (\ref{E8}), has
finite norm as well. Indeed, the functional $V(\cdot)\rightarrow\inf
_{\lbrack0,\cdot]}V(\cdot)$ has norm $\leq1,$ since for any pair
$x(\cdot),y(\cdot)$ of scalar functions
\[
\left\vert \inf_{s\in\lbrack0,t]}x(s)-\inf_{s\in\lbrack0,t]}y(s)\right\vert
\leq\sup_{s\in\lbrack0,t]}|x(s)-y(s)|,
\]
and so%
\[
\sup_{t\in\lbrack0,T]}\left\vert \inf_{s\in\lbrack0,t]}x(s)-\inf_{s\in
\lbrack0,t]}y(s)\right\vert \leq\sup_{t\in\lbrack0,T]}|x(t)-y(t)|;
\]
the same holds for the functional $\left\{  x\left(  t\right)  \right\}
\rightarrow\max\{0,x(t)\},$ since
\[
\sup_{t\in\lbrack0,T]}\left\vert \max\{0,x(t)\}-\max\{0,y(t)\}\right\vert
\leq\sup_{t\in\lbrack0,T]}|x(t)-y(t)|.
\]
Therefore the composed functional, taking the triplet $\{q_{1}(0)$,
$\Lambda_{1}(\cdot)$, $S_{1}(\cdot)\}$ to the pair $\{q_{1}(\cdot)$,
$U(\cdot)\},$ defined by (\ref{E7})-(\ref{E9}), has finite norm.

That proves the desired statement for the first component of $q.$ Now, to
finish the proof, we note that the capacity of the server for the users of the
second class is given by $S_{2}(t)=\gamma_{2}U(t)$, where $U(t)$ is the unused
server capacity for the high-priority class (with $S_{1}(t)=\gamma_{1}t$,
$t\geq0$). Then we repeat the argument above.
\end{proof}

Next we formulate as a separate statement the obvious remark that the
evolution of the current remaining service time variable, $w\left(  t\right)
,$ coincides with the evolution of the level of some evidently constructed
fluid system.

\begin{lemma}
\label{L7} Let users $u_{j}$ of two possible types $i=A,BA$, with service
times $h_{i}^{j}$ arrive at the initially empty server at times $t_{i}^{j}$,
$j=1,2,\dots$, and let $w_{i}(t)$ be the evolution of the remaining service
times. Consider also the fluid model with two classes of fluids, which starts
in the empty state and is governed by the fluid inflows
\[
\Lambda_{i}(t)=\gamma_{i}\sum_{j:t_{i}^{j}\leq t}h_{i}^{j}.
\]
(I.e., our fluids have \textquotedblleft viscosities\textquotedblright%
\ $\gamma_{1}^{-1}$ and $\gamma_{2}^{-1}.$) Then at every moment $t\geq0$ the
current levels of fluids at the server equal to $\gamma_{i}w_{i}(t)$, $i=1,2$.
\end{lemma}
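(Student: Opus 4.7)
The plan is to exhibit an explicit coupling between the two systems and then verify that $L_i(t) := \gamma_i w_i(t)$ satisfies the same defining relations as the fluid level of class $i$; uniqueness of the fluid evolution (a priority node with Lipschitz inflows, as in Lemma \ref{L2}) will then force the identification. Both systems start empty, so the only thing to check is that the two sides undergo identical jumps at the arrival instants $t_i^j$ and identical continuous evolutions in between.

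First I would introduce the cumulative-work function $T_i(t)$ -- the total time the discrete server has spent on class-$i$ users during $[0,t]$ -- and the cumulative arrived service time $H_i(t)=\sum_{j:t_i^j\le t}h_i^j$. Then $w_i(t)=H_i(t)-T_i(t)$, so $\gamma_i w_i(t)=\Lambda_i(t)-\gamma_i T_i(t)$. Since the fluid level satisfies $L_i(t)=\Lambda_i(t)-D_i(t)$ with $D_i$ the cumulative outflow, the identity $L_i(t)=\gamma_i w_i(t)$ is equivalent to $D_i(t)=\gamma_i T_i(t)$. At an arrival $t=t_i^j$, the function $\Lambda_i$ has a jump of size $\gamma_i h_i^j$; the discrete $w_i$ jumps by $h_i^j$, so $\gamma_i w_i$ jumps by $\gamma_i h_i^j$; simultaneously the fluid level $L_i$ absorbs the incoming mass $\gamma_i h_i^j$ (no outflow can occur in the instant of the jump), hence the jumps coincide.

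Between arrivals, the discrete server is work-conserving with the priority $BA\succ A$: the server drains $w_{BA}$ at rate $1$ whenever $w_{BA}>0$, and drains $w_A$ at rate $1$ only on intervals where $w_{BA}=0$ and $w_A>0$. Consequently $\gamma_{BA} w_{BA}$ decreases at rate $\gamma_{BA}$ exactly when $\gamma_{BA}w_{BA}>0$, and $\gamma_A w_A$ decreases at rate $\gamma_A$ exactly when $\gamma_{BA}w_{BA}=0$ and $\gamma_A w_A>0$; this is precisely the ODE satisfied by a two-class work-conserving priority fluid node with capacities $\gamma_{BA},\gamma_A$ and priority $BA\succ A$, which is the system whose evolution we want to compare with. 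Combining the matching of jumps and of continuous derivatives, we get that $L_i(\cdot)$ and $\gamma_i w_i(\cdot)$ both solve the same Cauchy problem for the fluid dynamics driven by $\Lambda_i$, with zero initial data.

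The last step is to invoke uniqueness: by Lemma \ref{L2} the trajectory of the priority fluid node is a Lipschitz functional of the pair (initial level, inflow), hence the solution is unique and the two processes coincide for every $t\ge 0$. The only delicate point to be careful about is the interpretation of $\Lambda_i$, which is a pure-jump step function rather than Lipschitz; the cleanest way around this is to approximate each jump by a steep linear ramp of slope $n$ on $[t_i^j,t_i^j+1/n]$, apply Lemma \ref{L2} to the ramp approximations, and pass to the limit $n\to\infty$, using the Lipschitz continuity of the solution map to obtain convergence of both $L_i$ and of $\gamma_i w_i$ (in the ramp model, $w_i$ grows linearly at the same rate). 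This passage to the limit is the only non-routine point; everything else is bookkeeping.
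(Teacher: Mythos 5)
Your identification is the right one, and in fact the paper offers no proof at all of Lemma \ref{L7} (it is presented as an ``obvious remark''), so the substance of your argument -- the jumps of $\gamma_i w_i$ and of the fluid level both equal $\gamma_i h_i^j$ at the arrival epochs, and between arrivals $\gamma_{BA}w_{BA}$ drains at rate $\gamma_{BA}$ when positive while $\gamma_A w_A$ drains at rate $\gamma_A$ exactly when $w_{BA}=0$, $w_A>0$ -- is exactly the intended bookkeeping. The problem is the step you yourself single out as the only non-routine one. Your ramp regularization does not converge in the topology in which Lemma \ref{L2} is stated: if $\Lambda_i$ has a jump of size $J=\gamma_i h_i^j$ at $t_i^j$ and $\Lambda_i^{(n)}$ replaces it by a steep ramp, then $\sup_{t\in[0,T]}\vert\Lambda_i(t)-\Lambda_i^{(n)}(t)\vert$ is of order $J$ for every $n$ (just after $t_i^j$ the step has risen by $J$ while the ramp has barely moved), so the Lipschitz estimate of Lemma \ref{L2} only yields a bound of order $LJ$, not convergence as $n\to\infty$. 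As written, the passage to the limit is a genuine gap.

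The gap is easy to close, and the cleanest repair avoids approximation altogether. The fluid evolution with a pure-jump inflow is perfectly well defined by the explicit reflection formulas the paper already uses, namely (\ref{E7})--(\ref{E9}) (equivalently (\ref{EE6})): these require only that the inflow be nondecreasing and right-continuous, not Lipschitz, and they also settle uniqueness. Then verify the identity directly: for the high-priority class, the remaining-work process of a work-conserving unit-rate server started empty satisfies the Skorokhod identity $w_{BA}(t)=H_{BA}(t)-t+\max\bigl\{0,-\inf_{s\le t}\bigl(H_{BA}(s)-s\bigr)\bigr\}$ with $H_{BA}(t)=\sum_{j:t_{BA}^j\le t}h_{BA}^j$; multiplying by $\gamma_{BA}$ gives exactly (\ref{E7})--(\ref{E9}) with inflow $\Lambda_{BA}$ and capacity $\gamma_{BA}t$. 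For the low-priority class, the cumulative time available to $A$-work up to $t$ is precisely the cumulative idle time of the high-priority workload, i.e.\ the unused-capacity term of the first reflection, and repeating the same identity with that capacity reproduces the fluid formula for the second class. This is a direct verification, needs no uniqueness-by-approximation argument, and is presumably what the authors had in mind when calling the statement obvious. (Alternatively, if you want to keep the ramps, you must compare the two solutions only at the right endpoints $t_i^j+1/n$ of the ramp intervals -- where the cumulative inflows agree again and the levels differ by $O(1/n)$ -- and then propagate with the Lipschitz dependence on the \emph{initial condition} from Lemma \ref{L2} together with the flow property; uniform-in-time comparison of the inflows is not available.)
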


Another auxiliary result is needed:

\begin{lemma}
\label{L1} Let $q\in\mathbb{R}^{2}$ be (random) queue to our server, and $w$
be the corresponding (random) amount of total work (=service time needed). The
service times of the users are independent, and within the $i$-th class
identically distributed with mean $\gamma_{i}^{-1}$. Then, for any
$v\in\mathbb{R}^{2}$,
\begin{equation}
{}\mathbb{\rho}_{KROV}\left(  \gamma^{-1}q,\delta_{v}\right)  \leq
{}\mathbb{\rho}_{KROV}\left(  w,\delta_{v}\right)  . \label{e1}%
\end{equation}

\end{lemma}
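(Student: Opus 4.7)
My plan is to exploit the fact that when one of the two arguments of $\rho_{KROV}$ is a Dirac mass, the infimum over couplings is attained by the product coupling, so it collapses to a plain expectation. Concretely, for any metric $\rho$ on the underlying space and any measure $\mu$,
\[
\rho_{KROV}(\mu,\delta_v)=\int \rho(x,v)\,d\mu(x),
\]
because every coupling of $\mu$ with $\delta_v$ is forced to send $x$ to $v$. Applying this with $\rho(x,y)=\|x-y\|$ (the $\ell^1$-norm used throughout the paper) gives
\[
\rho_{KROV}(\gamma^{-1}q,\delta_v)=\mathbb{E}\|\gamma^{-1}q-v\|,\qquad
\rho_{KROV}(w,\delta_v)=\mathbb{E}\|w-v\|.
\]
So the lemma reduces to the scalar inequality $\mathbb{E}\|\gamma^{-1}q-v\|\le \mathbb{E}\|w-v\|$.

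The key structural input is the conditional expectation identity $\mathbb{E}[w\mid q]=\gamma^{-1}q$, meant componentwise: conditional on $q=(q_A,q_{BA})$, the total remaining work is a sum of $q_A$ i.i.d.\ service times of mean $\gamma_A^{-1}$ and $q_{BA}$ i.i.d.\ service times of mean $\gamma_{BA}^{-1}$, by independence of service times across users and their class-dependent mean. Since the function $x\mapsto \|x-v\|$ is convex on $\mathbb{R}^2$, Jensen's inequality applied to the conditional law of $w$ given $q$ yields
\[
\|\gamma^{-1}q-v\|=\bigl\|\mathbb{E}[w\mid q]-v\bigr\|\le \mathbb{E}\bigl[\|w-v\|\,\bigm|\,q\bigr].
\]
Taking expectations and using the tower property gives $\mathbb{E}\|\gamma^{-1}q-v\|\le \mathbb{E}\|w-v\|$, which is exactly the required bound by the observation in the first paragraph.

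There is no serious obstacle here; the only points requiring care are (i) justifying the formula $\rho_{KROV}(\mu,\delta_v)=\int\|x-v\|d\mu$, which is immediate from the definition of the coupling, and (ii) confirming the conditional mean identity $\mathbb{E}[w\mid q]=\gamma^{-1}q$, which is just linearity of expectation applied to the conditional sum of i.i.d.\ service times per class.
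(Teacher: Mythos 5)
Your proof is correct and is essentially the paper's own argument: the paper also applies convexity of the norm (Jensen) to the conditional law of $w$ given $q=\bar q$, using implicitly that $\mathbb{E}[w\mid q=\bar q]=\gamma^{-1}\bar q$, and then averages over $\bar q$. Your only addition is to spell out explicitly the reduction $\rho_{KROV}(\mu,\delta_v)=\int\|x-v\|\,d\mu(x)$, which the paper leaves implicit.
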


\begin{proof}
Since the norm $\Vert\cdot\Vert$ is convex, we have for the conditional random
variable $w\Bigm|q=\bar{q}$ that $\left\Vert \gamma^{-1}\bar{q}-v\right\Vert
\leq\mathbb{E}\Vert\left(  w\Bigm|q=\bar{q}\right)  -v\Vert.$ Averaging over
$\bar{q}$ gives $\left(  \ref{e1}\right)  .$
\end{proof}

Now, we derive the following result

\begin{proposition}
Under the assumptions made above,
\begin{equation}
\lim_{N\rightarrow\infty}\sup_{t\in\lbrack0,T]}\mathbb{\rho}_{KROV}\left(
q^{N}(t),q(t)\right)  =0. \label{e6}%
\end{equation}

\end{proposition}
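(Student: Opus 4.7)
The plan is to realize both $w^N(\cdot)$ and the rescaled target $\gamma^{-1}q(\cdot)$ as trajectories of one and the same deterministic fluid priority node (with different initial data and inflows), apply Lemma \ref{L2} to reduce the core comparison to the a priori bounds (\ref{e2}) and (\ref{e3}), and then use (\ref{e30}) to transfer the resulting KROV convergence from workloads to queues.

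First, applied pathwise, Lemma \ref{L7} identifies $w^N(\cdot)$ with the level trajectory of a fluid priority node driven by the random (piecewise-constant, non-decreasing) inflow $W^N(\cdot)$ from initial level $w^N(0)$. The same construction realizes $\gamma^{-1}q(\cdot)$ as the level trajectory of the \emph{same} priority node driven by the deterministic inflow $\gamma^{-1}\Lambda(\cdot)$ from initial level $\gamma^{-1}q(0)$. I couple $w^N(0)$ with $\gamma^{-1}q(0)$ optimally for $\rho_{KROV}$ and let the subsequent Poisson arrivals and service-time draws of the stochastic node be independent of that initial coupling; then Lemma \ref{L2} applied on each realization gives, with a constant $L$ depending only on $\gamma_A,\gamma_{BA}$,
\[
\sup_{t\in[0,T]}\bigl\|w^N(t)-\gamma^{-1}q(t)\bigr\| \leq L\Bigl(\bigl\|w^N(0)-\gamma^{-1}q(0)\bigr\| + \sup_{t\in[0,T]}\bigl\|W^N(t)-\gamma^{-1}\Lambda(t)\bigr\|\Bigr).
\]

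Taking expectations, the two summands on the right tend to $0$ by (\ref{e3}) and (\ref{e2}) respectively, so $\mathbb{E}\sup_{t\in[0,T]}\|w^N(t)-\gamma^{-1}q(t)\|\to 0$. Since $\rho_{KROV}(X,Y)\leq\mathbb{E}\|X-Y\|$ under any coupling, this yields $\sup_{t\in[0,T]}\rho_{KROV}(w^N(t),\gamma^{-1}q(t))\to 0$. Combining this with (\ref{e30})---whose right-hand side is $o(1)$ because $\sup_t\rho_{KROV}(q^N(t),\delta_{\mathbf{0}})$ is bounded uniformly in $N$ by total-mass conservation---gives $\sup_{t\in[0,T]}\rho_{KROV}(w^N(t),\gamma^{-1}q^N(t))\to 0$, and then the triangle inequality delivers $\sup_{t\in[0,T]}\rho_{KROV}(\gamma^{-1}q^N(t),\gamma^{-1}q(t))\to 0$. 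A Lipschitz rescaling of the underlying metric by $\max_i\gamma_i$ converts this into the desired $\sup_{t\in[0,T]}\rho_{KROV}(q^N(t),q(t))\to 0$.

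The main technical obstacle I expect is the pathwise invocation of Lemma \ref{L2} with a stair-like, non-Lipschitz inflow $W^N(\cdot)$. However, inspection of the proof of that lemma shows that its three constituent functionals $\{q(0),\Lambda,S\}\mapsto V$, $V\mapsto U$, and $\{V,U\}\mapsto q$ are each $\sup$-norm Lipschitz on arbitrary bounded inputs, and no continuity of $\Lambda$ in $t$ is actually used anywhere; the Lipschitz-in-$t$ hypothesis on $\Lambda$ can thus be relaxed to being non-decreasing of bounded variation, which $W^N$ trivially satisfies. With that routine extension the three pieces fit together as above.
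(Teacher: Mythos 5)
Your proposal is correct and follows essentially the same route as the paper: the triangle inequality through the workload $w^N(t)$, Lemma \ref{L7} to realize the workload as a fluid-node trajectory, the pathwise/conditional use of Lemma \ref{L2} with the coupled initial data, and then the bounds (\ref{e2}), (\ref{e3}), (\ref{e30}) to conclude. The only cosmetic difference is that you justify the uniform boundedness of $\sup_{t}\rho_{KROV}(q^{N}(t),\delta_{\mathbf{0}})$ explicitly (for the open node this follows from the bounded rates (\ref{e01}) rather than mass conservation, but either way it holds), a point the paper leaves implicit.
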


\begin{proof}
First,
\[
\sup_{t\in\lbrack0,T]}{}\mathbb{\rho}_{KROV}\left(  q^{N}(t),q(t)\right)
\leq\max\{\gamma_{1},\gamma_{2}\}\sup_{t\in\lbrack0,T]}{}\mathbb{\rho}%
_{KROV}\left(  \gamma^{-1}q^{N}(t),\gamma^{-1}q(t)\right)  .
\]
Then we write a chain of inequalities. First of all we have
\begin{align*}
&  \sup_{t\in\lbrack0,T]}{}\mathbb{\rho}_{KROV}\left(  \gamma^{-1}%
q^{N}(t),\gamma^{-1}q(t)\right) \\
&  \leq\sup_{t\in\lbrack0,T]}{}\mathbb{\rho}_{KROV}\left(  w^{N}%
(t),\gamma^{-1}q^{N}(t)\right)  +\sup_{t\in\lbrack0,T]}{}\mathbb{\rho}%
_{KROV}\left(  w^{N}(t),\gamma^{-1}q(t)\right)  ,
\end{align*}
and the first summand can be bounded by $\left(  \ref{e30}\right)  .$ To
estimate the distance $\mathbb{\rho}_{KROV}\left(  w^{N}(t),\gamma
^{-1}q(t)\right)  $ we have to exhibit some joint distribution of $\gamma
^{-1}q(t)$ and $w^{N}(t).$ We take the following one: first, we choose the
coupling between $\gamma^{-1}q(0)$ and $w^{N}(0),$ using $\left(
\ref{E15}\right)  $ and $\left(  \ref{e5}\right)  ,$ getting%
\[
\mathbb{\rho}_{KROV}\left(  w^{N}(0),\gamma^{-1}q(0)\right)  \leq
2\psi(N)\mathbb{\rho}_{KROV}\left(  q(0),\mathbf{0}\right)  .
\]
Given the joint realization of the initial values $\left(  \gamma
^{-1}q(0),w^{N}(0)\right)  ,$ the evolution of the coordinate $q(t)$ is
deterministic, defined by the flows $\Lambda_{\bar{A}}(t)$ of the arriving
fluids. The evolution $w^{N}(t)$ is stochastic, governed by the Poisson
process with net rates $\Lambda_{i}^{N}(t).$ Therefore for every $t$ we have
to exhibit the coupling between the distribution of the vector $w^{N}(t)$ and
deterministic value $q(t).$ The resulting $KROV$ distance is precisely what
the Lemmas \ref{L2} and \ref{L7} allow us to control:%
\begin{align*}
&  \mathbb{\rho}_{KROV}\left(  w^{N}(t),\gamma^{-1}q(t)\Bigm|w^{N}%
(0),q(0)\right) \\
&  \equiv\mathbb{E}\left(  \left\Vert w^{N}(t)-\gamma^{-1}q(t)\right\Vert
\Bigm|w^{N}(0),q(0)\right) \\
&  \leq L\left(  \Vert w^{N}(0)-\gamma^{-1}q(0)\Vert+\mathbb{E}\left(
\sup_{s\in\lbrack0,t]}\Vert W^{N}(s)-\gamma^{-1}\Lambda(s)\Vert\right)
\right)  .
\end{align*}
It remains to apply bounds (\ref{e2}) and (\ref{e3}) and deduce (\ref{e6}).
\end{proof}

\textbf{2. }The proof of our statement for the closed system does not require
any extra arguments, since the closed system is a special case of the open
system, where all the flows satisfy the relations defining the closed system.

$\blacksquare$

\section{Main result}

In this section we finally formulate and prove our main theorem, which claims
that the NLMP started from some special initial state behaves similarly to the
fluid model in its periodic regime, at all times $t\in\left(  0,\infty\right)
$. Since we know already that the NLMP is in turn a limit of networks of size
$M,$ as $M\rightarrow\infty,$ our theorem implies that the large size $\left(
M\gg1\right)  $ Markov process $\nabla_{M}^{N}$ behaves similarly to the fluid
model for a very long time, which time diverges as $M\rightarrow\infty,$
provided the number $N$ of clients per node exceeds some value $N_{0}.$ In
particular, there are initial states for the networks $\nabla_{M}^{N},$ which
lead to a long time oscillations, before the network reaches its stationary state.

\begin{theorem}
\label{MR} Let $\varepsilon>0.$ Then there exist the values $N_{0},$
$\varepsilon^{\prime}>0,$ $\alpha>0$ and $E<\infty$ such that for all
$N>N_{0}$ the states $\nu^{N}\left(  t\right)  $ of the NLMP process
$\nabla_{\infty}^{N},$ started at the initial state $\nu^{N}\left(  0\right)
$ with the properties:%
\[
\rho_{KROV}\left(  \nu^{N}\left(  0\right)  ,\delta_{x\left(  0\right)
}\right)  <\varepsilon^{\prime},
\]
with $x\left(  0\right)  \in\mathcal{C},$
\[
\left\langle \exp\left\{  \alpha L\left(  \bar{x}\right)  \right\}
\right\rangle _{\nu^{N}\left(  0\right)  }<E,
\]

\noindent satisfies for all $t>0$%
\begin{equation}
\rho_{KROV}\left(  \nu^{N}\left(  t\right)  ,\delta_{x_{\mathbf{\nu}}}\right)
<\varepsilon,\label{002}%
\end{equation}
where $x_{\mathbf{\nu}}\in\mathcal{C}$ is some moving point, depending on the
process $\mathbf{\nu=}\left\{  \nu^{N}\left(  t\right)  ,t\geq0\right\}  $. In
particular, the process $\nu^{N}\left(  t\right)  $ has no limit as
$t\rightarrow\infty.$
\end{theorem}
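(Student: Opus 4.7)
\textbf{Proof plan for Theorem \ref{MR}.} The strategy is to iterate a one-period bound: show that over a real-time interval of length $NT_{0}$ (where $T_{0}$ is a fluid time scale), the NLMP trajectory stays in a controlled KROV-neighbourhood of the cycle $\mathcal{C}$ and returns, at time $NT_{0}$, to a state satisfying the hypotheses of Proposition \ref{P1}, with an exponential-moment bound no worse than the initial one. Once such a one-period bound is in place, induction gives \eqref{002} for all $t>0$.

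To set up the one-period bound, I would first use Proposition \ref{P1} on the fluid side: choose $T_{0}$ large enough so that \emph{any} fluid solution $\mu(\cdot)$ of $\Delta_{\infty}$ with initial data satisfying $\rho_{KROV}(\mu(0),\delta_{x(0)})<\varepsilon^{\prime}$ and $\langle\exp\{\alpha L\}\rangle_{\mu(0)}<E$ enjoys, at time $T_{0}$, the stronger bounds $\rho_{KROV}(\mu(T_{0}),\delta_{z})<\varepsilon^{\prime}/2$ and $\langle\exp\{\alpha L\}\rangle_{\mu(T_{0})}<E/2$, uniformly in the choice of fluid solution. The uniformity part is exactly what is asserted in the ``Moreover'' clause of Proposition \ref{P1}. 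Then I invoke Theorem \ref{T11}: the scaled NLMP trajectory $\tilde{\nu}^{N}(t)=\nu^{N}(Nt)$ converges (along any subsequence) in KROV, uniformly on $[0,T_{0}]$, to some fluid solution of $\Delta_{\infty}$ with the same initial data. Therefore, for $N>N_{0}$ large enough, $\sup_{t\in[0,T_{0}]}\rho_{KROV}(\tilde{\nu}^{N}(t),\mu(t))$ is as small as we please, and combining with the fluid estimate gives $\rho_{KROV}(\nu^{N}(s),\delta_{x_{\nu}(s)})<\varepsilon$ for all $s\in[0,NT_{0}]$ and a suitable moving $x_{\nu}(s)\in\mathcal{C}$ (using also the local-attractor Continuity property from Section 4.2).

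The step I expect to be the main obstacle is propagating the exponential-moment bound $\langle\exp\{\alpha L\}\rangle_{\nu^{N}(NT_{0})}<E$ through the Markov--fluid passage, since this functional is unbounded and not continuous in the KROV topology, so Theorem \ref{T11} does not deliver it for free. I would attack this by a direct Lyapunov-type computation on the NLMP itself: using the generator $A$ written in Section 3, estimate $\tfrac{d}{dt}\langle\exp\{\alpha L\}\rangle_{\nu^{N}(t)}$ and show that, for $\alpha$ small enough, the $L$-functional decreases on average at the same rate as in the fluid model plus an $O(1/N)$ correction coming from the stochastic fluctuations. Equivalently, one can truncate: split $\nu^{N}(0)=\nu_{K}^{N}+\nu_{K^{c}}^{N}$ where $\nu_{K}^{N}$ is supported on $\{L(\bar{x})\leq K\}$, note $\nu_{K^{c}}^{N}(1)\leq E e^{-\alpha K}$ by Markov's inequality, run the fluid approximation on $\nu_{K}^{N}$ (where $L$ is bounded and the preceding continuous estimates apply), and show that the $\nu_{K^{c}}^{N}$-contribution to the exponential moment at time $NT_{0}$ is controlled by $E\cdot e^{-\alpha K}\cdot e^{C\alpha\gamma_{A} T_{0}}$, which is small once $K$ is taken large. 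The underload inequalities \eqref{0013}--\eqref{0015} on $\mathcal{C}$ and the Lyapunov decrease established in Lemma \ref{l3} are exactly what makes $\alpha$ small enough workable.

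Given this one-period bound, the iteration is routine: define $t_{k}=kNT_{0}$; at each $t_{k}$ the state $\nu^{N}(t_{k})$ satisfies the hypotheses of Proposition \ref{P1} with constants $\varepsilon^{\prime},E$, so the argument of the previous two paragraphs applied to the time-shifted process gives \eqref{002} on $[t_{k},t_{k+1}]$, and the hypotheses at $t_{k+1}$ are restored (in fact improved). Union over $k$ yields \eqref{002} for all $t>0$. Finally, to conclude that $\nu^{N}(t)$ has no $t\to\infty$ limit, observe that the moving point $x_{\nu}(t)\in\mathcal{C}$ tracks (up to error $\varepsilon$) a cyclic trajectory of the fluid model of period $T_{\mathcal{C}}=2$ in fluid time, hence period $NT_{\mathcal{C}}$ in real time; since $\mathcal{C}$ is non-trivial (it passes through macroscopically distinct points such as $\bar{X}(A,N)$ and $\bar{X}(B,N)$), a sequence $t_{k}^{\prime}\to\infty$ along which $\nu^{N}(t_{k}^{\prime})$ lies near one atom and another sequence $t_{k}^{\prime\prime}\to\infty$ near the other atom are forced, as in the preliminary version of the main result.
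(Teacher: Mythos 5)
Your plan is correct and follows essentially the same route as the paper: induction over time periods, Propositions \ref{P1}--\ref{P2} plus the Euler-limit Theorem \ref{T11} (with its uniformity over initial measures) to reproduce the KROV-closeness to $\mathcal{C}$, and a direct Markov-level Lyapunov estimate on the exponential moment under near-cyclic, underloaded inflows to reproduce the moment bound. The generator/Lyapunov computation you flag as the main obstacle is exactly what the paper isolates as Lemma \ref{R2} (exponential moments of the workload contract under underloaded time-varying inflows, uniformly in the load $N$), which is the cleaner of your two suggested attacks; the truncation variant as you state it does not by itself yield a bound that can be iterated.
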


In words, we are proving that if we start the NLMP with high load $N$ per
server, from the state close to some atomic measure $\delta_{z}$ with $z$
belonging to the cycle, then it never goes to a limit.

For that, we need a general Lemma, which is formulated in the Euler scaling.
First, we recall the definitions. Let $\bar{\lambda}\left(  t\right)
=\left\{  \lambda_{i}\left(  t\right)  ,i=1,...,k\right\}  $ be the rates of
Poisson inflows of the customers of $k$ types, and $\gamma_{i},$ $i=1,...,k$
be their rates of service. The discipline of service will be irrelevant here;
we need only that the server is not idle if the queue is not empty. We call
the flow $\bar{\lambda}\left(  t\right)  $ to be underloaded, with parameters
$\left(  T,\delta\right)  ,$ if for any $t$%
\[
\sum_{i=1}^{k}\frac{1}{\gamma_{i}}\int_{t}^{t+T}\lambda_{i}\left(  t\right)
dt<\left(  1-\delta\right)  T.
\]
Below we are talking about the flow with load $N.$ That means that we consider
the situation when the input rates are given by $N\bar{\lambda}\left(
t\right)  =\left\{  N\lambda_{i}\left(  t\right)  ,i=1,...,k\right\}  ,$ while
the service rates equals to $N\gamma_{i}.$

\begin{lemma}
\label{R2} Consider the Non-Homogeneous Markov Process $\mu\left(  t\right)
$, started from the initial state $\mu\left(  0\right)  ,$ and suppose that
its generating rate function $\bar{\lambda}\left(  t\right)  $ is underloaded,
with parameters $\left(  T,\delta\right)  .$ Then there exist values
$\alpha>0$ and $A<\infty,$ depending only on the pair $\left(  T,\delta
\right)  ,$ such that if the exponential moment $\left\langle \exp\left\{
\alpha L\left(  \bar{x}\right)  \right\}  \right\rangle _{\mu\left(  0\right)
}$ of the initial state is finite, then for times $t>t\left(  \mu\left(
0\right)  \right)  $ and for any load $N$
\[
\left\langle \exp\left\{  \alpha L\left(  \bar{x}\right)  \right\}
\right\rangle _{\mu\left(  t\right)  }<A.
\]
Moreover, there exists the time $\mathcal{T}=\mathcal{T}\left(  T,\delta
\right)  ,$ such that for any initial state $\mu\left(  0\right)  $,
satisfying the estimate
\[
\left\langle \exp\left\{  \alpha L\left(  \bar{x}\right)  \right\}
\right\rangle _{\mu\left(  0\right)  }<3A,
\]
we have, for any load $N,$ that
\[
\left\langle \exp\left\{  \alpha L\left(  \bar{x}\right)  \right\}
\right\rangle _{\mu\left(  \mathcal{T}\right)  }<2A.
\]

\end{lemma}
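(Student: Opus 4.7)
The plan is to run a Foster--Lyapunov style geometric drift argument for the exponential moment of $L$, on time windows of length $T$, after a preliminary reduction to a single node. From (\ref{037}) the function $L(\bar{x})$ is the pointwise maximum of three linear functionals $\ell_1,\ell_2,\ell_3$ (one per node, each being the total workload at that node, i.e., $x_{AB}/\gamma_{AB}+x_B/\gamma_B$ etc.), so $e^{\alpha\max_j\ell_j(\bar{x})}\le\sum_{j=1}^3 e^{\alpha\ell_j(\bar{x})}$, and it suffices to control $\langle e^{\alpha\ell_j}\rangle_{\mu(t)}$ for each $j$ separately. The problem then reduces to bounding the exponential moment of the workload at a single work-conserving server fed by Poisson streams of rates $N\lambda_i(t)$ with service times $\mathrm{Exp}(N\gamma_i)$, under the hypothesis that the integrated nominal load $\int_t^{t+T}\sum_i\lambda_i(s)/\gamma_i\,ds$ stays below $(1-\delta)T$ for every $t$.

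The main technical step is the one-step drift. I use the pathwise identity $\ell(t+T)-\ell(t)=X(t,T)-Y(t,T)$, where $X(t,T)$ is the total service time brought in by customers arriving during $[t,t+T]$ (a compound Poisson variable) and $Y(t,T)\le T$ is the amount of work completed. The key geometric observation is that whenever $\ell(t)>T$ the server cannot go idle before $t+T$, so $Y(t,T)=T$. Combining this with the underload hypothesis $\mathbb{E}[X(t,T)]\le(1-\delta)T$ and the standard compound Poisson MGF calculation using $\mathbb{E}[e^{\alpha S}]=N\gamma_i/(N\gamma_i-\alpha)$ for an $\mathrm{Exp}(N\gamma_i)$ service time, one gets
\[
\mathbb{E}\bigl[e^{\alpha(X(t,T)-T)}\bigm|\mathcal{F}_t\bigr]\le\exp\bigl(-\alpha\delta T+C\alpha^2 T\bigr),\qquad \alpha\le\tfrac12\min_i\gamma_i,
\]
with $C$ uniform in $N\ge1$ because $\gamma_i-\alpha/N\ge\gamma_i-\alpha$ makes $N=1$ the tightest case. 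Choosing $\alpha$ small enough that $C\alpha<\delta/2$ gives the strict contraction factor $c:=e^{-\alpha\delta T/2}<1$ on $\{\ell(t)>T\}$; on the complement $\{\ell(t)\le T\}$ the increment is controlled by $X(t,T)$ alone and contributes only a bounded additive term $K$. Together this yields the one-step drift inequality
\[
\mathbb{E}\bigl[e^{\alpha\ell(t+T)}\bigm|\bar{x}(t)=\bar{x}\bigr]\le c\,e^{\alpha\ell(\bar{x})}+K,
\]
with $c<1$ and $K<\infty$ depending only on $(T,\delta,\bar{\gamma})$ and uniform in $N$.

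Iterating this drift $n$ times gives $\mathbb{E}[e^{\alpha\ell(nT)}]\le c^n\mathbb{E}[e^{\alpha\ell(0)}]+K/(1-c)$; summing over the three nodes and setting $A:=6K/(1-c)$ leaves room both for the factor of three from the reduction and for the gap between $A$ and $2A$ in the second assertion. Part (1) of the lemma is then immediate: once $n$ is so large that $c^n\langle e^{\alpha L}\rangle_{\mu(0)}<A/2$, we have $\langle e^{\alpha L}\rangle_{\mu(nT)}<A$. For part (2), pick $\mathcal{T}$ to be the smallest multiple of $T$ with $c^{\mathcal{T}/T}\cdot 3A<A/2$; this $\mathcal{T}$ depends only on $(T,\delta)$, and the iterated inequality gives $\langle e^{\alpha L}\rangle_{\mu(\mathcal{T})}\le c^{\mathcal{T}/T}\cdot 3A+3K/(1-c)<A/2+A/2=A<2A$.

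I expect the main obstacle to be the uniformity of $\alpha$, $c$, $K$ in the load parameter $N$. The favorable point is that the Euler scaling preserves the mean of $X(t,T)$ while shrinking its fluctuations like $N^{-1/2}$, so the MGF estimate only tightens as $N$ grows, and the $N=1$ regime is the hardest, handled directly by the explicit MGF of the exponential service times. A secondary subtlety is that the underload hypothesis is integrated over windows of length $T$ rather than pointwise in $t$, which is precisely why the drift step must be carried out at the scale $T$ and not infinitesimally.
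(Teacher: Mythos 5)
Your overall architecture --- reduce to a single work-conserving node, prove a geometric drift for the exponential moment over windows of length $T$, and iterate --- is sound, and it is a legitimate discrete-time alternative to the paper's route, which instead derives a differential inequality for the exponential moment $Q_{\alpha}(t)$ of the \emph{remaining work} and solves it by the variation-of-constants formula (and which works for general service laws with finite exponential moment, whereas your MGF computation uses the exponential services, which suffices here). Your window-of-length-$T$ drift also handles the window-integrated underload hypothesis naturally, which is a nice feature.

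However, as written there is a genuine gap: you conflate the functional $L$ of the lemma, which is the \emph{nominal} workload computed from the scaled queue counts ($x_{AB}/\gamma_{AB}+x_{B}/\gamma_{B}$, etc.), with the \emph{actual} remaining service time $w(t)$ at the node. The two pathwise facts your drift step rests on --- the identity $\ell(t+T)-\ell(t)=X(t,T)-Y(t,T)$ with $X$ the total service time brought by new arrivals, and the claim that $\ell(t)>T$ forces the server to stay busy on $[t,t+T]$ so that $Y(t,T)=T$ --- are true for $w(t)$ (which decreases at rate at most $1$), but false for the count-based $\ell$: an arrival of class $i$ changes $\ell$ by the deterministic amount $1/(N\gamma_{i})$, not by its actual service time; the count-based work completed during a fully busy window of length $T$ is random, not exactly $T$; and a state with $\ell(t)>T$ can empty before $t+T$ with positive probability, since the true service requirements are random. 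Hence your conditional drift inequality, stated with $e^{\alpha\ell(\bar{x})}$ on the right for the count state $\bar{x}(t)=\bar{x}$, does not follow from the computation you give. The repair is exactly the bridge the paper supplies at the end of its proof: run the drift on the true workload $w(t)$ (there your compound-Poisson MGF bound and its uniformity in $N$ are correct), then transfer back to counts by convexity, since conditionally on the counts the mean remaining work equals the nominal workload, so $e^{\alpha L}\leq\mathbb{E}\left(  e^{\alpha w}\mid\text{counts}\right)$; and convert the hypothesis on the initial count moment into a bound on the initial workload moment via the explicit exponential MGF, at the cost of a slightly adjusted $\alpha$, uniformly in $N\geq1$. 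With that bridge added --- plus a one-line extension of your bound from the times $nT$ to intermediate times --- your window-drift proof goes through.
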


\begin{proof}
We consider Poisson inflow with a general distribution $\eta$ of the service
time, having finite exponential moment. In particular, exponential service
time fits.

The users arrive to the node according to the Poisson processes with rates
$\lambda_{i}(t)$. Their service times are i.i.d.~with distribution functions
$\eta_{i}\left(  h\right)  $. Let $E_{i}=\mathbb{E}\left(  \eta_{i}\right)  .$
We study the dynamics of the remaining service time, hence, the service
discipline is of no importance. The regime we are interested in is the
underloaded regime; that means that for some $\delta>0$, all $T$ large enough
and all $t\geq0$
\[
\int_{t}^{t+T}\left(  \sum_{i}\lambda_{i}(s)E_{i}\right)  ds\leq T\left(
1-\delta\right)  .
\]

Let $d\mu_{t}(u)$ be the current distribution of the remaining service time.
We want to study the exponential moment
\begin{equation}
Q_{\alpha}(t)=\int_{0}^{\infty}q_{\alpha}(u)d\mu_{t}(u), \label{E11}%
\end{equation}
where $q_{\alpha}(u)=e^{\alpha u}$. We will show that the moment $Q_{\alpha
}(t)$ satisfies the equation
\[
Q_{\alpha}(t)\leq e^{C_{1}-\beta t}Q_{\alpha}(0)+C_{2},\qquad t\geq0.
\]
The\textbf{ }statistics of the observable $L$ will then be easy to derive.

Note that the underload condition ensures the absolute continuity of\textbf{
}$Q_{\alpha}(t)$. \textbf{ }We will need the quantities
\[
\Phi_{i}^{\alpha}=\int_{0}^{\infty}\left(  e^{\alpha h}-1\right)  d\eta
_{i}(h).
\]
We assume that $\Phi_{i}^{\alpha}<+\infty$ for all $\alpha\leq\overline
{\alpha}$ with $\overline{\alpha}>0$.

Before studying the moments $\left(  \ref{E11}\right)  ,$ we will consider the
situation of the \textquotedblleft broken\textquotedblright\ server, when the
clients (of one type) only come, but are not served. The queue then only grows
in time, as is the workload $u.$ The corresponding exponential moment will be
denoted by $Q_{\alpha}^{\left(  1\right)  }(t)$. We have:
\begin{equation}
\dot{Q}_{\alpha}^{\left(  1\right)  }(t)=\lambda(t)\Phi^{\alpha}Q_{\alpha
}^{\left(  1\right)  }(t). \label{E1}%
\end{equation}
Indeed, the event of arrival of a user with service time $h$ at the queue with
the current workload $u$ shifts the workload to the value $u+h,$ so the value
of $q_{\alpha}$ changes from $e^{\alpha u}$ to $e^{\alpha(u+h)}=e^{\alpha
u}+e^{\alpha u}(e^{\alpha h}-1)$. In order to find $\dot{Q}_{\alpha}^{\left(
1\right)  }(t)$, we have to multiply the increment $e^{\alpha u}(e^{\alpha
h}-1)$ by the rate $\lambda(t)$ of the arrival event and to integrate it with
respect to $d\mu_{t}(u)\times d\eta(h)$, since $u$ and $h$ are independent. In
this way we arrive to (\ref{E1}).

Next, let us study the case of \textquotedblleft broken pipe\textquotedblright%
, when the inflow is zero, so the server works only on the initial supply of
clients. The evolution of the distribution $\mu_{t}$ of the workload is given
by the following simple relation:%
\[
\mu_{t+s}\left[  a,b\right]  =\left\{
\begin{array}
[c]{cc}%
\mu_{t}\left[  a+s,b+s\right]  & \text{ if }0<a<b,\\
\mu_{t}(-\infty,b+s] & \text{ if }a\leq0<b.
\end{array}
\right.
\]
In words, the atom at $u=0$ grows with time. We denote the corresponding
exponential moment by $Q_{\alpha}^{\left(  2\right)  }(t).$ The
straightforward computation shows that
\begin{equation}
\dot{Q}_{\alpha}^{\left(  2\right)  }(t)=\alpha p_{0}(t)-\alpha Q_{\alpha
}^{\left(  2\right)  }(t), \label{E2}%
\end{equation}
where $p_{0}(t)$ is the current probability of the queue to be empty. Note
that $Q_{\alpha}^{\left(  2\right)  }(t)$ is absolutely continuous and
(\ref{E2}) holds for almost all $t$.

In the general case of several inflows we put the two relations together to
get
\begin{equation}
\dot{Q}_{\alpha}(t)=\left[  \left(  \sum_{i}\lambda_{i}(t)\Phi_{i}^{\alpha
}\right)  -\alpha\right]  Q_{\alpha}(t)+\alpha p_{0}(t). \label{E3}%
\end{equation}

Let us now rewrite $\Phi_{i}^{\alpha}$. We have%
\[
\Phi_{i}^{\alpha}=\int_{0}^{\infty}\left(  e^{\alpha h}-1\right)  d\eta
_{i}(h)=
\]%
\begin{equation}
\int_{0}^{\infty}\alpha h~d\eta_{i}(h)+\int_{0}^{\infty}\left[  e^{\alpha
h}-1-\alpha h\right]  d\eta_{i}(h)\equiv\alpha E_{i}+\alpha F_{i}(\alpha),
\label{E4}%
\end{equation}
where $E_{i}$ is the mean service time and $F_{i}(\alpha)$ is continuous
function of $\alpha\in\lbrack0,\overline{\alpha}],$ which satisfies
$F_{i}(\alpha)=O\left(  \alpha\right)  $ as $\alpha\rightarrow0$. From
(\ref{E3}) and (\ref{E4}) we get for $\alpha<1$ the bound
\begin{equation}
\dot{Q}_{\alpha}(t)\leq\alpha\left[  \left(  \sum_{i}\lambda_{i}%
(t)E_{i}\right)  -1+\sum_{i}\lambda_{i}(t)F_{i}(\alpha)\right]  Q_{\alpha
}(t)+1. \label{E5}%
\end{equation}

The Euler scaling with parameter $N$ changes $\lambda(t)$ to $\lambda
^{N}(t)=N\lambda(t)$ and $\eta(h)$ to $\eta^{N}(h)=N\eta(Nh)$. Hence,
$\lambda_{i}^{N}(t)E_{i}^{N}$ does not depend on $N.$ Let us show that
$NF_{i}^{N}(\alpha)$ is small for all $N,$ once $\alpha$ is small. Indeed,
\begin{align*}
N\alpha F^{N}(\alpha)  &  =N^{2}\int_{0}^{\infty}\left[  e^{\alpha h}-1-\alpha
h\right]  d\eta(Nh)\\
&  =N^{2}\int_{0}^{\infty}\left[  e^{\frac{\alpha}{N}Nh}-1-\frac{\alpha}%
{N}Nh\right]  d\eta(Nh)\\
&  =N^{2}\frac{\alpha}{N}F(\frac{\alpha}{N})\sim\alpha^{2}%
\end{align*}

Hence, we get a uniform bound for $\alpha$ small enough and all $N\geq1$
simultaneously (and, by the limit, for the fluid model \textquotedblleft%
$N=\infty$\textquotedblright\ as well):
\begin{equation}
\dot{Q}_{\alpha}^{N}(t)\leq\alpha\left[  \sum_{i}\lambda_{i}(t)\left(
E_{i}+\varkappa\left(  \alpha\right)  \right)  -1\right]  Q_{\alpha}^{N}(t)+1,
\label{E6}%
\end{equation}
where $\varkappa\left(  \alpha\right)  \sim\alpha$.

The solution to the linear equation
\[
\dot{x}(t)=a(t)x(t)+b\left(  t\right)
\]
is given by the formula
\[
x(t)=g(0,t)x(0)+\int_{0}^{t}g(s,t)b\left(  s\right)  ds,
\]
where $g(s,t)=e^{\int_{s}^{t}a(\tau)d\tau}$. We apply it to (\ref{E6}), with
$x\left(  t\right)  =Q_{\alpha}\left(  t\right)  $, $a(t)=\alpha\left[
\sum_{i}\lambda_{i}(t)\left(  E_{i}+\varkappa\left(  \alpha\right)  \right)
-1\right]  $ and $b(t)=1$. By the underload assumption,
\[
\int_{s}^{t}a(\tau)d\tau\leq C_{1}-\beta(t-s)
\]
for some $C_{1},\beta>0$ and for all $s<t,$ once $\alpha$ is small. Then,
$\int_{0}^{t}g(s,t)ds\leq C_{2}$ for all $t\geq0$.

Hence,
\[
Q_{\alpha}(t)\leq e^{C_{1}-\beta t}Q_{\alpha}(0)+C_{2},\qquad t\geq0.
\]

In our case the distribution $\eta_{i}$ is exponential with the parameter
$\gamma_{i}.$ Let us show finally that the exponential bound on the workload
implies an exponential bound on the number of customers (may be, with another exponent).

Indeed, under the condition that we are in the state with $n_{1}$ and $n_{2}$
customers of two classes, the conditional distribution of the workload $u$ is
a measure $\mu_{n_{1}n_{2}}$ on ${}\left(  \mathbb{R}^{1}\right)  ^{+},$ with
mean value $\bar{u}=\frac{n_{1}}{\gamma_{1}}+\frac{n_{2}}{\gamma_{2}}$. By
convexity of the exponent,
\[
\int e^{\alpha u}d\mu_{n_{1}n_{2}}(u)\geq e^{\alpha\bar{u}},
\]
which provides us with the upper bound%
\[
e^{\frac{\alpha}{\gamma_{1}+\gamma_{2}}(n_{1}+n_{2})}\leq\int e^{\alpha u}%
d\mu_{n_{1}n_{2}}(u).
\]
Taking expectations with respect to $n_{1}\left(  t\right)  ,n_{2}\left(
t\right)  $ we get
\[
\mathbb{E}\left(  e^{\frac{\alpha}{\gamma_{1}+\gamma_{2}}(n_{1}(t)+n_{2}%
(t))}\right)  \leq\int e^{\alpha u}d\mu_{t}(u)=Q_{\alpha}(t),
\]
which is the desired estimate.

\textbf{Proof of the Main Theorem. }The proof proceeds by \textquotedblleft
induction\textquotedblright\ in time. We suppose inductively that at a certain
(Euler) moment $T$ the NLMP with the load $N$ is in the state $\mu\left(
T\right)  ,$ having two properties:%
\begin{equation}
\left\langle \exp\left\{  \alpha L\left(  \bar{x}\right)  \right\}
\right\rangle _{\mu\left(  T\right)  }<3A, \label{042}%
\end{equation}%
\begin{equation}
\rho_{KROV}\left(  \mu\left(  T\right)  ,\delta_{x}\right)  <\varepsilon
\label{043}%
\end{equation}
for some $x\in\mathcal{C}$. We will show that there exists the time
$T^{\prime},$ at which the same two conditions hold for the measure
$\mu\left(  T+T^{\prime}\right)  $ -- except for different point $x$ on the
cycle $\mathcal{C}.$

To see this we first consider the Non-Linear Dynamical System (NLDS)
$\Delta_{\infty}$, with initial state $\mu\left(  T\right)  .$ In other words,
$\Delta_{\infty}=\Delta_{\infty}\left(  \bar{Y}\left(  \cdot\right)  \right)
,$ for some $\bar{Y}\left(  \cdot\right)  \in\mathcal{Y}\left(  \mu\left(
T\right)  \right)  .$ We can use the Proposition \ref{P1}, which tells us that
for any $T^{\prime}$ large enough $\rho_{KROV}\left(  \Delta_{\infty
}^{T^{\prime}}\mu\left(  T\right)  ,\delta_{x\left(  T^{\prime}\right)
}\right)  <\varepsilon/3,$ Choosing one such $T^{\prime}$ (uniformly in
$\mu\left(  T\right)  ,$ satisfying $\left(  \ref{042}\right)  -\left(
\ref{043}\right)  $ !) we can claim that for the NLMP evolution we have
$\rho_{KROV}\left(  \mu\left(  T+T^{\prime}\right)  ,\delta_{x\left(
T^{\prime}\right)  }\right)  <2\varepsilon/3,$ provided only that $N$ is large
enough; indeed, we know from Theorem \ref{T11} that the NLMP converges in the
KROV metric to NLDS on any finite time interval, as $N\rightarrow\infty,$
which convergence is uniform over the set of initial measures satisfying
$\left(  \ref{042}\right)  .$ Note that we thus have reproduced the condition
$\left(  \ref{043}\right)  .$

Proposition \ref{P2} tells us that for all $t\leq T^{\prime}$ $\rho
_{KROV}\left(  \Delta_{\infty}^{t}\mu\left(  T\right)  ,\delta_{x\left(
t\right)  }\right)  <\bar{\varepsilon}\left(  T^{\prime},\varepsilon\right)
.$ Due to the same convergence statement, for the NLMP$\ $evolution we have
$\rho_{KROV}\left(  \mu\left(  T+t\right)  ,\delta_{x\left(  t\right)
}\right)  <2\bar{\varepsilon}\left(  T^{\prime},\varepsilon\right)  $ for all
$t\leq T^{\prime}.$ In words, the measure $\mu\left(  T+t\right)  $ goes very
close to the cycle trajectory. Since we want to use the Lemma \ref{R2}, we can
as well assume that $T^{\prime}>\mathcal{T},$ where $\mathcal{T}$ is the time
introduced in this Lemma. Now all its conditions are satisfied, so Lemma
\ref{R2} tells us that $\left\langle \exp\left\{  \alpha L\left(  \bar
{x}\right)  \right\}  \right\rangle _{\mu\left(  T+T^{\prime}\right)  }<2A,$
thus the condition $\left(  \ref{042}\right)  $ is reproduced as well.
\end{proof}

\section{Conclusions}

Our main result indicates that there is an important similarity between large
queuing networks and large systems of statistical mechanics. Namely, we have
shown that the load per server plays for some networks the same role as the
inverse temperature in statistical mechanics. At high load the network can
lose the property of uniqueness of the stationary state and start to behave in
the oscillatory manner. This phenomenon looks similar to the fact that some 3D
systems with continuous symmetry are not ergodic under Glauber dynamics, when
the temperature is low enough.

It is very interesting to understand how general this phenomenon is; our
expectations are that such non-ergodic behavior is a characteristic feature of
the high load regime.

In the forthcoming publications we will show that the behavior in the low load
regime is always ergodic, which corresponds to the high temperature uniqueness
of statistical mechanics.

\textbf{Acknowledgment. }\textit{We would like to thank our colleagues -- in
particular, J. Chayes, F. Kelly, M. Biskup, O. Ogievetsky, G. Olshansky, Yu.
Peres, S. Pirogov, -- for valuable discussions and remarks, concerning the
topic of this paper. A. Rybko would like to acknowledge the financial support
and hospitality of CPT, Luminy, Marseille, where part of the work was done.}

\end{document}